\newcommand{\titel}{Contractions, Removals and How to Certify 3-Connectivity in Linear Time} 
\definecolor{hellblau}{rgb}{0.2,0.4,1} 
\definecolor{dunkelblau}{rgb}{0,0,0.8}
\definecolor{dunkelgruen}{rgb}{0,0.5,0}
\newcommand{\BG}{BG\xspace}
\newcommand{\Children}{\emph{Children}_{12}\xspace}
\newcommand{\Type}{\emph{Type}_3\xspace}
\theoremstyle{plain} 
	\newtheorem{satz}{Satz}[] 
	\newtheorem{theorem}[satz]{Theorem}
	\newtheorem{lemma}[satz]{Lemma}
	\newtheorem{proposition}[satz]{Proposition}
\theoremstyle{remark} 
\theoremstyle{definition} 
	\newtheorem{definition}[satz]{Definition}
	\newtheorem{corollary}[satz]{Corollary}
\begin{document}
	\title{\titel}
	\author{
					Jens M. Schmidt%
		\thanks{This research was supported by the Deutsche Forschungsgemeinschaft within the research training group ``Methods for Discrete Structures'' (GRK 1408). Email: \href{mailto:jens.schmidt@inf.fu-berlin.de}{jens.schmidt@inf.fu-berlin.de}.}\\
	Freie Universit\"at Berlin, Germany}
	\date{}
\maketitle

\begin{abstract}

It is well-known as an existence result that every $3$-connected graph $G=(V,E)$ on more than $4$ vertices admits a sequence of contractions and a sequence of removal operations to $K_4$ such that every intermediate graph is $3$-connected. We show that both sequences can be computed in optimal time, improving the previously best known running times of $O(|V|^2)$ to $O(|V|+|E|)$. This settles also the open question of finding a linear time $3$-connectivity test that is certifying and extends to a certifying $3$-edge-connectivity test in the same time. The certificates used are easy to verify in time $O(|E|)$.
\end{abstract}




\section{Introduction}\label{introduction}
The class of $3$-connected graphs has been studied intensively for many reasons in the past $50$ years. One algorithmic reason is that graph problems can often be reduced to handle $3$-connected graphs; applications include problems in graph drawing (see~\cite{Mutzel2003} for a survey), problems related to planarity~\cite{Bertolazzi1998,Gutwenger2001a} and online problems on planar graphs (see~\cite{Battista1990} for a survey). From a complexity point of view, $3$-connectivity is in particular important for problems dealing with longest paths, because it lies, somewhat surprisingly, on the borderline of NP-hardness: Finding a Hamiltonian cycle is NP-hard for $3$-connected planar graphs~\cite{Garey1976} but becomes solvable in linear running time~\cite{Chiba1989} for higher connectivities, as $4$-connected planar graphs have been proven to be Hamiltonian~\cite{Tutte1956}.


We want to design efficient algorithms from inductively defined constructions of graph classes. In general, such constructions start with a set of base graphs and apply iteratively operations from a fixed set of operations such that precisely the members of the graph class of interest are constructed. This way we obtain a (not necessarily unique) sequence of graphs for each member $G$ of the graph class, which we call a \emph{construction sequence} of $G$. The construction does not only give a computational approach to test membership in these classes, it can also be exploited to prove properties of the graph class using only the fixed set of operations applied in every step. Fortunately, graph theory provides inductively defined constructions for many graph classes, including planar graphs, triangulations, $k$-connected graphs for $k \leq 4$, regular graphs and various intersections of these classes~\cite{Batagelj1986,Batagelj1989,Johnson1963}. However, most of these constructions have not been exploited computationally.

For the class of $3$-connected graphs, one of the most noted constructions is due to Tutte~\cite{Tutte1961}, based on the following fact: Every $3$-connected graph $G$ on more than $4$ vertices contains a \emph{contractible} edge, i.\,e., an edge that preserves the graph to be $3$-connected upon contraction. Contracting iteratively this edge yields a sequence of $3$-connected graphs top-down from $G$ to a $K_4$-multigraph. Unfortunately, also non-$3$-connected graphs can contain contractible edges, but adding a side condition establishes a full characterization: A graph $G$ on more than $4$ vertices is $3$-connected if and only if there is a \emph{sequence of contractions} from $G$ to a $K_4$-multigraph on edges $e$ with both end vertices having at least 3 neighbors~\cite{Elmasry}. Every contracted edge in this sequence is then contractible. It is also possible to describe this sequence bottom-up from $K_4$ to $G$ by using the inverse operations edge addition and vertex splitting; in fact this is the original form as stated in Tutte's famous wheel theorem~\cite{Tutte1961}.


Barnette and Gr\"unbaum~\cite{Barnette1969} and Tutte~\cite{Tutte1966} give a different construction of $3$-connected graphs that is based on the following argument: Every $3$-connected graph $G \neq K_4$ contains a \emph{removable} edge. \emph{Removing} this edge leads, similar as in the sequence of contractions, to a top-down construction sequence from $G$ to $K_4$. Adding a side condition then fully characterizes $3$-connected graphs. We will define \emph{removals} and \emph{removable} edges in Section~\ref{constructionSequences}. Again, the original proposed construction was given bottom-up from $K_4$ to $G$, using three operations.

Although both existence theorems on contractible and removable edges are used frequently in graph theory~\cite{Thomassen1981,Thomassen2006,Tutte1966}, the first non-trivial computational results to create the corresponding construction sequences were published more than $45$ years afterwards: In $2006$, Albroscheit~\cite{Albroscheit2006} gave an algorithm that computes a construction sequence for $3$-connected graphs in $O(|V|^2)$ time in which contractions and removals are allowed to intermix. In $2010$, an algorithm was given~\cite{Schmidt2010} that constructs the (pure) sequences of contractions and removals, respectively, in the same running time. One of the building blocks of this algorithm is a straight-forward transformation from the sequence of removals to the sequence of contractions in time $O(|E|)$. This shows that the sequence of Barnette and Gr\"unbaum is algorithmically at least as powerful as the sequence of contractions. It is important to note that all algorithms do not rely on the $3$-connectivity test of Hopcroft and Tarjan~\cite{Hopcroft1973}, which runs in linear time but is rather involved. It was also shown that all previously mentioned construction sequences can be stored in linear space $O(n+m)$~\cite{Schmidt2010}. Nevertheless, we are not aware of any algorithm that computes any of these sequences in subquadratic time up to now.

The main contribution of this paper is an optimal algorithm that computes the construction sequence of Barnette and Gr\"unbaum bottom-up in time and space $O(|V|+|E|)$. This has a number of consequences.

\paragraph{Top-down and bottom-up variants of both constructions.}
One can immediately obtain the sequence of removals out of Barnette and Gr\"unbaum's construction sequence by replacing every operation with its inverse removal operation. Applying the transformation of~\cite{Schmidt2010} implies optimal time and space algorithms for the sequence of contractions and its bottom-up variant as well.

\paragraph{Certifying 3-connectivity in linear time.}
Blum and Kannan~\cite{Blum1989} initiated the concept of programs that check their work. Mehlhorn and N\"aher~\cite{Kratsch2006,Mehlhorn1998,Mehlhorn1999a} developed this idea further and introduced the concept of \emph{certifying algorithms}, which give a small and easy-to-verify certificate of correctness along with their output.
Achieving such algorithms is a major goal for problems where the fastest solutions known are complicated and difficult to implement. Testing a graph on $3$-connectivity is such a problem, but surprisingly few work has been devoted to certify $3$-connectivity, although sophisticated linear-time recognition algorithms (not giving an easy-to-verify certificate) are known for over $35$ years~\cite{Hopcroft1973,Vo1983,Vo1983a}. The currently fastest algorithms that certify $3$-connectivity need $O(|V|^2)$ time and use construction sequences as certificates~\cite{Albroscheit2006,Schmidt2010}. Recently, a linear time certifying algorithm for $3$-connectivity has been given for the subclass of Hamiltonian graphs, when the Hamiltonian cycle is part of the input~\cite{Elmasry}. In general, finding a certifying algorithm for $3$-connectivity in subquadratic time is an open problem~\cite{Elmasry}.

We give a linear-time certifying algorithm for $3$-connectivity by using Barnette and Gr\"unbaum's construction sequence as certificate. The certificate can be easily verified in time $O(|E|)$, as shown in~\cite{Schmidt2010}. This implies also a new, simple-to-implement and certifying test on $3$-connectivity in linear time and space that is path-based and neither relies on the algorithm of Hopcroft and Tarjan nor uses low-points.

\paragraph{Certifying 3-edge-connectivity in linear time.}
We are not aware of any test for $3$-edge-connectivity that is certifying and runs in linear time. Galil and Italiano~\cite{Galil1991} show that testing $k$-edge-connectivity of a graph $G$ can be reduced to test $k$-vertex-connectivity on a slightly modified graph $G'$, blowing up the number of vertices and edges only by a factor $O(k)$. For $k=3$, the reduction blows up each vertex $v \in V(G)$ to a wheel graph with as many spokes as $v$ has neighbors. Constructing $G'$ and applying the certifying $3$-vertex-connectivity test to $G'$ yields a certifying $3$-edge-connectivity test in linear time and space. However, we have to augment the certificate by the mapping $\phi$ that maps each vertex of $G$ to the vertices and edges being contained in the corresponding wheel graph in $G'$. This ensures that the construction of $G'$ can be verified while preserving linear time and space.
\newline


\section{Construction Sequences}\label{constructionSequences}
Let $G=(V,E)$ be a finite graph with $n$ vertices and $m$ edges.
For $k \geq 1$, a graph $G$ is \emph{$k$-connected} if $n > k$ and deleting every set of $k-1$ vertices leaves a connected graph. A vertex (a pair of vertices) that leaves a disconnected graph upon deletion is called a \emph{cut vertex} (a \emph{separation pair}).
Let $v \rightarrow_G w$ denote a path $P$ from vertex $v$ to vertex $w$ in $G$ and let $s(P):=v$ and $t(P):=w$. For a vertex $v$ in $G$, let $N(v) = \{w \mid vw \in E\}$ denote its set of neighbors and $deg(v)$ its degree. Let $\delta(G)$ be the minimum degree in $G$.

Let $K_n$ be the complete graph on $n$ vertices and let $K_n^m$ be the complete graph on $n$ vertices with $m$ edges between each pair of vertices. For a rooted tree $T$ and $x \in V(T)$, let $T(x)$ be the maximal subtree of $T$ rooted at $x$. We assume for convenience that the input graph $G$ is simple for the rest of the paper, although all results extend to multigraphs.
%
A \emph{subdivision} of a graph $G$ replaces each edge of $G$ by a path of length at least one. Conversely, we want a notation to get back to the graph without subdivided edges. If $deg(v)=2$ and $|N(v) \setminus \{v\}|=2$, let \emph{smoothing} $v$ delete $v$ followed by adding an edge between its neighbors. If one of the conditions is violated, let \emph{smoothing} $v$ not change the graph.

\emph{Removing} an edge $e=xy$ of a graph deletes $e$ followed by smoothing $x$ and $y$. An edge of $G$ is called \emph{removable}, if removing it results in a $3$-connected graph. Iteratively removing removable edges in a $3$-connected graph $G$ leads to a \emph{sequence of removals} from $G$ to $K_4$, the existence of which characterizes $3$-connected graphs when adding a side condition similar as in the sequence of contractions. We describe the equivalent bottom-up construction of $G$ due to Barnette and Gr\"unbaum. The construction starts with $K_4$ and applies iteratively one of the three following operations, which are called \emph{\BG-operations} (see Figure~\ref{fig:BGOperations}):

1. Add an edge $xy$ (possibly a parallel edge).

2. Subdivide an edge $ab$ by a vertex $x$ and add the edge $xy$ for a vertex $y \notin \{a,b\}$.

3. Subdivide two non-parallel edges by vertices $x$ and $y$, respectively, and add the edge $xy$.

\vspace{11pt}

\begin{figure}[htb]
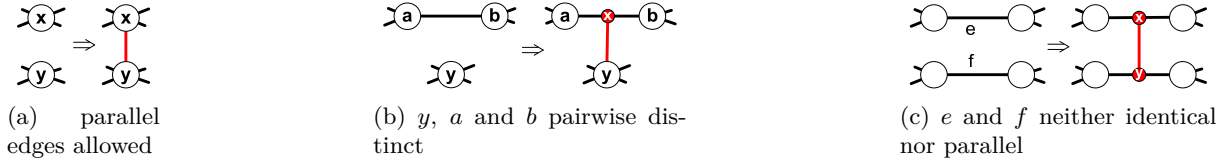

	\centering
	\subfigure[parallel edges allowed]{
		\includegraphics[scale=0.5]{./figures/EdgeAddition1}
		\label{fig:BG1}
	}
	\hfill
	\subfigure[$y$, $a$ and $b$ pairwise distinct]{
		\includegraphics[scale=0.5]{./figures/EdgeAddition2}
		\label{fig:BG2}
	}
	\hfill
	\subfigure[$e$ and $f$ neither identical nor parallel]{
		\includegraphics[scale=0.5]{./figures/EdgeAddition3}
		\label{fig:BG3}
	}
	\caption{The three \BG-operations.}
	\label{fig:BGOperations}
\end{figure}

Let $G_4,G_5,\ldots,G_z$ with $G_4 = K_4$ and $G_z = G$ be a construction sequence of $G$ using \BG-operations. As $K_4$ is $3$-connected and \BG-operations preserve $3$-connectivity, every $G_l$ with $4 < l \leq z$ is also $3$-connected. We represent the construction sequence in a different, but equivalent way, as shown in~\cite{Barnette1969,Schmidt2010}: Each graph $G_l$ corresponds to a unique $G_l$-subdivision $S_l$ in $G$, which can be readily seen by iteratively deleting the removable edges in the top-down variant.
In this representation, the vertices of $G_l$ correspond to the vertices in $S_l$ of degree at least $3$; we call the latter \emph{real} vertices.
We define the operations on $S_l$ that correspond to \BG-operations. Let $V_{real}(S_l)$ be the set of real vertices in $S_l$. The \emph{links} of $S_l$ are the unique paths in $S_l$ that contain real end vertices but no other real vertex. Let two links be \emph{parallel} if they share the same end vertices.

\begin{definition}\label{bgpathdefinition}
A \emph{\BG-path} for a subgraph $S_l \subset G$ is a path $P = x \rightarrow_G y$ with the properties:
\begin{enumerate}
	\item $S_l \cap P = \{x,y\}$\label{bgpathdefinition1}
	\item Every link of $S_l$ that contains $x$ and $y$, contains them as end vertices.\label{bgpathdefinition2}
	\item If $x$ and $y$ are inner vertices of links $L_x$ and $L_y$ of $S_l$, respectively, and $|V_{real}(S_l)| \geq 4$, then $L_x$ and $L_y$ are not parallel.\label{bgpathdefinition3}
\end{enumerate}
\end{definition}

It is easy to see that every \BG-path for $S_l$ corresponds to a \BG-operation on $G_l$ and vice versa. The choice of the $K_4$-subdivision $S_4$ is not crucial~\cite{Schmidt2010}: At the expense of having additional parallel edges in intermediate graphs $G_l$, there exists a construction sequence to $G$ from \emph{each} prescribed $K_4$-subdivision in $G$. This provides an efficient computational approach to construction sequences, since we can start with an arbitrary $K_4$-subdivision $S_4$ in $G$. The representation with subdivisions allows then to search the next \BG-path in the neighborhood of the current subdivision in $G$. We summarize the results.


\begin{theorem}\label{characterizations}
The following statements are equivalent:
\begin{align}
	&\ \text{A simple graph $G$ is $3$-connected}\label{triconnected}\\
	\Leftrightarrow &\ \exists \text{ sequence of \BG-operations from $K_4$ to $G$ (see~\cite{Barnette1969,Tutte1966})}\label{BGConstruction}\\
	\Leftrightarrow &\ \exists \text{ sequence of \BG-paths from a $K_4$-subdivision in $G$ to $G$ and $\delta(G) \geq 3$ (see~\cite{Barnette1969,Schmidt2010})}\label{BGPathConstructionSingle}\\
	\Leftrightarrow &\ \exists \text{ sequence of \BG-paths from each $K_4$-subdivision in $G$ to $G$ and $\delta(G) \geq 3$ (see~\cite{Schmidt2010})}\label{BGPathConstruction}\\
	\Leftrightarrow &\ \exists \text{ sequence of removals from $G$ to $K_4$ on removable edges $e=xy$}\notag\\
	&\ \text{with $|N(x)| \geq 3$, $|N(y)| \geq 3$ and $|N(x) \cup N(y)| \geq 5$ (see~\cite{Schmidt2010})}\label{removals}\\
	\Leftrightarrow &\ \exists \text{ sequence of contractions from $G$ to $K_4$ on edges $e=xy$ with $|N(x)| \geq 3$}\notag\\
	&\ \text{ and $|N(y)| \geq 3$ (see~\cite{Tutte1961})}\label{contractions}
\end{align}
\end{theorem}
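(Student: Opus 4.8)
The plan is to prove Theorem~\ref{characterizations} not by establishing each of the five equivalences independently, but by setting up a cycle of implications together with two transformation lemmas, relying as much as possible on the results already cited in the excerpt. Concretely, I would first observe that $\eqref{BGConstruction} \Leftrightarrow \eqref{BGPathConstructionSingle} \Leftrightarrow \eqref{BGPathConstruction}$ is already handled by the representation discussion preceding the theorem: a \BG-path for $S_l$ corresponds bijectively to a \BG-operation on $G_l$ (stated as ``easy to see'' after Definition~\ref{bgpathdefinition}), and the fact that one may start from \emph{any} prescribed $K_4$-subdivision is exactly the cited result of~\cite{Schmidt2010}; the side condition $\delta(G)\geq 3$ is forced because every graph produced by \BG-operations has minimum degree $\geq 3$ and, conversely, it is needed to guarantee that the starting $K_4$-subdivision exists and that links behave well. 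So the real content is the triangle $\eqref{triconnected} \Leftrightarrow \eqref{BGConstruction}$ and the two chains down to \eqref{removals} and \eqref{contractions}.

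For $\eqref{triconnected} \Rightarrow \eqref{BGConstruction}$ I would invoke the Barnette--Gr\"unbaum / Tutte theorem in its top-down form: every $3$-connected $G \neq K_4$ has a removable edge, so removing removable edges repeatedly yields a sequence of $3$-connected graphs terminating at $K_4$; reversing this sequence and replacing each removal by its inverse gives a sequence of \BG-operations, which establishes the implication. For the converse $\eqref{BGConstruction} \Rightarrow \eqref{triconnected}$ I would argue by induction along the sequence $G_4, G_5, \ldots, G_z$: $G_4 = K_4$ is $3$-connected, and each of the three \BG-operations preserves $3$-connectivity (this is the classical lemma of Barnette and Gr\"unbaum, and is also implicit in the excerpt's remark that ``\BG-operations preserve $3$-connectivity''), so $G_z = G$ is $3$-connected. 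This closes the equivalence of \eqref{triconnected}--\eqref{BGPathConstruction}.

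It then remains to attach \eqref{removals} and \eqref{contractions}. For \eqref{removals} I would note that the bottom-up sequence of \BG-operations read backwards is literally a sequence of removals from $G$ to $K_4$; the only thing to check is that the edge $e = xy$ removed at each step satisfies the stated numeric side condition $|N(x)|\geq 3$, $|N(y)|\geq 3$, $|N(x)\cup N(y)|\geq 5$. This is a matter of translating the three cases of a \BG-path (both endpoints real / one endpoint subdividing a link / both endpoints subdividing distinct non-parallel links) into degree conditions on the two real endpoints of the removed edge in $G_{l}$, using $3$-connectivity of $G_l$ to get degrees $\geq 3$ and the non-parallel / distinctness conditions of Definition~\ref{bgpathdefinition} to get the union bound $\geq 5$; conversely any such sequence of removals is realized by \BG-paths, so the side condition is exactly what Definition~\ref{bgpathdefinition} encodes. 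Finally, for \eqref{contractions}, I would cite the transformation of~\cite{Schmidt2010} that converts a sequence of removals into a sequence of contractions in $O(|E|)$ time (which in particular shows existence of the latter from the former), together with Tutte's wheel theorem~\cite{Tutte1961} for the reverse direction, closing the last equivalence.

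The main obstacle I anticipate is the bookkeeping in the \eqref{removals} step: one must be careful that ``removing an edge'' (delete, then smooth both endpoints) interacts correctly with the subdivision representation, in particular that smoothing does not accidentally merge links or create the forbidden parallel configuration, and that the minimum-degree side condition $\delta(G)\geq 3$ in \eqref{BGPathConstructionSingle}--\eqref{BGPathConstruction} is genuinely necessary rather than automatic (it is needed precisely because contracting/removing can leave degree-$2$ vertices that are not smoothed in the naive top-down process, so the side conditions rule out exactly those degenerate sequences). Everything else is either a direct appeal to a cited theorem or a routine case analysis over the three \BG-operations.
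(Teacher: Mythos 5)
The paper gives no proof of this theorem at all: it is explicitly presented as a summary of known results (``We summarize the results''), with each equivalence attributed to the cited references --- Barnette--Gr\"unbaum and Tutte for~\eqref{BGConstruction}, \cite{Schmidt2010} for~\eqref{BGPathConstructionSingle}--\eqref{removals}, and Tutte's wheel theorem for~\eqref{contractions}. Your proposal assembles exactly these cited results in the natural cycle of implications, so it is consistent with the paper's treatment; the steps you only sketch (verifying the degree side conditions in~\eqref{removals}, the necessity of $\delta(G)\geq 3$, and the ``only if'' directions showing the side conditions certify $3$-connectivity) are precisely the content the paper delegates to \cite{Schmidt2010} and \cite{Tutte1961} rather than reproves.
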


The following Lemma allows to focus only on computing sequence~\ref{characterizations}.\eqref{BGPathConstructionSingle}.

\begin{lemma}[{\cite[Proof of Theorem~2 and Lemma~4.1]{Schmidt2010}}]\label{transformation}
There is an algorithm that transforms a given sequence~\emph{\ref{characterizations}}.\eqref{BGConstruction}, \emph{\ref{characterizations}}.\eqref{BGPathConstructionSingle} or~\emph{\ref{characterizations}}.\eqref{removals} to each of the sequences~\emph{\ref{characterizations}}.\eqref{BGConstruction}-\emph{\ref{characterizations}}.\eqref{contractions} in linear time. If the transformation yields one of the sequences~\emph{\ref{characterizations}}.\eqref{BGConstruction}-\emph{\ref{characterizations}}.\eqref{removals}, the number of operations is preserved.
\end{lemma}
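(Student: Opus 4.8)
The plan is to establish a web of pairwise transformations among the four sequence types mentioned (\ref{characterizations}.\eqref{BGConstruction}, \eqref{BGPathConstructionSingle}, \eqref{removals}, and the targets \eqref{BGConstruction}--\eqref{contractions}), so that starting from any of the three listed input forms we can reach all of the first five output forms, and then appeal to the equivalences in Theorem~\ref{characterizations} (already proven in the cited literature) to cover \eqref{contractions} as well. Since the statement explicitly quotes \cite[Proof of Theorem~2 and Lemma~4.1]{Schmidt2010}, the bulk of the work is really a matter of assembling and slightly repackaging constructions that already appear there; the proof proposal is therefore a proof sketch that traces each arrow and verifies the running time and the "number of operations" bookkeeping.

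First I would handle the easy direction \eqref{BGConstruction} $\leftrightarrow$ \eqref{BGPathConstructionSingle}. Given a sequence of \BG-operations $G_4,\dots,G_z$, the discussion right before Definition~\ref{bgpathdefinition} already says each $G_l$ corresponds to a unique subdivision $S_l$ in $G$ and each \BG-operation corresponds to a \BG-path; walking the sequence once and recording, for each operation, which edge(s) of the current subdivision get subdivided and which new link is added, produces the \BG-path sequence in time proportional to the total size of the changes, which telescopes to $O(n+m)$. Conversely, a \BG-path sequence for $S_4,\dots,S_z=G$ is converted to the abstract \BG-operation sequence on $G_4,\dots,G_z$ by forgetting the subdivision vertices and reading off operation types from the path endpoints (endpoint inner to a link $\Rightarrow$ subdivide that link; otherwise attach at a real vertex); again one pass suffices. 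The number of \BG-operations equals the number of \BG-paths in both directions, so the count is preserved.

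Next I would do \eqref{removals} $\leftrightarrow$ \eqref{BGConstruction}/\eqref{BGPathConstructionSingle}. A sequence of removals from $G$ to $K_4$ on removable edges $e=xy$ is, read backwards, exactly a sequence of the inverse operations; the inverse of \emph{removing} $e=xy$ (delete $e$, then smooth $x$ and $y$) is: un-smooth at the appropriate places and re-insert the edge, which is precisely one of the three \BG-operations depending on how many of $x,y$ had degree exactly $3$ before removal --- this is where the side conditions $|N(x)|\ge 3$, $|N(y)|\ge 3$, $|N(x)\cup N(y)|\ge 5$ come in, matching the case distinctions in the three \BG-operations (type~1 when no smoothing happened, type~2 when one endpoint was smoothed, type~3 when both were). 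Reversing a list and relabeling each element is linear, and it is a bijection on operations, so the count is preserved. The reverse arrow (\BG-path sequence $\to$ removal sequence) is symmetric. Composing these arrows with the first bullet's arrows, and with the fact (stated in the excerpt) that \cite{Schmidt2010} transforms a removal sequence to a contraction sequence in $O(m)$, closes off the remaining targets; and since Theorem~\ref{characterizations} gives that existence of any one of these objects is equivalent to $3$-connectivity, once we have produced, say, a valid \BG-operation sequence we may invoke \eqref{BGConstruction}$\Leftrightarrow$\eqref{contractions} to know a contraction sequence exists and the \cite{Schmidt2010} transformation to actually build it.

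The main obstacle I expect is not conceptual but bookkeeping: verifying that each local transformation is genuinely \emph{linear} rather than merely polynomial, i.e.\ that the per-step work is $O(\text{size of the modification})$ and not $O(\text{size of the current graph})$, so the costs telescope. This requires being careful about how the subdivision $S_l$ is represented (an incremental representation where a \BG-path is spliced in by splitting at most two links and adding one link, all in $O(|P|)$ time) and about not re-scanning unchanged portions. A secondary subtlety is the correctness of the case analysis relating the side conditions in \eqref{removals} to the three \BG-operation types, in particular ruling out degenerate cases (parallel links, $|V_{real}|=3$) exactly as encoded in conditions~\ref{bgpathdefinition2} and~\ref{bgpathdefinition3} of Definition~\ref{bgpathdefinition}; but since the equivalence \eqref{BGPathConstructionSingle}$\Leftrightarrow$\eqref{removals} is already asserted in Theorem~\ref{characterizations}, the transformation only needs to be exhibited, not re-justified from scratch.
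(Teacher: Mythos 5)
The paper does not actually prove this lemma: it is imported wholesale from \cite[Proof of Theorem~2 and Lemma~4.1]{Schmidt2010}, so there is no in-paper argument to compare yours against. Judged on its own terms, your sketch traces the standard arrows --- \BG-operations $\leftrightarrow$ \BG-paths via the subdivision correspondence, removals $\leftrightarrow$ \BG-operations by reversing the list and inverting the smoothings (with the side conditions $|N(x)|\geq 3$, $|N(y)|\geq 3$, $|N(x)\cup N(y)|\geq 5$ selecting which of the three operations the inverse is), and the cited $O(m)$ removal-to-contraction transformation for target~\ref{characterizations}.\eqref{contractions} --- and your accounting of why the operation count is preserved for targets~\eqref{BGConstruction}--\eqref{removals} but not for~\eqref{contractions} is consistent with Lemma~\ref{number}.

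The one genuine omission is target~\ref{characterizations}.\eqref{BGPathConstruction}: a \BG-path sequence starting from an \emph{arbitrary prescribed} $K_4$-subdivision. Your web of arrows only ever produces a sequence from \emph{some} $K_4$-subdivision, namely the one inherited from the input sequence, and passing from ``some'' to ``each'' is precisely the nontrivial part --- the paper itself notes that this costs additional parallel edges in the intermediate graphs $G_l$, so it is not obtained by relabeling or reversal. You would need to add the rerouting argument of~\cite{Schmidt2010} (or explicitly cite it for this specific arrow) to cover that target. A smaller point: your final appeal to Theorem~\ref{characterizations} to ``know a contraction sequence exists'' is superfluous and slightly circular in spirit, since the lemma asks you to \emph{construct} the sequence and the cited removal-to-contraction transformation already does so directly.
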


Throughout the rest of the paper, a \emph{construction sequence} will therefore refer to sequence~\ref{characterizations}.\eqref{BGPathConstructionSingle}, unless stated otherwise. To construct such a sequence, we will use the following Lemma.

\begin{lemma}[\cite{Schmidt2010}]\label{multipleconstruction}
Let $G$ be a $3$-connected graph and $H \subset G$ with $H$ being a subdivision of a $3$-connected graph. Then there is a \BG-path for $H$ in $G$. Moreover, every link of $H$ of length at least $2$ contains an inner vertex on which a \BG-path for $H$ starts.
\end{lemma}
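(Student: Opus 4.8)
The plan is to prove the two assertions of Lemma~\ref{multipleconstruction} separately, but both via the same underlying mechanism: the existence theorem of Barnette and Gr\"unbaum (equivalences~\eqref{triconnected}$\Leftrightarrow$\eqref{BGPathConstructionSingle} of Theorem~\ref{characterizations}) together with the fact that a \BG-path for a $K_4$-subdivision $S_4$ inside $H$ can only ``see'' as much of $H$ as its links permit. First I would fix a $K_4$-subdivision $S_4 \subseteq H$; this exists since $H$ is a subdivision of a $3$-connected graph, hence of a graph containing $K_4$ as a minor (indeed, every $3$-connected graph on $\ge 4$ vertices contains a $K_4$-subdivision). If $H = S_4$ is itself a $K_4$-subdivision, the statement reduces to the known fact that a $3$-connected $G \ne K_4$ admits a \BG-path for any of its $K_4$-subdivisions (this is precisely what drives the top-down removal construction), and the ``moreover'' part for $K_4$-subdivisions follows from the removability characterization~\eqref{removals}. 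For general $H$, I would instead run the construction sequence for $G$ starting from $S_4$ as guaranteed by~\eqref{BGPathConstruction}, and argue that $H$ appears as an intermediate subdivision $S_l$ in (a reordering of) this sequence, so that the \BG-path $S_{l} \to S_{l+1}$ is a \BG-path for $H$.

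The key steps, in order, are: (1) observe that $H$, being a subdivision of a $3$-connected graph $H'$, has $\delta(H') \ge 3$ and therefore every real vertex of $H$ has degree $\ge 3$ in $H$ as well; (2) pick $S_4 \subseteq H$ and build the construction sequence $S_4, S_5, \dots, S_z = G$ using Theorem~\ref{characterizations}; (3) show that, because \BG-paths only add edges/subdivide edges and never delete them, the sequence can be chosen so that $H = S_l$ for some $l$ — this requires the substitution argument that \BG-paths internal to $H$ can be performed first, which is exactly the kind of reordering already used to prove independence of the choice of $S_4$ in~\cite{Schmidt2010}; (4) conclude that $S_{l+1}$ is obtained from $S_l = H$ by a \BG-path, giving the first assertion. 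For the ``moreover'' part, I would take a link $L$ of $H$ of length $\ge 2$ with inner vertex $u$, and show by a connectivity argument in $G$ (using $3$-connectivity: $G - \{a,b\}$ is connected for the two real endpoints $a,b$ of $L$, so there is a path in $G$ from an inner vertex of $L$ to the rest of $H$ avoiding $a$ and $b$) that some \BG-path for $H$ starts at an inner vertex of $L$; the degree condition in Definition~\ref{bgpathdefinition1}'s properties~\ref{bgpathdefinition2}--\ref{bgpathdefinition3} is checked directly using that $L$ is a single link and $H$ has $\ge 4$ real vertices (or $H$ is $K_4$-subdivision, handled above).

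The main obstacle I anticipate is step (3): one must argue that the construction sequence from $S_4$ to $G$ can be ``routed through'' the prescribed subdivision $H$, i.e.\ that all \BG-operations needed to build up $H$ from $S_4$ can be scheduled before the \BG-operations that leave $H$. A clean way to see this is top-down: in any sequence of removals from $G$ to $K_4$ one may, at each step, prefer to remove an edge not lying in $H$ (as long as such a removable edge exists), which by the existence theorem applied to the current graph and a connectivity/exchange lemma can be guaranteed until the current graph coincides with $H$; reversing this partial sequence exhibits $H$ as an $S_l$ and the next removal (equivalently, the previous \BG-path) as a \BG-path for $H$. Making the exchange argument precise — showing a removable edge outside $H$ always exists while the current graph properly contains $H$ — is where the real work lies, and it will rely on $H$ being a subdivision of a $3$-connected graph so that $H$ cannot be ``broken'' by such removals. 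The secondary obstacle is verifying property~\ref{bgpathdefinition3} (non-parallel links) for the \BG-path witnessing the ``moreover'' claim, but since $L$ is one fixed link and the other endpoint of the \BG-path can be chosen on a non-parallel link whenever $|V_{real}(H)| \ge 4$, this is a finite case check.
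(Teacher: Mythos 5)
The paper does not prove this lemma at all --- it is imported verbatim from~\cite{Schmidt2010} --- so there is no in-paper proof to match your proposal against; I can only judge the proposal on its own merits, and there it has a genuine gap. Your step~(3), which you correctly identify as ``where the real work lies,'' is essentially circular. You want to establish the existence of a \BG-path for $H$ by showing that a construction sequence from some $K_4$-subdivision $S_4 \subseteq H$ to $G$ can be routed so that $H$ appears as an intermediate subdivision $S_l$; but the standard (and, as far as I can see, only natural) proof that such a routing exists --- equivalently, that a sequence of removals from $G$ can always prefer edges outside $H$ until the current subdivision equals $H$ --- is by induction whose inductive step is precisely this lemma: for every subdivision $S$ of a $3$-connected graph with $H \subseteq S \subsetneq G$, one needs a \BG-path (or removable link) positioned outside $H$, and producing it requires the same segment/attachment analysis the lemma encapsulates. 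Reducing the lemma to the reordering claim therefore does not make progress; the direct route is to argue on a segment (bridge) of $H$ in $G$: such a segment exists since $H \subsetneq G$ and $\delta(G)\geq 3$, its attachment vertices cannot all lie on a single link of $H$ (else the link's two real endpoints form a separation pair, using that $H$ has at least four real vertices), and a path through the segment between two suitably chosen attachment vertices satisfies properties~\ref{bgpathdefinition}.\ref{bgpathdefinition1}--\ref{bgpathdefinition}.\ref{bgpathdefinition3} after a finite case analysis.

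Your sketch of the ``moreover'' part is much closer to that direct argument and is the right idea, but note that the work you defer as ``a finite case check'' is where the content sits: if the \BG-path starting at an inner vertex $u$ of $L$ ends at an endpoint $a$ of $L$, then $L$ itself contains both endpoints of the path with $u$ as an inner vertex, violating property~\ref{bgpathdefinition}.\ref{bgpathdefinition2}; so you must show the segment attached at $u$ has an attachment vertex off $V(L)$ (or arrange the endpoint on a link not parallel to $L$), which again is the separation-pair argument applied to the union of all segments whose attachments lie entirely in $V(L)$. I would recommend restructuring the whole proof around that segment analysis and dropping the routing/exchange framework, which as written assumes a statement at least as strong as the one to be proved.
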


Every contraction sequence~\ref{characterizations}.\ref{contractions} contains exactly $n-4$ contractions, implying that the $K_4$-multigraph contains exactly $m-n-2$ parallel edges. The number of removals is also fixed.

\begin{lemma}\label{number}
Every sequence~\ref{characterizations}.\ref{BGConstruction}-\ref{characterizations}.\ref{removals} contains exactly $m-n-2$ operations, i.\,e., $z=m-n+2$.
\end{lemma}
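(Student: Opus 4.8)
The plan is to count how the two basic parameters $n$ and $m$ change under a single \BG-operation (equivalently, under appending a single \BG-path), and then argue by induction on the length of the sequence. First I would observe that it suffices to prove the claim for sequence~\ref{characterizations}.\eqref{BGConstruction}, since by Lemma~\ref{transformation} the transformations between sequences~\ref{characterizations}.\eqref{BGConstruction}--\ref{characterizations}.\eqref{removals} preserve the number of operations; so $z$, the index of the last graph, is the same for all of them. The base case is $G_4 = K_4$, which has $n_4 = 4$ vertices and $m_4 = 6$ edges, and indeed $m_4 - n_4 - 2 = 0$ operations have been performed so far, consistent with the formula.

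Next I would examine the effect of each of the three \BG-operations on the pair $(n,m)$. Operation~1 adds a single edge and no vertex, so it changes $(n,m)$ to $(n, m+1)$. Operation~2 subdivides one edge (replacing one edge by two, hence $+1$ edge) and adds one new vertex together with one new edge $xy$, so in total it adds one vertex and two edges: $(n,m) \mapsto (n+1, m+2)$. Operation~3 subdivides two edges (each subdivision is $+1$ edge, so $+2$ edges), adds two new vertices $x,y$, and adds the edge $xy$ ($+1$ edge), giving $(n,m) \mapsto (n+2, m+3)$. The key point is that in every case the quantity $m - n$ increases by exactly $1$: for operation~1, $\Delta m - \Delta n = 1 - 0 = 1$; for operation~2, $2 - 1 = 1$; for operation~3, $3 - 2 = 1$. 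Hence the potential $\Phi(G_l) := m_l - n_l$ strictly increases by $1$ with each operation.

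Putting this together: if the construction sequence is $G_4, G_5, \ldots, G_z$ with $z - 4$ operations, then $\Phi(G_z) = \Phi(G_4) + (z-4)$, i.e., $m - n = (6 - 4) + (z - 4) = z - 2$. Solving gives $z = m - n + 2$, and the number of operations performed is $z - 4 = m - n - 2$. This is exactly the claim. I would conclude by remarking that since the transformation of Lemma~\ref{transformation} preserves the operation count across sequences~\ref{characterizations}.\eqref{BGConstruction}--\ref{characterizations}.\eqref{removals}, the same bound $m - n - 2$ holds for each of them, including the sequence of \BG-paths and the sequence of removals.

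I do not expect any real obstacle here; the only point requiring a little care is the bookkeeping in operations~2 and~3, namely remembering that subdividing an edge by one vertex replaces that edge with a path of two edges (a net gain of one edge per subdivided edge) in addition to the newly added edge $xy$. Once that is tallied correctly, the invariant $\Delta(m-n) = 1$ falls out uniformly, and the rest is arithmetic.
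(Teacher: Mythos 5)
Your proof is correct and rests on the same ingredients as the paper's: reduce to sequence~\ref{characterizations}.\eqref{BGConstruction} via Lemma~\ref{transformation}, then count the per-operation changes in $n$ and $m$ for each of the three \BG-operations. The paper packages this as two linear equations in the operation counts ($b+2c=n-4$, $a+2b+3c=m-6$) and subtracts them, which is exactly your observation that $m-n$ increases by $1$ per operation; the arguments are equivalent.
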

\begin{proof}
It suffices to show the claim for each sequence~\ref{characterizations}.\ref{BGConstruction} with Lemma~\ref{transformation}. Let $a$, $b$ and $c$ denote the number of \BG-operations in the sequence that create zero, one and two new vertices, respectively. Then $b+2c=n-4$ and $a+2b+3c=m-6$ hold, since $K_4$ consists of four vertices and six edges. Subtracting the equations gives that $a+b+c=m-n-2$.
\end{proof}

\section{Chain Decompositions}\label{Chains}
Let $G$ be a $3$-connected graph. According to Lemma~\ref{multipleconstruction}, it suffices to add iteratively \BG-paths to an arbitrary $K_4$-subdivision in $G$ to get a construction sequence. Note that we cannot make wrong decisions when choosing a \BG-path, since Lemma~\ref{multipleconstruction} can always be applied on the new subdivision and therefore ensures a completion of the sequence. Instead of starting with a $K_4$-subdivision, we will w.\,l.\,o.\,g.\ start with a $K_2^3$-subdivision $S_3$ and find a \BG-path for $S_3$ that results in a $K_4$-subdivision. We first show how $S_3$ is computed and then describe a decomposition of $G$ into special paths that allows us to find the \BG-paths efficiently.

A Depth First Search (DFS) is performed on $G$, assigning a Depth First Index (DFI) to every vertex. Let $T$ be the corresponding DFS-tree, $r$ be the root of $T$ and $u$ be the vertex that is visited second. Both, $r$ and $u$, have exactly one child in $T$, as otherwise they would form a separation pair in $G$. For two vertices $v$ and $w$ in $T$, let $v$ be a (\emph{proper}) \emph{ancestor} of $w$ and $w$ be a (\emph{proper}) \emph{descendant} of $v$ if $v \in V(r \rightarrow_T w)$ (and $v \neq w$). A \emph{backedge} is an edge $vw \in E(G) \setminus E(T)$ oriented from $v$ to $w$ with $v$ being an ancestor of $w$ (note that this differs from standard notation). A backedge $vw$ \emph{enters} a subtree $T'$ of a tree if $v \notin V(T')$ but $w \in V(T')$.

To compute $S_3$, we choose two backedges $ra$ and $rb$ and denote the least common ancestor of $a$ and $b$ in $T$ with $x$. The paths $x \rightarrow_T r$, $ra \cup a \rightarrow_T x$ and $rb \cup b \rightarrow_T x$ are the three subdivided edges of $S_3$ in $G$ with real vertices $r$ and $x$. Now, $G$ is decomposed into special paths $\{C_0,C_1,\ldots,C_{m-n+1}\} =: C$, called \emph{chains}, whose edge sets partition $E(G)$. The decomposition imposes a total order $<$ on $C$ with $C_0 < C_1 < \ldots < C_{m-n+1}$ that is identical to the order in which the chains were computed.
We set $C_0 := x \rightarrow_T r$, $C_1 := ra \cup a \rightarrow_T x$ and $C_2 := rb \cup b \rightarrow_T x$. The remaining chains are then computed by applying the following procedure subsequently for each vertex $v$ in increasing DFI-order: For every backedge $vw$ not in a chain, we traverse the path $w \rightarrow_T r$ until a vertex $x$ is found that is already contained in a chain. The traversed path $v \rightarrow_G x$ including $vw$ forms the new chain.

Note that every chain $C_i \neq C_0$ contains exactly one backedge and thus $|C|=m-n+2$. Also, $s(C_i)$ is always a proper ancestor of $t(C_i)$. Chains admit the following tree structure.

\begin{definition}
Let the \emph{parent of a chain} $C_i \neq C_0$ be the chain that contains the edge from $t(C_i)$ to the parent of $t(C_i)$ in $T$.
\end{definition}

\begin{lemma}
The parent relation defines a tree $U$ with $V(U)=C$ and root $C_0$.
\end{lemma}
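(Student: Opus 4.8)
The plan is to verify that the parent function $p:C\setminus\{C_0\}\to C$ given in the preceding definition is well-defined and that iterating it always reaches $C_0$; together with the fact that $C_0$ has no parent, this is exactly the statement that $p$ is the parent function of a tree rooted at $C_0$ with vertex set $C$.

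First I would check well-definedness: for a chain $C_i\neq C_0$ I need that $t(C_i)$ has a parent in $T$ and that the tree edge from $t(C_i)$ to its $T$-parent lies in some chain, so that ``the chain containing that edge'' names a unique element of $C$. Since $s(C_i)$ is a proper ancestor of $t(C_i)$ (noted just above), $t(C_i)\neq r$, so $t(C_i)$ has a $T$-parent; and since the chains partition $E(G)$ and the edge in question is a tree edge (hence in $E(G)$), it lies in exactly one chain. So $p$ is a total function $C\setminus\{C_0\}\to C$. I would also observe $p(C_i)\neq C_i$: the edge from $t(C_i)$ to its $T$-parent has both endpoints that are ancestors of, or equal to, $t(C_i)$, whereas inside $C_i$ the only vertex that is a (non-proper) descendant-side endpoint situated at $t(C_i)$ is $t(C_i)$ itself and the incident $C_i$-edge goes the other way along $C_i$ — more cleanly, every edge of $C_i$ other than its single backedge lies strictly between $s(C_i)$ and $t(C_i)$ on the tree path, so the tree edge immediately above $t(C_i)$ is not one of them, and the backedge of $C_i$ is not a tree edge; hence that edge is in a different chain.

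Next I would establish that following parents strictly decreases something, giving acyclicity and termination at the root. The natural monotone quantity is the DFI of the tail: I claim $t(p(C_i))$ is a proper ancestor of $t(C_i)$, hence has strictly smaller DFI. Indeed $p(C_i)$ is the chain containing the tree edge $e$ from $t(C_i)$ to its parent $q:=$ parent$_T(t(C_i))$; this edge $e$ is a tree edge of the chain $p(C_i)$, and since every chain's edge set is a path from $s$ to $t$ along tree edges plus (for $C_j\neq C_0$) one closing backedge at the top, the tree edges of any chain $C_j$ are exactly the tree path $s(C_j)\to_T t(C_j)$; thus $e$ being a tree edge of $p(C_i)$ forces both its endpoints to lie on $s(p(C_i))\to_T t(p(C_i))$, and in particular the lower endpoint $t(C_i)$ satisfies: $t(p(C_i))$ is an ancestor of $t(C_i)$. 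It cannot be equal to $t(C_i)$, because $e$ joins $t(C_i)$ to its parent, so $e\neq$ the top of a chain whose tail is $t(C_i)$ unless that chain's tail were the parent $q$; in all cases $t(p(C_i))$ is a *proper* ancestor of $t(C_i)$. (For $C_0$ this is vacuous; for the special chains $C_1,C_2$ one checks $p(C_1)=p(C_2)=C_0$ directly, since $t(C_1)=t(C_2)=x$ and the tree edge above $x$ lies on $C_0=x\to_T r$.) Therefore the sequence $C_i, p(C_i), p(p(C_i)),\dots$ has strictly decreasing tail-DFI and must terminate; it can only terminate at $C_0$, the unique chain with no parent. This shows every chain has a directed path to $C_0$ under $p$, and no directed cycle exists, so $(C,p)$ is a rooted tree with root $C_0$.

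The main obstacle is the bookkeeping in the second and third paragraphs: making precise the claim that the tree edges of a chain $C_j$ form exactly the tree path $s(C_j)\to_T t(C_j)$ (this is where the construction of the chains — ``traverse $w\to_T r$ until hitting an already-covered vertex, and the new chain is $v\to_G x$ including the backedge $vw$'' — must be invoked, together with $s(C_0)=x$, $t(C_0)=r$ for the tree-only chain $C_0$), and then using it to pin down where $p(C_i)$'s tail sits relative to $t(C_i)$ and to exclude $p(C_i)=C_i$. Once that structural description of a single chain is in hand, the ancestor/DFI monotonicity argument is routine, and no delicate case analysis beyond $C_0,C_1,C_2$ remains.
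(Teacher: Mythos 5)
Your overall strategy---iterate the parent map and show that the tail of the parent chain is strictly closer to the root, so the iteration terminates at $C_0$, the only chain without a parent---is essentially the paper's argument (the paper reads off the chains along the whole path $t(D_0)\rightarrow_T r$ and gets acyclicity from the total order $<$ rather than from DFIs, but it is the same walk up the tree). However, the structural claim on which you rest the monotonicity step is false. For a chain $C_j\neq C_0$ with backedge $vw$ (so $s(C_j)=v$ is an ancestor of $w$), the decomposition traverses $w\rightarrow_T r$ until it meets an already covered vertex $x=t(C_j)$; hence the tree edges of $C_j$ form the tree path from $t(C_j)$ \emph{down} to $w$, a path of descendants of $t(C_j)$, and emphatically not the path $s(C_j)\rightarrow_T t(C_j)$, which lies strictly above $t(C_j)$ and, in the type-$1$ case, belongs to the \emph{parent} chain. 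A chain is hook-shaped: the backedge drops from $s(C_j)$ past $t(C_j)$ down to $w$, and the tree edges climb back up only as far as $t(C_j)$.

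Moreover, your deduction does not follow even from your own (incorrect) claim: if both endpoints of the edge $e$ from $t(C_i)$ to its $T$-parent lay on $s(p(C_i))\rightarrow_T t(p(C_i))$, then $t(p(C_i))$ would be a \emph{descendant} of $t(C_i)$---the opposite of the monotonicity you need. The conclusion you want is nevertheless true, and the repair is local: with the correct description, every tree edge of $p(C_i)$ has both endpoints among the descendants of $t(p(C_i))$, so $t(p(C_i))$ is an ancestor of the upper endpoint of $e$, i.e.\ of the $T$-parent of $t(C_i)$, and hence a proper ancestor of $t(C_i)$. With that substitution the remainder of your argument (well-definedness of $p$, $p(C_i)\neq C_i$, termination at $C_0$) is sound and in line with the paper's proof.
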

\begin{proof}
Let $D_0 \neq C_0$ be a chain in $C$ and let $D_1,\ldots,D_k$ be the sequence of chains containing the edges of $t(D_0) \rightarrow_T r$ in that order, omitting double occurrences. By definition of the parent relation, each $D_i$, $0 \leq i < k$, has parent $D_{i+1}$. It follows with $D_k = C_0$ that $U$ is connected. Moreover, $U$ is acyclic, as parent chains are always smaller in $<$ than their children by definition of the decomposition.
\end{proof}

\subsection{Classifying Chains and Restrictions}\label{ClassifyingChains}
We extend the chain decomposition to assign one of the types $1$, $2a$, $2b$, $3a$ and $3b$ to each chain in $C \setminus \{C_0\}$. The motivation for this classification is that chains of certain types are, under some conditions, \BG-paths and therefore allow to compute the next step of the construction sequence. The types are defined by Algorithm~\ref{alg:classify}: E.\,g., a chain $C_i$ with parent $C_k$ is of type~$1$ if $t(C_i) \rightarrow_T s(C_i) \subseteq C_k$ and of type~$2$ if it is not of type~$1$ and $s(C_i)=t(C_k)$ holds (see Figures~\ref{fig:Classification} and~\ref{fig:Example1}). All chains are unmarked at the beginning of Algorithm~\ref{alg:classify}. It is not difficult to show that the decomposition and classification can be carried out in linear time. We omit a proof.

\begin{wrapfigure}[12]{R}{3cm}
	\vspace{-0.99cm}
	\centering
	\includegraphics[scale=0.7]{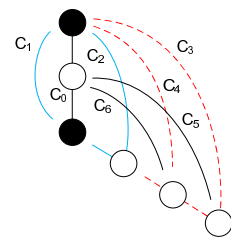}
	\caption{$C_1$ and $C_2$ are of type~$1$, $C_3$ is of type~$2b$, $C_4$ of type~$2a$, $C_5$ of type~$3b$ and $C_6$ of type~$3a$.}
	\label{fig:Classification}
\end{wrapfigure}

\begin{algorithm}
\caption{classify$(C_i \in C \setminus \{C_0\}, \textrm{DFS-tree } T)$}\label{alg:classify}
\begin{algorithmic}[1]
	\State $C_k := \emph{parent}(C_i)$\Comment{the parent of $C_i$ in $U$: $C_k < C_i$}
	\If{$t(C_i) \rightarrow_T s(C_i)$ is contained in $C_k$}\Comment{type~$1$~}\label{typeone}
		\State assign type~$1$ to $C_i$
	\ElsIf{$s(C_i)=s(C_k)$}\Comment{type~$2$: $C_k \neq C_0$, $t(C_i)$ is inner vertex of $C_k$}
		\If{$C_i$ is a backedge}
			\State assign type~$2a$ to $C_i$\Comment{type~$2a$}
		\Else
			\State assign type~$2b$ to $C_i$; mark $C_i$\Comment{type~$2b$}
		\EndIf
	\Else \Comment{type~$3$: $s(C_i) \neq s(C_k)$, $C_k \neq C_0$, $t(C_i)$ is inner vertex of $C_k$}
		\If{$C_k$ is not marked}
			\State assign type~$3a$ to $C_i$\Comment{type~$3a$}
		\Else \Comment{$C_k$ is marked}
			\State assign type~$3b$ to $C_i$; create a list $L_i = \{C_i\}$; $C_j := C_k$\Comment{type~$3b$}\label{assign}
			\While{$C_j$ is marked}\Comment{$L_i$ is called a \emph{caterpillar}}
				\State unmark $C_j$; append $C_j$ to $L_i$; $C_j := \emph{parent}(C_j)$\label{unmarking}
			\EndWhile
		\EndIf
	\EndIf
\end{algorithmic}
\end{algorithm}

\begin{lemma}\label{ChainDecomposition}
Computing a chain decomposition of a $3$-connected graph and classifying each chain with Algorithm~\ref{alg:classify} takes running time $O(n+m)$.
\end{lemma}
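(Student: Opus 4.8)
The plan is to verify that every component of the chain decomposition and the classification procedure runs in linear time, and to do this by separately analyzing the three phases: computing the DFS and the starting subdivision $S_3$, computing the remaining chains $C_3,\dots,C_{m-n+1}$, and running \textbf{classify} on all chains. The DFS with DFI-assignment is the textbook linear-time routine, and finding the two backedges $ra$, $rb$ together with the least common ancestor $x$ of $a$ and $b$ is easily done in $O(n+m)$ (a single upward walk along tree edges from $a$ and $b$, or a trivial ancestor marking). So the first phase is immediate.

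For the second phase I would argue that the total work of the chain-building procedure is $O(n+m)$ by a charging/amortization argument. The outer loop iterates over vertices $v$ in increasing DFI order, and for each backedge $vw$ not yet in a chain we walk up the tree path $w\rightarrow_T r$ until hitting a vertex already in some chain. The key observation is that every tree edge that is traversed during such a walk is, at the moment it is traversed for the \emph{last} time, placed into the newly formed chain; since the chains partition $E(G)$, each tree edge is "finalized" exactly once. To bound the re-traversals, I would maintain for each vertex a pointer to the nearest ancestor already contained in a chain (a union-find-free shortcut that can be updated in $O(1)$ amortized, since once a vertex is in a chain it stays in one), so that the walk never re-scans an interior vertex of an already-processed segment. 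Then the number of edge-visits in the second phase is $O(m)$ for iterating over backedges plus $O(n)$ for the tree edges, hence $O(n+m)$ total. One must also note that $|C| = m-n+2$ as stated, so there are only $O(m)$ chains to create.

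For the third phase, I would show that one call of \textbf{classify}$(C_i,T)$ does $O(1)$ work \emph{plus} the cost of the \texttt{while}-loop that builds the caterpillar $L_i$ in the type~$3b$ case. Retrieving \emph{parent}$(C_i)$, testing whether $t(C_i)\rightarrow_T s(C_i)\subseteq C_k$, and comparing $s(C_i)$ with $s(C_k)$ are all constant-time given pointers stored during the decomposition (for the subpath-containment test it suffices to compare $s(C_k)$ with $s(C_i)$ against the endpoints of $C_k$, or to check that $s(C_i)$ lies on $C_k$ using a precomputed "chain of each tree edge" array). The \texttt{while}-loop in lines~\ref{assign}--\ref{unmarking} walks up the $U$-tree unmarking chains, but \emph{each chain is unmarked at most once over the entire run} (line~\ref{unmarking} unmarks $C_j$, and only type~$2b$ chains ever get marked, each exactly once), so the total cost of all caterpillar-building loops across all calls is $O(|C|)=O(m)$. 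Summing, the classification of all chains is $O(n+m)$, and adding the three phases gives the claimed bound.

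The main obstacle is the second phase: making precise that the upward tree walks do not repeatedly rescan vertices, i.e., that the "walk $w\rightarrow_T r$ until a vertex already in a chain is found" can be implemented so that its cost is charged to edges being permanently absorbed into chains rather than to repeated scanning. I expect the cleanest way to handle this is to observe that the subgraph already covered by chains is, at every step, a connected subtree-plus-backedges containing $r$, and that the first already-covered vertex on $w\rightarrow_T r$ can be located by following a single parent pointer that is lazily path-compressed as chains are added; this keeps the amortized cost per backedge constant. The remaining parts are routine bookkeeping and I would only sketch them.
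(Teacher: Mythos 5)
Your proposal is correct and follows essentially the same route as the paper: linear-time DFS, charging each upward walk to the edges it absorbs into the new chain, constant-time type tests via stored parent/endpoint/inner-vertex pointers, and amortizing the caterpillar loop by the observation that each chain is marked (hence unmarked) at most once. The path-compression machinery you introduce for the second phase is unnecessary, since the walk stops at the \emph{first} vertex already contained in a chain and therefore never enters already-covered territory; every traversed tree edge immediately becomes part of the new chain, so no vertex is ever rescanned.
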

%

\begin{definition}\label{upwardsclosed}
Let a subdivision $S_l \subseteq G$ be \emph{upwards-closed} if for each vertex in $S_l$ the edge to its parent is in $E(S_l)$. Let $S_l$ be \emph{modular} if $S_l$ is the union of chains.
\end{definition}

In order to find \BG-paths efficiently, we want to restrict every subdivision $S_l$ to be upwards-closed and modular. However, configurations exist where no \BG-path for a subdivision $S_l$ is a chain, e.\,g., the subdivision $S_3 = \{C_0,C_1,C_2\}$ in Figure~\ref{fig:Classification}. This violates the modularity of $S_{l+1}$ and we have to weaken the restriction: We will allow intermediate subdivisions that are neither upwards-closed nor modular but demand in these cases that we can find a set of $t$ \BG-paths in advance that restores these properties after $t$ steps.

We impose the additional restriction that each link of $S_l$ that consists only of tree edges has no parallel link, except $C_0$ in $S_3$. This prevents \BG-path candidates from violating property~\ref{bgpathdefinition}.\ref{bgpathdefinition3} due to the DFS-structure. We summarize the restrictions.

\begin{enumerate}[label=($R_\arabic*$), leftmargin=*]
	\item For each upwards-closed and modular subdivision $S_l$, \BG-paths are only added as\label{R1} single chains of type~$1$, $2a$ or $3a$, with $S_{l+1}$ being upwards-closed and modular or as sets of $t > 1$ subsequent \BG-paths constructing an upwards-closed modular subdivision $S_{l+t}$ that differs from $S_l$ in exactly $t$ chains of types $2b$ and $3b$.
	\item For each upwards-closed and modular subdivision $S_l$, the links of $S_l$ that consist only of tree edges of $T$ have no parallel links, except $C_0$ in $S_3$.\label{R2}
\end{enumerate}


We refer to the current upwards-closed and modular subdivision in a construction sequence that is restricted by~\ref{R1} and~\ref{R2} as $S^R_l$. The existence of a restricted sequence is shown in Section~\ref{caterpillars}.
We show that chains of type~$3a$ help to find \BG-paths efficiently (proof omitted).

\begin{lemma}\label{structural3a}
Let $C_i$ be a chain of type~$3a$ and $C_k$ the parent of $C_i$ such that $C_k$ but not $C_i$ is contained in $S^R_l$. Then $C_i$ is a \BG-path for $S^R_l$ preserving~\ref{R1} and~\ref{R2}.
\end{lemma}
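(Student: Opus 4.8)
The plan is to verify that a type~$3a$ chain $C_i$, whose parent $C_k$ already lies in the current restricted subdivision $S^R_l$ while $C_i$ itself does not, satisfies all three conditions of Definition~\ref{bgpathdefinition}, and then to check separately that adding $C_i$ keeps the invariants~\ref{R1} and~\ref{R2} intact. First I would unpack what ``type~$3a$'' forces structurally: by Algorithm~\ref{alg:classify}, $C_i$ reaches the else-branch, so $t(C_i) \rightarrow_T s(C_i)$ is \emph{not} contained in $C_k$ (failure of the type~$1$ test), $s(C_i) \neq s(C_k)$, and $C_k \neq C_0$; moreover $t(C_i)$ is an inner vertex of $C_k$ (this is exactly the comment attached to the type~$3$ branch, and it follows because $C_k$ is the parent chain, hence contains the tree edge from $t(C_i)$ to its parent, and $t(C_i)$ being an endpoint of $C_k$ would drop us into one of the earlier cases). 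Since $C_i$ is a chain, $s(C_i)$ is a proper ancestor of $t(C_i)$, and because $C_i$ shares only its two endpoints with any earlier-constructed structure (this is how the chain decomposition traverses $w \rightarrow_T r$ until hitting a vertex already in a chain), the intersection $S^R_l \cap C_i$ consists of precisely $\{s(C_i), t(C_i)\}$. That is condition~\ref{bgpathdefinition1}.

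For condition~\ref{bgpathdefinition2} I would argue about which links of $S^R_l$ could contain both $x := s(C_i)$ and $y := t(C_i)$. The vertex $y = t(C_i)$ is an inner vertex of the chain $C_k$, and since $S^R_l$ is modular (a union of chains) and upwards-closed, the link of $S^R_l$ through $y$ is a sub-path of $C_k$; if this link also contained $x = s(C_i)$ then the portion $t(C_i) \rightarrow_T s(C_i)$ of the tree would lie along $C_k$, contradicting the failure of the type~$1$ test. Any other link containing $y$ as a non-endpoint cannot exist (a link has a unique place for each of its inner vertices), so in fact no link of $S^R_l$ contains both $x$ and $y$ as inner vertices, and the condition is vacuously satisfied except possibly when one of them is a real vertex — which only makes the condition easier. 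Condition~\ref{bgpathdefinition3} is the place where invariant~\ref{R2} does the real work: if $x$ and $y$ are inner vertices of links $L_x$ and $L_y$ of $S^R_l$, then $L_y \subseteq C_k$; the subpath $x \rightarrow_T y$ is a sub-path of $L_x$ consisting only of tree edges, so $L_x$ is (or contains) a tree-edge-only link, and by~\ref{R2} such a link has no parallel link other than the $C_0$ exception. Since $s(C_k) \neq s(C_i)$, the links $L_x$ and $L_y$ have distinct endpoints on the $s$-side, so they are not parallel; I would also dispatch the $C_0$ exception by noting $C_k \neq C_0$ and $s(C_i) \neq s(C_k)$ rule out the degenerate configuration. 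That establishes $C_i$ is a \BG-path for $S^R_l$.

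It remains to show that $S^R_{l+1} := S^R_l \cup C_i$ preserves~\ref{R1} and~\ref{R2}. For~\ref{R1}: $C_i$ is a single chain of type~$3a$, so I only need $S^R_{l+1}$ to be again upwards-closed and modular. Modularity is immediate since we added a whole chain. For upwards-closedness, the only new vertices are the inner vertices of $C_i$; each such vertex $v$ lies on the tree path $t(C_i) \rightarrow_T s(C_i)$ (the chain is built by walking up the tree from $w$ together with the backedge), so the edge from $v$ to its parent in $T$ is itself an edge of $C_i$ and hence of $S^R_{l+1}$ — except possibly at $s(C_i)$, but $s(C_i)$ is not a new vertex, it already lay in $S^R_l$, which is upwards-closed. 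For~\ref{R2}: the chain $C_i$ contains exactly one backedge (every chain other than $C_0$ does), so $C_i$ is not a tree-edge-only link; the links it introduces therefore are not tree-edge-only except possibly a sub-path of $C_i$ created by a new real vertex splitting $C_k$ — but splitting $C_k$ at $t(C_i)$ turns two sub-paths of $C_k$ into links of $S^R_{l+1}$, and these have $t(C_i)$ as a fresh endpoint, so any parallel partner would have had to already be parallel to $C_k$ in $S^R_l$ or be $C_i$ itself; the former is excluded by~\ref{R2} for $S^R_l$ and the latter by the backedge argument. I expect the main obstacle to be the careful bookkeeping in condition~\ref{bgpathdefinition3}: one must pin down exactly when $L_x$ is tree-edge-only and confirm that the $C_0$-exception in~\ref{R2} cannot be triggered here, since $s(C_i) \neq s(C_k)$ is the only thing separating the endpoints and one has to be sure this inequality really survives into the link structure rather than being an artifact of the chain endpoints.
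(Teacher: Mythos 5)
Your overall route is the same as the paper's: verify the three conditions of Definition~\ref{bgpathdefinition} for the chain $C_i$ directly, and then check separately that adding it keeps~\ref{R1} and~\ref{R2}. (The paper factors the latter steps through an auxiliary lemma stating that any path in $S^R_l$ with properties~\ref{bgpathdefinition}.\ref{bgpathdefinition1} and~\ref{bgpathdefinition}.\ref{bgpathdefinition2} is automatically a \BG-path and, if it is a chain of type~$2a$ or~$3a$, preserves both restrictions; the content is the same.)

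There is, however, a genuine gap in your verification of property~\ref{bgpathdefinition}.\ref{bgpathdefinition2}. That property demands that \emph{every} link containing $x=s(C_i)$ and $y=t(C_i)$ contain \emph{both as end vertices}; it is not a statement about links containing both as inner vertices. You rule out only the single link that has $y$ as an inner vertex, and then declare the condition ``vacuously satisfied'' and the real-vertex case ``easier''. That is not sound: a link having $y$ as an end vertex but $x$ as an inner vertex would still violate the property, and you never examine the links through $x$ in this part of the argument. The missing ingredient is exactly the structural fact the paper establishes first: since $C_k<C_i$, $s(C_i)\neq s(C_k)$ and the type-$1$ test fails, $s(C_i)$ must be an inner vertex of the tree path $t(C_k)\rightarrow_T s(C_k)$. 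With that, $x$ and $y$ are separated inside $S^R_l$ by the real vertex $t(C_k)$, so no link can contain one of them as an inner vertex and still reach the other; any link containing both must therefore have them as its end vertices, which is property~\ref{bgpathdefinition}.\ref{bgpathdefinition2}. You do derive this location of $s(C_i)$, but only later and only in the service of property~\ref{bgpathdefinition}.\ref{bgpathdefinition3}. A smaller slip of the same kind occurs in your preservation of~\ref{R2}: since~\ref{R2} forbids a tree-edge-only link from having \emph{any} parallel link, observing that $C_i$ contains a backedge does not excuse a tree-edge-only piece of $C_k$ from being parallel to $C_i$; here too one must use that $s(C_i)$ lies strictly above $t(C_k)$, so no sub-link of $C_k$ can share both end vertices with $C_i$.
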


\begin{wrapfigure}[14]{R}{5.7cm}
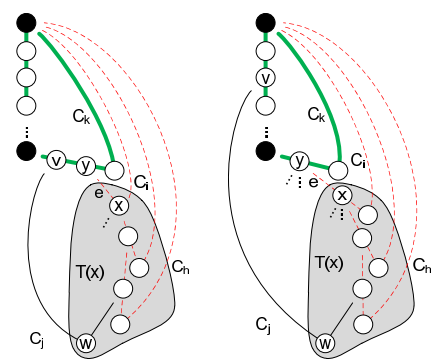

	\vspace{-0.5cm}
	\centering
	\subfigure[A \emph{bad} caterpillar $L_j$ with parent $C_k$.]{
		\includegraphics[scale=0.5]{./figures/CaterpillarBad}
		\label{fig:CaterpillarBad}
	}
	\hspace{0.01cm}
	\subfigure[A \emph{good} caterpillar $L_j$ with parent $C_k$.]{
		\includegraphics[scale=0.5]{./figures/CaterpillarGood}
		\label{fig:CaterpillarGood}
	}
	\caption{Kinds of caterpillars.}
	\label{fig:Caterpillar}
\end{wrapfigure}

\subsection{Caterpillars and Existence of the Restricted Sequence}\label{caterpillars}
While chains of type~$3a$ form \BG-operations under the conditions of Lemma~\ref{structural3a}, chains of types $1$ and $2$ in general do not. For every chain $C_i$ of type~$3b$, Algorithm~\ref{alg:classify} collects a list $L_i$ of chains that contains only $C_i$ and chains of type~$2b$ (see line~\ref{unmarking}). We call each list $L_i$ a \emph{caterpillar}.

\begin{definition}
Let the \emph{parent of a caterpillar} $L_j$ be the parent of the chain in $L_j$ that is minimal with respect to $<$.
Let a caterpillar $L_j$ with parent $C_k$ be \emph{bad} for a subdivision $S_l$ if $s(C_j)$ is a descendant of $t(C_k)$ and $s(C_k) \rightarrow_{C_k} s(C_j)$ contains no inner real vertex (see Figure~\ref{fig:CaterpillarBad}). Otherwise, $L_j$ is called a \emph{good} caterpillar (see Figure~\ref{fig:CaterpillarGood}).
\end{definition}

Caterpillars bundle the single chains of type~$2b$, which cannot immediately be added as \BG-paths. They also offer a simple decomposition into successive \BG-paths.

\begin{lemma}\label{AddCaterpillar}
Let $L_j$ be a caterpillar that consists of $t$ chains and has parent $C_k$. Let $C_k$ but no chain in $L_j$ be contained in $S^R_l$. If $L_j$ is good, $L_j$ can be efficiently decomposed into $t$ successive \BG-paths satisfying~\ref{R1} and~\ref{R2}.
\end{lemma}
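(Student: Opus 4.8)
\textbf{Proof plan for Lemma~\ref{AddCaterpillar}.}
The plan is to decompose the good caterpillar $L_j = \{C_{i_1}, C_{i_2}, \ldots, C_{i_t}\}$ (indexed so that $C_{i_1} = C_j$ is the minimal chain with respect to $<$, and $C_{i_2}, \ldots, C_{i_t}$ are the type-$2b$ chains collected in line~\ref{unmarking} of Algorithm~\ref{alg:classify}, listed in the order the while-loop appends them) into $t$ successive \BG-paths that are added one at a time. The key observation driving the order is that the caterpillar chains form a nested structure along the parent chain $C_k$: each $C_{i_s}$ for $s \geq 2$ has $t(C_{i_s})$ as an inner vertex of the next chain $C_{i_{s+1}}$ (or of $C_j$), because consecutive chains in the caterpillar are in parent relation. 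First I would add $C_j = C_{i_1}$ itself. Although $C_j$ is of type~$3b$ and hence not in general a \BG-path when added in isolation, the goodness hypothesis is exactly what is needed: since $L_j$ is good, either $s(C_j)$ is not a descendant of $t(C_k)$, or the subpath $s(C_k) \rightarrow_{C_k} s(C_j)$ of the parent chain contains an inner real vertex. In the first case $s(C_j) = s(C_k)$, so $C_j$ attaches to $S^R_l$ like a type-$2$ chain and one checks Definition~\ref{bgpathdefinition} directly; in the second case the real vertex on $s(C_k) \rightarrow_{C_k} s(C_j)$ separates the two endpoints of $C_j$ into non-parallel links, so property~\ref{bgpathdefinition}.\ref{bgpathdefinition3} holds. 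Either way $C_j$ is a \BG-path for $S^R_l$, giving a subdivision $S'$ in which $t(C_j) \rightarrow_T s(C_j)$ has become part of the subdivision (in particular the attachment vertices of the remaining caterpillar chains on $C_k$ now lie on a link of $S'$ that is ``split'' by the new real vertices $s(C_j)$ and $t(C_j)$).

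Next I would add the remaining chains $C_{i_2}, \ldots, C_{i_t}$ in order. Each $C_{i_s}$ ($s \geq 2$) is of type~$2b$, so by the type-$2$ case of Algorithm~\ref{alg:classify} we have $s(C_{i_s}) = s(\emph{parent}(C_{i_s}))$ and $t(C_{i_s})$ is an inner vertex of its parent chain. Crucially, the parent of $C_{i_s}$ is the previous caterpillar chain $C_{i_{s-1}}$ (this is how the marking/unmarking in lines~\ref{unmarking} assembles the list), which has already been added, so at the moment we wish to add $C_{i_s}$ the current subdivision $\tilde S$ contains $C_{i_{s-1}}$ but not $C_{i_s}$. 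I would then verify that $C_{i_s}$ is a \BG-path for $\tilde S$: property~\ref{bgpathdefinition}.\ref{bgpathdefinition1} because $C_{i_s}$ is a single chain whose edge set is disjoint from those added so far and whose only contacts with earlier chains are its two endpoints; property~\ref{bgpathdefinition}.\ref{bgpathdefinition2} because $s(C_{i_s})$ and $t(C_{i_s})$ sit at, respectively inside, distinct links; and property~\ref{bgpathdefinition}.\ref{bgpathdefinition3} because after adding $C_{i_{s-1}}$ its endpoint $t(C_{i_{s-1}})$ became real, so the link of $\tilde S$ on which $t(C_{i_s})$ lies is not parallel to the link on which $s(C_{i_s})$ lies. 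I would also confirm that each of these additions keeps the subdivision upwards-closed and modular (each added object is a whole chain, and a type-$2b$ chain consists of the tree path $t(C_{i_s}) \rightarrow_T s(C_{i_s})$ together with one backedge, so the edge from $t(C_{i_s})$ to its parent is included), and that~\ref{R2} is preserved because the new tree-only links created are precisely the pieces into which the old parent link is cut, and these are not parallel to each other by the DFS nesting.

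For the bookkeeping that $S^R_{l+t} = S^R_l \cup L_j$ differs from $S^R_l$ in exactly the $t$ chains of $L_j$, all of which are type~$2b$ except $C_j$ of type~$3b$ — which is exactly the second alternative allowed in restriction~\ref{R1} — I would just observe that the decomposition above added exactly the chains of $L_j$ and nothing else, in $t$ steps. Finally, the word ``efficiently'': the order in which to add the chains is already recorded in the list $L_i$ built by Algorithm~\ref{alg:classify}, and each of the $t$ \BG-path additions touches only the chain being added and its (already placed) parent, so the total work is linear in the combined length of the chains of $L_j$; summing over all caterpillars this is $O(m)$. I expect the main obstacle to be the verification of property~\ref{bgpathdefinition}.\ref{bgpathdefinition3} (non-parallel links) throughout the process — one has to argue carefully, using goodness for the very first step and the DFS-induced nesting of the caterpillar chains plus restriction~\ref{R2} for the subsequent steps, that at no intermediate stage do the two attachment points of the next chain end up on a pair of parallel links; this is where the distinction between good and bad caterpillars does all the work, and where the hypothesis $|V_{real}| \geq 4$ in Definition~\ref{bgpathdefinition} has to be tracked.
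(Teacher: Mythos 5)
Your decomposition is not the paper's, and it does not work: the $t$ \BG-paths into which a good caterpillar is decomposed are \emph{not} the $t$ chains of $L_j$ added whole, one at a time in some order, and no such order exists. The obstruction is already at your first step. You add the type-$3b$ chain $C_j$ in its entirety and argue that goodness makes it a \BG-path for $S^R_l$, but $t(C_j)$ is an inner vertex of the parent of $C_j$, which is one of the type-$2b$ chains of $L_j$ and hence, by the hypothesis of the lemma, not contained in $S^R_l$; since $S^R_l$ is upwards-closed and modular, this forces $t(C_j) \notin V(S^R_l)$, so $S^R_l \cap C_j \neq \{s(C_j),t(C_j)\}$ and property~\ref{bgpathdefinition}.\ref{bgpathdefinition1} already fails --- goodness is irrelevant to this. (You also assert $s(C_j)=s(C_k)$ in your first case, but $C_j$ is of type~$3$, so $s(C_j) \neq s(\emph{parent}(C_j)) = s(C_k)$ by definition.) Reversing the order (minimal chain first) does not rescue the plan: every type-$2b$ chain of $L_j$, including the minimal one, starts at $s(C_k)$ and ends at a non-real inner vertex of the link it attaches to, so property~\ref{bgpathdefinition}.\ref{bgpathdefinition2} fails for it as a whole chain --- this is precisely why these chains were marked and bundled into a caterpillar in the first place. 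You have also inverted the parent relation inside the caterpillar: Algorithm~\ref{alg:classify} walks \emph{upward} from $C_j$, so the parent of $C_{i_s}$ is $C_{i_{s+1}}$, a chain your order would add later, not $C_{i_{s-1}}$.

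The idea you are missing is that the \BG-paths must cut across chain boundaries. Letting $y$ be the last vertex of the minimal chain of $L_j$ (so $y \in V(C_k) \subseteq S^R_l$), the paper's first \BG-path is $P = C_j \cup (t(C_j) \rightarrow_T y)$: the type-$3b$ chain extended by the tree path running through the interiors of all the type-$2b$ chains down into $C_k$. The remaining $t-1$ \BG-paths are the leftovers $C_i \setminus P$ for $C_i \in L_j \setminus \{C_j\}$ (with a modified first pair of paths in the sub-case where $s(C_j)$ is a descendant of $t(C_k)$, which is where goodness and the real vertex on $s(C_k) \rightarrow_{C_k} s(C_j)$ actually enter). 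The intermediate subdivisions are deliberately non-modular; this is exactly why restriction~\ref{R1} contains the special clause allowing a block of $t>1$ consecutive \BG-paths that only restores modularity at $S^R_{l+t}$. Your plan keeps every intermediate subdivision modular, which is impossible for a caterpillar.
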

\begin{proof} (sketch) Let $y$ be the last vertex of the minimal chain in $L_j$ and let $C_h$ be the parent of $C_j$. We add either the path $P := C_j \cup (t(C_j) \rightarrow_T y)$, followed by $C_i \setminus P$ for all chains $C_i \in L_j \setminus \{C_j\}$ (see Figure~\ref{fig:CaterpillarGood}) or the path $(C_j \cup C_h) \setminus ((t(C_j) \rightarrow_T y) \setminus \{t(C_j)\})$, followed by $t(C_j) \rightarrow_T y$ and $C_i \setminus (t(C_j) \rightarrow_T y)$ for every remaining chain $C_i \in L_j \setminus \{C_j,C_h\}$.
\end{proof}

\begin{definition}\label{segment}
We define the equivalence relation $\sim$ on $E(G) \setminus E(S_l)$ with $e \sim e$ for all $e \in E(G) \setminus E(S_l)$ and with $e \sim f$ for all $e, f \in E(G) \setminus E(S_l)$ if there is a path $e \rightarrow_G f$ without an inner vertex in $S_l$.
Let the \emph{segments} of $S_l$ be the subgraphs of $G$ that are induced by the equivalence classes of $\sim$. Let $H \cap S_l$ be the \emph{attachment points} of $H$.
\end{definition}

\begin{definition}
For a chain $C_i$ and a subdivision $S^R_l$, let $\Children(C_i)$ be the set of children of $C_i$ of types~$1$ and~$2$ that are not contained in $S^R_l$ and let $\Type(C_i)$ be the set of chains of type~$3$ that start at a vertex in $C_i$ and are not contained in $S^R_l$.
\end{definition}

The following theorem is a key result of this paper and leads not only to an existence proof of the restricted construction sequence but also to an efficient algorithm for computing it.

\begin{theorem}\label{correctness}
For a subdivision $S^R_l$, let $C_i$ be a chain such that $\Children(C_j) = \Type(C_j) = \emptyset$ holds for every proper ancestor $C_j$ of $C_i$. Then all chains in $\Children(C_i) \cup \Type(C_i)$ and their proper ancestors that are not already contained in $S^R_l$ can be successively added as \BG-paths (possibly being part of caterpillars) such that~\ref{R1} and~\ref{R2} is preserved. Moreover, the chains in $\Type(C_i)$ that are contained in segments in which the minimal chain is not contained in $\Children(C_i)$ can be added at any point in time in arbitrary order (together with their proper ancestors that are not contained in $S^R_l$).
\end{theorem}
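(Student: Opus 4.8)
The plan is to argue by induction on the order $<$ of chains, processing the chains in $\Children(C_i) \cup \Type(C_i)$ together with their missing proper ancestors in a carefully chosen order, and to show that at each step the current object to be added is either a single type-$1$, $2a$ or $3a$ chain or a good caterpillar, so that Lemma~\ref{structural3a} or Lemma~\ref{AddCaterpillar} applies. First I would use the hypothesis that $\Children(C_j) = \Type(C_j) = \emptyset$ for every proper ancestor $C_j$ of $C_i$ to establish that the only chains ``hanging off'' the strict ancestors of $C_i$ that are still missing from $S^R_l$ are themselves ancestors of the chains we want to add; this keeps the set of chains to be inserted upwards-closed in $U$ once we throw in the required proper ancestors, which is exactly what~\ref{R1} needs to maintain modularity and upward-closedness.

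Next I would handle the two kinds of chains separately. For a chain $C \in \Type(C_i)$ of type $3a$ whose parent is already present, Lemma~\ref{structural3a} gives immediately that $C$ is a \BG-path preserving~\ref{R1} and~\ref{R2}; so the bookkeeping is to order these so that each one's parent (which is either $C_i$ or something already inserted) is in place first. For a chain of type $3b$, it comes with its caterpillar $L$; here the key sublemma to prove is that under the present hypotheses $L$ is \emph{good}, i.e.\ that $s(C_j) \rightarrow_{C_k} s(C_{\min})$ does contain an inner real vertex or $s(C_j)$ is not a descendant of $t(C_k)$ — this is where the assumption $\Children(C_j)=\Type(C_j)=\emptyset$ on proper ancestors of $C_i$ is really used, since a bad caterpillar would force some missing type-$2$ or type-$3$ chain to start strictly above $C_i$. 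Once goodness is established, Lemma~\ref{AddCaterpillar} decomposes $L$ into $t$ successive \BG-paths, and I would check that the intermediate non-modular subdivisions there are precisely the ones permitted by the ``$t>1$'' clause of~\ref{R1} and that property~\ref{bgpathdefinition3} is never violated because of~\ref{R2} (the tree-only links stay unique). Chains in $\Children(C_i)$ of type $1$ and $2a$ are inserted as single chains; type-$2b$ children are never inserted on their own — they only appear bundled inside a caterpillar attached (through its type-$3b$ tail) to some chain, so they are accounted for by the caterpillar argument rather than individually.

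For the order of insertion I would process the chains in increasing $<$-order, but whenever a chain $C$ is reached whose parent is still absent I first recursively insert the (finitely many, strictly $<$-smaller) ancestors of $C$ that are missing; since parents are strictly smaller in $<$, this recursion terminates, and since every such ancestor is an ancestor of a chain in $\Children(C_i)\cup\Type(C_i)$, the hypothesis guarantees it is of type $1$, $2a$, $3a$, or the tail of a caterpillar, so each step is legal. After all of $\Children(C_i)$ is placed, the type-$3$ chains starting on $C_i$ have their start vertices sitting on links of $S^R_l$ with the correct parallelism status, and I would verify that adding them (as type-$3a$ chains or caterpillar tails) still respects~\ref{bgpathdefinition}. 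The ``moreover'' clause follows by the same analysis localized to a single segment: if the minimal chain of the segment containing a $C \in \Type(C_i)$ is not in $\Children(C_i)$, then the attachment structure of that segment to $S^R_l$ does not depend on which other chains of $\Children(C_i)\cup\Type(C_i)$ have already been added, so $C$ together with its missing ancestors can be slotted in at any time.

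\textbf{Main obstacle.} The hard part will be proving that the caterpillars arising in this situation are always \emph{good}, and, relatedly, pinning down exactly why a \emph{bad} caterpillar would contradict $\Children(C_j)=\Type(C_j)=\emptyset$ for all proper ancestors $C_j$ of $C_i$: this requires a careful look at where the minimal chain of a caterpillar can attach relative to the links of $S^R_l$, using both the DFS/tree structure of the chain decomposition and the definition of the parent of a caterpillar. Secondarily, verifying that the transient non-modular subdivisions produced by Lemma~\ref{AddCaterpillar} and by the recursive ancestor-insertion never violate property~\ref{bgpathdefinition}.\ref{bgpathdefinition3} — i.e.\ that~\ref{R2} is genuinely strong enough — will need a short but careful case analysis on which links are purely tree-edge links at each intermediate stage.
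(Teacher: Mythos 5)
There is a genuine gap, and it sits exactly where you located your ``main obstacle'': your plan hinges on proving that every caterpillar (and, implicitly, every type-$1$ and type-$2a$ child) encountered is immediately addable, with goodness of caterpillars following from the hypothesis $\Children(C_j)=\Type(C_j)=\emptyset$ on proper ancestors of $C_i$. That claim is false. Whether a caterpillar is good, and whether a type-$1$ or type-$2a$ chain is a \BG-path respecting~\ref{R2}, depends on whether its dependent path in $C_i$ currently contains an inner \emph{real} vertex --- a property of the current subdivision $S^R_t$ that changes as other segments are inserted. Segments genuinely do get stuck (the paper isolates three such exceptional configurations in Lemma~\ref{3a3b}), and they only become addable after \emph{other} segments have created new real vertices on $C_i$. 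Consequently, processing in increasing $<$-order with recursive ancestor insertion does not work: the order among the segments attached to $C_i$ must be chosen adaptively (this is precisely why the algorithm in Section~\ref{algorithm} needs the interval-overlap machinery), and no purely local argument from the hypothesis on ancestors of $C_i$ can certify goodness of an individual caterpillar at a fixed moment.

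The paper's proof takes a different, essentially global route that your proposal is missing. It first adds everything that is \emph{currently} addable, and then must show that the set $X$ of stuck children of $C_i$ is empty. This is done by contradiction: if $X\neq\emptyset$, then $C_i$ contains a link $L$ of length at least two with a non-real inner vertex, and Lemma~\ref{multipleconstruction} --- i.e.\ the $3$-connectivity of $G$ itself --- guarantees a \BG-path $B$ starting at an inner vertex $v$ of $L$. A case analysis on the type of the chain containing the first edge of $B$ (it cannot lie in the segment of any $x\in X$ by property~\ref{bgpathdefinition}.\ref{bgpathdefinition2}, cannot be of type~$1$ or~$3$, and the type-$2$ case is excluded by chasing its maximal non-type-$2$ ancestor) yields a contradiction. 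Without an argument of this kind --- some appeal to the existence result for \BG-paths in the ambient $3$-connected graph --- your induction has no way to rule out that a bad caterpillar or a real-vertex-free type-$1$ child simply remains stuck forever, so the proposal as written does not close.
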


The precondition of Theorem~\ref{correctness} is met in every subdivision: For $S^R_3$, $C_0$ is the desired chain and applying the Theorem on $C_0$ allows to take the descendants of $C_0$ in $U$ in subsequent subdivisions. This ensures the existence of the restricted construction sequence.

\begin{corollary}\label{correctnessCorollary}
Let $G$ be a $3$-connected graph with a chain decomposition $C = \{C_0,\ldots,C_{m-n+1}\}$. Then there is a construction sequence of $G$ restricted by~\ref{R1} and~\ref{R2} that starts with $S^R_3 = \{C_0 \cup C_1 \cup C_2\}$.
\end{corollary}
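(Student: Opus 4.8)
The plan is to derive Corollary~\ref{correctnessCorollary} from Theorem~\ref{correctness} by a straightforward induction on the tree $U$ of chains, using the fact that $C_0$ is the root of $U$ and that the hypothesis of the Theorem is vacuously satisfied there. First I would verify the starting point: $S^R_3 = \{C_0 \cup C_1 \cup C_2\}$ is a $K_2^3$-subdivision (indeed one also checks via a single \BG-path that we reach a $K_4$-subdivision, per the discussion preceding Algorithm~\ref{alg:classify}), it is upwards-closed by construction since $C_0 = x \rightarrow_T r$ and $C_1, C_2$ close the two backedge paths back to $x$, and it is modular since it is literally a union of chains. Restriction~\ref{R2} holds at $S^R_3$ because $C_0$ is the explicitly allowed exception, and the paths $C_1$, $C_2$ each contain a backedge, so they are not links consisting only of tree edges.

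Next I would argue that the precondition of Theorem~\ref{correctness} holds for $C_0$ in $S^R_3$: $C_0$ has no proper ancestor in $U$ (it is the root), so ``$\Children(C_j) = \Type(C_j) = \emptyset$ for every proper ancestor $C_j$ of $C_0$'' is trivially true. Hence the Theorem applies and all chains in $\Children(C_0) \cup \Type(C_0)$ together with their proper ancestors not yet in $S^R_3$ can be successively added as \BG-paths (possibly inside caterpillars) preserving~\ref{R1} and~\ref{R2}. The key observation is that adding all of these chains empties $\Children(C_0)$ and $\Type(C_0)$, so after these additions the precondition of the Theorem is now satisfied for any chain $C_i$ whose only proper ancestor with possibly-nonempty $\Children$ or $\Type$ was $C_0$ — in particular for the children of $C_0$ in $U$. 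I would then iterate: process the chains of $U$ in a top-down (BFS/DFS from the root) order, at each step invoking Theorem~\ref{correctness} on the next chain $C_i$ whose proper ancestors have all been cleared, thereby clearing $\Children(C_i)$ and $\Type(C_i)$ as well.

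To see that this process terminates with $S_z = G$, I would use Lemma~\ref{number}: the sequence has exactly $m-n-2$ operations, and $C \setminus \{C_0, C_1, C_2\}$ has exactly $m-n-1$ chains — wait, $|C| = m-n+2$ so $|C \setminus \{C_0,C_1,C_2\}| = m-n-1$; the discrepancy of one is absorbed by the single \BG-path turning $S^R_3$ into a $K_4$-subdivision, after which $m-n-2$ further chains remain. More to the point, every chain other than $C_0$ is of type $1$, $2a$, $2b$, $3a$ or $3b$, and each such chain lies in $\Children(C_k)$ for its parent $C_k$ (if it is of type $1$ or $2$) or in $\Type(C_i)$ for the chain $C_i$ at which it starts (if it is of type $3$); the top-down sweep over $U$ visits every chain's relevant ancestor, so every chain eventually gets added. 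Since $\{C_0, \ldots, C_{m-n+1}\}$ partitions $E(G)$, once all chains are in the subdivision we have reconstructed all of $G$, i.e.\ $S_z = G$.

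The main obstacle I anticipate is the bookkeeping of the order: Theorem~\ref{correctness} adds not just $\Children(C_i) \cup \Type(C_i)$ but also ``their proper ancestors not already contained in $S^R_l$,'' and it comes with a freedom clause about type-$3$ chains in certain segments being addable at any time. I would need to check that the top-down order over $U$ is consistent with the order in which the Theorem wants ancestors added — i.e.\ that when I invoke the Theorem on $C_i$, its parent $C_k$ has genuinely already been placed into the current subdivision, so that $C_i$'s additions are ``new'' relative to it. This follows because a parent in $U$ is smaller in $<$ and gets visited earlier in the sweep, and because each invocation on $C_j$ (an ancestor of $C_i$) already inserted $C_j$ itself into the subdivision via the ancestor-closure clause. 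The remaining care is simply that the restrictions~\ref{R1} and~\ref{R2} are maintained at every single intermediate subdivision, which is exactly what Theorem~\ref{correctness} guarantees for each batch, so they compose.
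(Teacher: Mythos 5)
Your proposal is correct and follows the same route as the paper: the paper justifies the corollary in the paragraph preceding it by noting that the precondition of Theorem~\ref{correctness} holds vacuously for $C_0$ in $S^R_3$ and that applying the theorem repeatedly lets one take the descendants of $C_0$ in $U$ in subsequent subdivisions. Your write-up merely fleshes out the same top-down induction over $U$ with the (correct) verification of the starting subdivision and the counting of chains.
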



\section{A Linear-Time Algorithm}\label{algorithm}
With Lemma~\ref{ChainDecomposition}, a chain decomposition, a subdivision $S^R_3$ and the classification of chains can be computed in time $O(n+m)$. Theorem~\ref{correctness} provides an algorithmic method to find iteratively \BG-paths building the restricted construction sequence: Iteratively for each chain $C_i$, $0 \leq i \leq m-n$, we add all chains in $\Children(C_i) \cup \Type(C_i)$ (we say that $C_i$ is \emph{processed}). Note that $C_i$ meets the precondition of Theorem~\ref{correctness} and that $\Children(C_i)$ and $\Type(C_i)$ can be build in time $O(|C_i|+|\Children(C_i)|+|\Type(C_i)|)$ by storing lists of type~$3$ chains at every vertex. We partition the chains in $\Type(C_i)$ into the different segments of $S^R_l$ containing them by storing a pointer on each $C_j \in \Type(C_i)$ to the minimal chain $D$ of the segment containing $C_j$. The chain $D$ is computed by traversing $T$ from $t(C_j)$ to the root until the next vertex is contained in $S^R_l$. The current vertex is then an inner vertex of $D$ (each inner vertex has a pointer to its chain) and we mark each vertex of the traversed path with $D$. Further traversals get $D$ by stopping at the first marker that points to a chain not in $S^R_l$. Since all traversed chains will be added, the running time amortizes to a total of $O(n+m)$.

First, we add all chains in $\Type(C_i)$ that are contained in segments in which the minimal chain is not contained in $\Children(C_i)$ (this can be checked in constant time per chain). According to Theorem~\ref{correctness}, the chains can be added in arbitrary order, as long as their proper ancestors that are not in $S^R_l$ are added before. We want to add the remaining chains in $\Children(C_i) \cup \Type(C_i)$. However, Theorem~\ref{correctness} does not specify in which order the chains have to be added, so we need to compute a valid order on them.

Let $C_j$ be a remaining chain in $\Type(C_i)$ and let $H$ be the segment containing it. Then the minimal chain $D$ in $H$ is of type~$1$ or~$2$, as it is contained in $\Children(C_i)$. If $D$ is of type~$1$ or~$2a$, $s(D)$, $t(D)$, $s(C_j)$ and all other attachment points of $H$ must be contained in $C_i$. The same holds for the remaining case of $D$ being of type~$2b$, as the type~$3b$-chain in the caterpillar containing $D$ cannot start in a proper ancestor of $C_i$ by assumption. Let the \emph{dependent path} of $H$ be the maximal path in $C_i$ connecting two attachment points of $H$, e.\,g., for $D$ being of type~$1$ or~$2a$, the dependent path is just $s(D) \rightarrow_{C_i} t(D)$. We can compute all attachment points of $H$ and therefore the dependent path of $H$ efficiently, as the previous computation provides $D$ and the set of chains $H \cap \Type(C_i)$; we just have to add $s(D)$ and $t(D)$ to the start vertices of the latter chains.

If $D$ is a chain of type~$2a$ (thus, $H=D$) and $t(D)$ is real, we can add $D$. Otherwise, restriction~\ref{R2} implies that every segment $H$ that has a dependent path $P$ without inner real vertices does neither contain chains nor caterpillars forming \BG-paths while preserving~\ref{R1} and~\ref{R2}. Conversely, if $P$ contains an inner real vertex, all chains in $H \cap (\Children(C_i) \cup \Type(C_i))$ can be added: If $D$ is of type~$1$, $D$ does not violate~\ref{R2} and can be added and if $D$ is of type~$2b$, the caterpillar containing $D$ is good and can be added with Lemma~\ref{AddCaterpillar}. As the minimal chain in $H \setminus \{D\}$ is not contained in $\Children(C_i)$, the remaining chains in $H \cap \Type(C_i)$, if exist, can be added as well using Theorem~\ref{correctness}.

\begin{wrapfigure}[16]{R}{7cm}
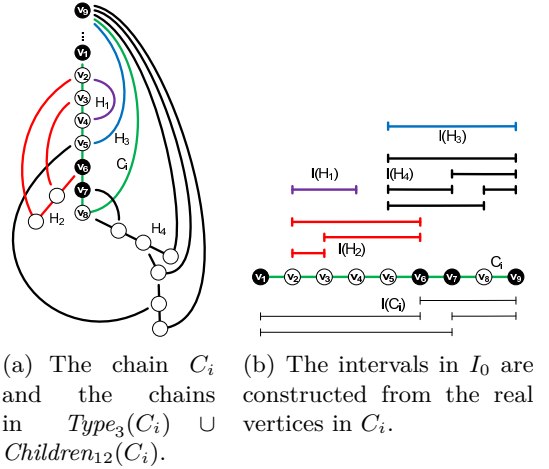

	\vspace{-0.5cm}
	\centering
	\subfigure[The chain $C_i$ and the chains in $\Type(C_i) \cup \Children(C_i)$.]{
		\includegraphics[scale=0.4]{./figures/Mapping1}
		\label{fig:Mapping1}
	}
	\hspace{0.1cm}
	\subfigure[The intervals in $I_0$ are constructed from the real vertices in $C_i$.]{
		\includegraphics[scale=0.42]{./figures/Mapping3}
		\label{fig:Mapping2}
	}
	\caption{Mapping segments in $C_i$. Different shades depict different segments.}
	\label{fig:Mapping}
\end{wrapfigure}

Finding a valid order on the remaining chains in $\Children(C_i) \cup \Type(C_i)$ thus reduces to finding an order on their segments such that the dependent paths of the segments contain inner real vertices. Having this sequence would allow to add subsequently $H \cap (\Children(C_i) \cup \Type(C_i)$) for every segment $H$ in this order. We map each $H$ to a set $I(H)$ of intervals in the range of the dependent path $P$ of $H$: Let $a_1,\ldots,a_k$ be the attachment points of $H$ and let $I(H) := \bigcup_{1 < j \leq k} \{[a_1,a_j]\} \cup \bigcup_{1 < j < k} \{[a_j,a_k]\}$ (see Figure~\ref{fig:Mapping}). Additionally, we map the real vertices $b_1,\ldots,b_k$ of $C_i$ to the set of intervals $I(C_i) := \bigcup_{1 < j < k} \{[b_1,b_j]\} \cup \bigcup_{1 < j < k} \{[b_j,b_k]\}$. This construction is efficient and creates at most $2*(|\Children(C_i)|+|\Type(C_i)|+|V_{real}(C_i)|)$ intervals for $C_i$, which amortizes to a total of $O(n+m)$ for all chains.

Let two intervals $[a,b]$ and $[c,d]$ \emph{overlap} if $a < c < b < d$ or $c < a < d < b$. Starting with $I(C_i)$, we find the next segment with an inner real vertex on its dependent path by finding a next overlapping interval $C_j$ and adding the whole segment that contains $C_j$. This reduction finds the desired order: Clearly, an overlap induces an inner real vertex in the next interval and therefore in the dependent path of the next segment. Conversely, for every segment $H$ with an inner real vertex on its dependent path $P$, an interval can be found that overlaps with $P$, either in $I(C_i)$ if $v$ was real at the beginning or in $I(H')$ for a previously added segment $H'$ (note that segments having only the attachment points $s(C_i)$ and $t(C_i)$ cannot occur, as they contain no chain in $\Children(C_i)$).

A sequence of overlaps from $I(C_i)$ to every other created interval exists if and only if the \emph{overlap graph} (i.\,e., the graph with intervals as vertices and edges between overlapping intervals) is connected. Simple sweep-line algorithms for constructing the connected components of the overlap graph are known~\cite{Olariu1996} (Lemmas~4.1 and 4.2 suffice), run in time $O(t)$ for $t$ intervals and, thus, ensures the efficient computation of the reduction.

\begin{theorem}
The construction sequences \emph{\ref{characterizations}}.\eqref{BGConstruction}, \emph{\ref{characterizations}}.\eqref{BGPathConstruction}, \emph{\ref{characterizations}}.\eqref{removals} and \emph{\ref{characterizations}}.\eqref{contractions} of a $3$-connected graph can be computed in time $O(n+m)$.
\end{theorem}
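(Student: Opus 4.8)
The plan is to assemble the final theorem from the infrastructure already developed rather than to re-prove anything about $3$-connectivity. By Lemma~\ref{transformation} it suffices to produce a single construction sequence of type~\ref{characterizations}.\eqref{BGPathConstructionSingle} in time $O(n+m)$, since the linear-time transformation then yields each of the remaining sequences~\ref{characterizations}.\eqref{BGConstruction}, \ref{characterizations}.\eqref{BGPathConstruction}, \ref{characterizations}.\eqref{removals} and~\ref{characterizations}.\eqref{contractions} within the same bound. So the whole theorem reduces to: given a $3$-connected $G$, compute a restricted sequence $S^R_3, S^R_4, \ldots, S^R_z$ in linear time and space.

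First I would invoke Lemma~\ref{ChainDecomposition}: compute a DFS-tree $T$, the backedges $ra$, $rb$, the chain decomposition $C=\{C_0,\ldots,C_{m-n+1}\}$, the chain tree $U$, and the type of every chain, all in $O(n+m)$, and set $S^R_3 := C_0 \cup C_1 \cup C_2$. Second, I would process the chains $C_0, C_1, \ldots, C_{m-n}$ in order. When $C_i$ is processed, its proper ancestors $C_j$ in $U$ satisfy $\Children(C_j)=\Type(C_j)=\emptyset$ (they were already fully processed), so the precondition of Theorem~\ref{correctness} holds; the theorem then guarantees that the chains in $\Children(C_i)\cup\Type(C_i)$, together with any missing proper ancestors, can be appended as \BG-paths (some bundled into caterpillars via Lemma~\ref{AddCaterpillar}, single type-$3a$ chains via Lemma~\ref{structural3a}) while keeping~\ref{R1} and~\ref{R2}. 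Correctness that this produces a complete construction sequence to $G$ follows because every chain eventually lies in some $\Children(C_i)\cup\Type(C_i)$ for the $C_i$ processed, hence every edge of $G$ is added; and Corollary~\ref{correctnessCorollary} already records that such a restricted sequence exists.

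The substantive work is the running time of one processing step. I would argue: building $\Children(C_i)$ and $\Type(C_i)$ costs $O(|C_i|+|\Children(C_i)|+|\Type(C_i)|)$ using the per-vertex lists of type-$3$ chains; grouping $\Type(C_i)$ into segments by the upward $T$-traversals from $t(C_j)$ costs, by the marking argument, an amount that amortizes against chains that are about to be added, hence $O(n+m)$ in total; for each segment $H$ one computes its dependent path in $C_i$ and its attachment points from $D$ and $H\cap\Type(C_i)$ in time linear in $|H\cap(\Children(C_i)\cup\Type(C_i))|$; the interval families $I(C_i)$ and $I(H)$ have total size $O(n+m)$ over all $i$; and running the sweep-line algorithm of~\cite{Olariu1996} on these intervals to find the connected components of the overlap graph, which identifies a valid order of segments (each successive dependent path acquiring an inner real vertex), costs $O(t)$ for $t$ intervals. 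Summing over all $C_i$ gives $O(n+m)$, and the linear-space claim follows because each chain, each interval and each \BG-path is stored once. Within a segment placed by the sweep-line, the order forced by~\ref{R2} is exactly ``dependent path contains an inner real vertex'', which I would verify is equivalent to overlap with an already-placed interval, so the order produced is valid for~\ref{R1} and~\ref{R2}.

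The main obstacle is the amortized accounting in the segment/traversal step and the verification that the overlap-graph reduction is both sound (an overlap really does yield an inner real vertex on the next dependent path, so~\ref{R2} is respected) and complete (every segment that can be legally added is reached, so the sequence terminates at $G$); this is where the bulk of the argument in Section~\ref{algorithm} is spent, and I would lean on Theorem~\ref{correctness} to license the ``arbitrary order'' freedom for segments whose minimal chain is not in $\Children(C_i)$, which is what makes the sweep-line's output orderable at all. Everything else—correctness of the final graph being $G$, and the linear-time conversion to the other four sequences—is immediate from Lemmas~\ref{transformation} and~\ref{number} and Corollary~\ref{correctnessCorollary}.
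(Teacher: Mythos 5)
Your proposal is correct and follows essentially the same route as the paper: reduce to sequence~\ref{characterizations}.\eqref{BGPathConstructionSingle} via Lemma~\ref{transformation}, build the chain decomposition and classification with Lemma~\ref{ChainDecomposition}, process chains in increasing order so that the precondition of Theorem~\ref{correctness} always holds, and handle the ordering of segments via dependent paths, the interval mapping and the sweep-line computation of the overlap graph's connected components, with the same amortized accounting for the upward traversals. This matches the argument the paper gives in Section~\ref{algorithm}, so there is nothing substantive to add.
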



\begin{figure}
	\centering
	\subfigure[A $3$-connected input graph $G$ with $n=18$ and $m=34$. Straight lines depict the edges of the DFS-tree $T$.]{
		\includegraphics[scale=0.62]{./figures/Example0}
		\label{fig:Example0}
	}
	\hfill
	\subfigure[The decomposition of $G$ into $m-n+2=18$ chains. Light solid chains are of type~$1$, dashed ones of type~$2$ and black solid ones of type~$3$. The only chain of type~$2a$ is $C_3$. The only chains of type~$3b$ are $C_{14}$ and $C_{16}$, which create the caterpillars $L_{14}=\{C_{14},C_6,C_5\}$ and $L_{16}=\{C_{16},C_4\}$, respectively.]{
		\includegraphics[scale=0.62]{./figures/Example1}
		\label{fig:Example1}
	}
	\hfill
	\subfigure[The subdivision $S^R_3 = \{C_0,C_1,C_2\}$ (thick edges). We start with processing $C_0$. Since $\Children(C_0) = \emptyset$, we can add all chains in $\Type(C_0)=\{C_7,C_8,C_9\}$. The first two have parents that are already contained in $S^R_3$. We thus add one of them as \BG-path, say $C_7$.]{
		\includegraphics[scale=0.62]{./figures/Example2}
		\label{fig:Example2}
	}
	\hfill
	\subfigure[The $K_4$-subdivision $S^R_4$. Its real vertices are depicted in black. Note that choosing $C_8$ instead of $C_7$ would have led also to a $K_4$-subdivision. After adding the remaining chain $C_8$ as \BG-path, the parent of $C_9$ is contained in $S^R_5$ and can therefore be added as well.]{
		\includegraphics[scale=0.62]{./figures/Example2_5}
		\label{fig:Example2_5}
	}
	\subfigure[The subdivision $S^R_6$. We process $C_1$ next, but have to continue to process $C_2$, as there is nothing to add. According to Theorem~\ref{correctness}, $C_3$, $C_{10}$, $C_{11}$, $C_{12}$, $C_{13}$, $C_{15}$ and the caterpillar $L_{14}$ can be added. We first add $C_{11}$, as its segment does not contain a chain in $\Children(C_2)$. To obtain the right order of the remaining chains, we group them by segments.]{
		\includegraphics[scale=0.62]{./figures/Example3}
		\label{fig:Example3}
	}
	\hfill
	\subfigure[We map the segments to intervals. $I(C_2)$ is induced by the real vertices $v_3$, $v_5$ and $v_1$. As $L_{14}$ and $C_{15}$ are in the same segment, they are mapped to the same group of intervals. By overlapping intervals, we get the sequence of segments $I(C_2)$, $I(C_{10})$, $I(C_{12})$, $I(C_{13})$, $I(C_3)$ and $I(L_{14} \cup C_{15})$. Note that overlapped intervals imply adding the whole segment.]{
		\includegraphics[scale=0.55]{./figures/Example3_5}
		\label{fig:Example3_5}
	}
%
	\hfill
	\subfigure[The subdivision $S^R_{14}$. The next non-trivial chain to process is $C_5$. The interval {$[v_{11},v_1] \subset I(L_{16})$} contains the inner real vertex $v_{12}$ and overlaps with {$[v_{10},v_{12}] \subset I(C_5)$}. This implies that $L_{16}$ can be added, forming the two \BG-paths $v_{11} \rightarrow_{G \setminus E(S_{15})} v_1$ and $v_{12} \rightarrow_T v_{17}$.]{
		\includegraphics[scale=0.62]{./figures/Example5}
		\label{fig:Example4}
	}
	\hfill
	\subfigure[The subdivision $S^R_{17}$. It remains to process the chain $C_6$, where $C_{17}$ is added as the last \BG-path of the construction sequence. This results in the subdivision $S^R_{18}$, which is identical to $G$.]{
		\includegraphics[scale=0.62]{./figures/Example6}
		\label{fig:Example6}
	}
	\captcont{A running example of the algorithm.}
	\label{fig:Example}
\end{figure}


\paragraph{A New Certifying 3-Connectivity Test.}\label{test}
It remains to deal with the case when the input graph $G$ is not $3$-connected. For simplicity, we assume $G$ to be $2$-connected, although the chain decomposition can check this fact. If $G$ is not $3$-connected, the described algorithm fails to add a \BG-path due to Theorem~\ref{characterizations} when processing some chain, say $C_i$. Therefore, after the processing phase for $C_i$, $\Children(C_i)$ must still contain a chain $C_j$. Let $H$ be the segment containing $C_j$ and let $H' \supseteq H$ be the set of segments that map to the connected component of the interval overlap graph containing $I(H)$. Then the union of dependent paths of the segments in $H'$ is a path $P \subseteq C_i$ and the two extremal attachment points on $P$ of segments in $H'$ build a separation pair. This pair certifies that $G$ is not $3$-connected and can be computed in linear time.


\newpage
\begin{appendix}
\section{Omitted Proofs}
We give the omitted proofs and the preparatory lemmas that lead to them.

\begin{lemma}[aka \textbf{Lemma 8}]
Computing a chain decomposition of a $3$-connected graph and classifying each chain with Algorithm~\ref{alg:classify} takes running time $O(n+m)$.
\end{lemma}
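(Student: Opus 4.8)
The plan is to split the computation into four phases and to bound each by $O(n+m)$. Phase~$1$ is the Depth First Search that produces the DFIs, the tree $T$, the root $r$ and the second vertex $u$; this is a textbook linear-time routine, during which I would in addition record for every vertex its parent edge in $T$ and a constant amount of extra data (for instance DFIs together with subtree sizes, or an Euler-tour labelling) so that afterwards each query ``is $x$ an ancestor of $y$ in $T$?'' costs $O(1)$. Phase~$2$ picks the two backedges $ra$ and $rb$, computes the least common ancestor $x$ of $a$ and $b$ by a single upward walk in $T$ (hence $O(n)$), and sets $C_0$, $C_1$, $C_2$; this is clearly $O(n)$.

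Phase~$3$ builds the remaining chains, and here the argument is a standard amortisation. I would keep a Boolean flag at every vertex saying whether it already lies on a chain, a pointer from every tree edge to the unique chain containing it, and a pointer from every inner chain vertex to its chain. Going through the backedges in increasing DFI-order, for a backedge $vw$ not yet on a chain I walk up $w \rightarrow_T r$ until the first flagged vertex; every vertex met strictly before it is new, so I flag it and set its pointers, and then the walk terminates after one extra comparison. Thus the cost charged to a single new chain is $O(1)$ plus $O(1)$ for each newly claimed tree edge; since each tree edge is claimed by exactly one chain and each backedge triggers exactly one such walk, the total is $O(n)+O(m)=O(n+m)$. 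While building $C_i$ I also store $s(C_i)$, $t(C_i)$ and the Boolean ``$|C_i|=1$'' in $O(1)$, and note that then $|C|=m-n+2$.

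Phase~$4$ runs Algorithm~\ref{alg:classify} on the $|C|=m-n+2$ chains, so it suffices to spend amortised $O(1)$ per chain. For a fixed $C_i$, the parent $C_k=\emph{parent}(C_i)$ is the chain holding the tree edge from $t(C_i)$ to its $T$-parent, read off in $O(1)$ from the stored $t(C_i)$ and the tree-edge-to-chain pointers; the tests $s(C_i)=s(C_k)$, ``$C_i$ is a backedge'' and the two ``is marked'' checks are $O(1)$ using the stored values and one Boolean flag per chain. The $\texttt{while}$-loop that forms a caterpillar $L_i$ when $C_i$ has type $3b$ only unmarks chains, and a chain is marked at most once in its lifetime (exactly when it is classified of type $2b$) and hence unmarked at most once; over the whole execution these loops therefore take $O(|C|)=O(m)$ steps in total.

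The one step that needs care, and the part I would spend the write-up on, is the type-$1$ test ``$t(C_i)\rightarrow_T s(C_i)$ is contained in $C_k$''. The plan is to first pin down the shape of a chain: every $C_j\neq C_0$ consists of its unique backedge followed by a maximal directed path of tree edges, so the tree edges of $C_j$ form a contiguous ancestor-path of $T$ whose shallowest vertex is $t(C_j)$ ($C_0$ itself being the all-tree-edge exception, with $r$ playing this role). Since $(t(C_i),\textit{parent}_T(t(C_i)))$ lies on $C_k$ by the definition of the parent, $t(C_i)$ lies on that ancestor-path of $C_k$; and because $s(C_i)$ is always a proper ancestor of $t(C_i)$, the whole path $t(C_i)\rightarrow_T s(C_i)$ stays inside $C_k$ exactly when $s(C_i)$ does not overshoot the shallowest vertex of $C_k$'s tree part, i.e.\ exactly when that vertex is an ancestor of, or equal to, $s(C_i)$ --- one $O(1)$ ancestor query with the preprocessing from Phase~$1$. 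Adding the four phases yields $O(n+m)$; everything except this last reduction to a single ancestor comparison is bookkeeping that amortises against $E(T)$, the backedges, or the chains.
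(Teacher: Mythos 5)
Your proposal is correct and follows essentially the same route as the paper: linear-time DFS and chain construction amortised over tree edges and backedges, constant-time classification per chain via stored pointers, and amortising the caterpillar while-loop over the fact that each chain is marked and unmarked at most once. The only (cosmetic) difference is the type-$1$ test, which the paper implements by checking that $s(C_i)$ and $t(C_i)$ both lie on $C_k\setminus\{s(C_k)\}$ using per-vertex chain pointers, whereas you reduce it to a single ancestor comparison against $t(C_k)$ --- both rest on the same observation that the tree part of a chain is a contiguous ancestor path, and both cost $O(1)$.
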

\begin{proof}
The DFS tree $T$ can be obtained in time $O(n+m)$. The subdivision $S_3$ can be found in time linearly dependent on $E(S_3)$ by taking two arbitrary backedges $ra$ and $rb$ with $r$ being the root of $T$ and finding the lowest common ancestor of $a$ and $b$ by traversing $T$ upwards. The computation of each remaining chain $C_i$, $i>2$, takes time linearly dependent on its length, too, which gives a running time of $O(n+m)$ for the chain decomposition.

In order to obtain a fast classification in Algorithm~\ref{alg:classify}, we store the following information on each chain $C_i$: A pointer to its parent $C_k$ (for $C_i \neq C_0$), pointers to $s(C_i)$ and $t(C_i)$ and the information whether $C_i$ is a backedge. In addition, for each inner vertex of $C_i$ a pointer to $C_i$ is stored. That allows us to check vertices on being contained as inner vertices or end vertices in arbitrary chains in $O(1)$. If $C_k=C_0$, we can check the condition on $C_i$ being of type~$1$ in constant time by testing whether $s(C_i)$ and $t(C_i)$ are contained in $C_0$. If $C_k \neq C_0$, we check in constant time whether $s(C_i)$ and $t(C_i)$ are contained in $C_k \setminus \{s(C_k)\}$. The condition for type~$2$ needs constant time as well. Every chain is marked at most once, therefore unmarked as most once in line~\ref{unmarking} of Algorithm~\ref{alg:classify}, which gives a total running time of $O(n+m)$.
\end{proof}

\begin{lemma}\label{whenupwardsclosedandmodular}
Let $S_l$ be upwards-closed and modular. Then a \BG-path $P$ for $S_l$ is a chain if and only if $S_{l+1}$ is upwards-closed and modular.
\end{lemma}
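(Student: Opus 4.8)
The plan is to prove both directions by carefully tracking what happens to the tree edges ``to the parent'' and to the chain partition under a single \BG-path. Recall that $S_l$ upwards-closed means every vertex of $S_l$ has its parent edge (in $T$) inside $E(S_l)$, and modular means $E(S_l)$ is a union of chains. First I would set up notation: let $P = x \rightarrow_G y$ be a \BG-path for $S_l$, so by Definition~\ref{bgpathdefinition} we have $S_l \cap P = \{x,y\}$, and $S_{l+1} = S_l \cup P$. The endpoints $x,y$ are each either already real vertices of $S_l$ or inner vertices of links of $S_l$; after adding $P$, any such inner vertex becomes real, subdividing its link into two links.

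For the ``if'' direction (contrapositive flavour): suppose $P$ is not a chain; I want to show $S_{l+1}$ fails to be upwards-closed or modular. A \BG-path added to a modular $S_l$ consists of a path in $G$; if this path is not exactly one chain $C_j$, then — because the chain decomposition partitions $E(G)$ — the edge set of $P$ meets the interiors of two or more chains, so $E(S_{l+1}) = E(S_l) \cup E(P)$ contains a proper nonempty subset of some chain's edges, violating modularity. The one subtlety is that $P$ could be a sub-path of a single chain $C_j$ (a ``proper piece'' of one chain) rather than spanning several; but then again $E(S_{l+1})$ holds only part of $C_j$, so modularity fails. Hence if $S_{l+1}$ is modular, $E(P)$ must be a union of whole chains; since $P$ is a single path and each chain is a path from a proper ancestor to a proper descendant sharing only endpoints with lower chains, a union of $\geq 2$ chains forming one path would force an internal real-like branching inconsistent with $S_l \cap P = \{x,y\}$ and with the tree structure $U$ — so $P$ is exactly one chain.

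For the ``only if'' direction: assume $P$ is a chain $C_j$. Modularity of $S_{l+1}$ is then immediate, $E(S_{l+1}) = E(S_l) \cup E(C_j)$, a union of chains. For upwards-closedness I need that every vertex of $S_{l+1}$ has its parent edge in $E(S_{l+1})$. Vertices already in $S_l$ are fine by hypothesis. The new vertices are the inner vertices of $C_j$ (plus possibly $x,y$ if they were inner vertices of links, but those were already in $S_l$). For an inner vertex $v$ of the chain $C_j$: since $s(C_j)$ is a proper ancestor of $t(C_j)$ and the chain, by the construction in Section~\ref{Chains}, follows tree edges from $t(C_j)$ up toward the root until it meets an already-built chain (with at most one backedge, which is the ``top'' edge $s(C_j)w$), every inner vertex $v$ of $C_j$ has its parent edge lying on $C_j$ itself — the chain is built by walking $w \rightarrow_T r$. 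Thus the parent edge of $v$ is in $E(C_j) \subseteq E(S_{l+1})$, giving upwards-closedness.

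The main obstacle I anticipate is the ``only if'' direction's dependence on the precise shape of chains — specifically, justifying that an inner vertex of a chain has its $T$-parent edge on that same chain, and handling the chain $C_0 = x \rightarrow_T r$ and the two initial chains $C_1, C_2$ which contain a backedge at the top but are otherwise tree paths. I would dispatch this by invoking the explicit chain-construction description (``traverse $w \rightarrow_T r$ until hitting a built vertex''), which makes each non-initial chain a tree path capped by one backedge at $s(C_i)$, so all its interior is a directed tree path and every interior vertex's parent edge is internal to the chain. A secondary subtlety in the ``if'' direction is ruling out that a \BG-path could be a union of two or more whole chains while still being a single simple path; here I would note that chains lower in $<$ than $C_j$ are already in $S_l$ (or the path would re-enter $S_l$ at an interior vertex, contradicting $S_l \cap P = \{x,y\}$), so a multi-chain $P$ is impossible under the hypothesis that $S_l$ is modular and the added object is a genuine \BG-path.
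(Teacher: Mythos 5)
Your ``only if'' direction (chain $\Rightarrow$ $S_{l+1}$ upwards-closed and modular) is correct and is essentially the paper's argument: modularity is immediate, and upwards-closedness follows because every non-initial chain is a backedge followed by a tree path traversed towards the root, so each inner vertex carries its parent edge inside the chain. The reduction in your ``if'' direction to the statement that $E(P)$ must be a union of $t>1$ whole chains is also correct and matches the paper.

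The gap is in ruling out that a \BG-path $P$ can be a union of $t>1$ whole chains. Your two justifications do not hold up. First, ``a union of $\geq 2$ chains forming one path would force an internal real-like branching'' is not true: two chains concatenated at a shared endpoint form a simple path with no branching, so nothing is inconsistent with $S_l \cap P = \{x,y\}$ on those grounds. Second, ``chains lower in $<$ than $C_j$ are already in $S_l$'' is false in general -- the whole algorithm adds chains out of $<$-order -- and even as a parenthetical fallback, ``the path would re-enter $S_l$ at an interior vertex'' is exactly the claim that needs proof, not a reason. The paper closes this by a concrete case analysis on the \emph{first} chain $C_i$ of $P$, using both upwards-closedness hypotheses. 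If $P$ starts at $t(C_i)$: since $S_l$ is upwards-closed, every ancestor of a vertex of $S_l$ lies in $S_l$, and $s(C_i)$ is a proper ancestor of $t(C_i)\in S_l$; hence $s(C_i)\in S_l$ would be an \emph{inner} vertex of $P$ (as $t>1$), contradicting property~\ref{bgpathdefinition}.\ref{bgpathdefinition1}. If $P$ starts at $s(C_i)$: $P$ traverses the backedge of $C_i$ and then its tree path up to $t(C_i)$, and a second chain continuing from there would have to contribute its backedge at a vertex of $P$ already incident to two DFS tree edges, which is impossible if $S_{l+1}$ is upwards-closed. Without some version of this orientation-and-ancestor argument, the ``if'' direction is not established.
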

\begin{proof}
If $P$ is a chain, $t(P)$ is contained in $S_l$ and $S_{l+1}$ must be upwards-closed and modular due to the DFS structure. If $P$ is not a chain, we assume to the contrary that $S_{l+1}$ is upwards-closed and modular. Then $P$ must be the union of $t > 1$ chains; let $C_i$ be the first chain in $P$. Now $P$ cannot start with $t(C_i)$, since $s(C_i)$ is in $S_l$ and property~\ref{bgpathdefinition}.\ref{bgpathdefinition1} contradicts $t > 1$. Thus, $P$ starts with $s(C_i)$, which contradicts $t > 1$ as well, as $S_{l+1}$ is upwards-closed and a second chain in $P$ would include another backedge in $P$ at a vertex that is already incident to two DFS tree edges.
\end{proof}

Lemma~\ref{whenupwardsclosedandmodular} shows that this restriction implies every \BG-path to be a chain.

\begin{lemma}\label{restriction}
Each path $P$ in $S^R_l$ having properties~\ref{bgpathdefinition}.\ref{bgpathdefinition1} and~\ref{bgpathdefinition}.\ref{bgpathdefinition2} is a \BG-path. If $P$ is additionally a chain of type~$2a$ or $3a$,~\ref{R1} and~\ref{R2} are preserved.
\end{lemma}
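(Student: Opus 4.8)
The statement has two parts. First I would show that any path $P = x \rightarrow_G y$ satisfying~\ref{bgpathdefinition}.\ref{bgpathdefinition1} and~\ref{bgpathdefinition}.\ref{bgpathdefinition2} automatically satisfies~\ref{bgpathdefinition}.\ref{bgpathdefinition3}, hence is a \BG-path; and second, that if $P$ is in addition a chain of type $2a$ or $3a$, then $S^R_{l+1} := S^R_l \cup P$ still obeys~\ref{R1} and~\ref{R2}.

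For the first part I would assume that~\ref{bgpathdefinition}.\ref{bgpathdefinition3} fails, i.e., $x$ is an inner vertex of a link $L_x$, $y$ an inner vertex of a link $L_y$, $|V_{real}(S^R_l)| \geq 4$, and $L_x$ and $L_y$ are parallel. Property~\ref{bgpathdefinition}.\ref{bgpathdefinition2} immediately forces $L_x \neq L_y$, since a single link containing both $x$ and $y$ would have to contain them as end vertices. As $S^R_l$ is upwards-closed and modular, every link is a subpath of one chain, and since each chain $\neq C_0$ carries exactly one backedge located at its start, every link is either built only from tree edges of $T$ or contains that backedge incident to its shallower (ancestor) end vertex. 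Restriction~\ref{R2} says a tree-edge link has no parallel link; the exception ``$C_0$ in $S_3$'' does not apply because $|V_{real}(S_3)| = 2 < 4$. Hence both $L_x$ and $L_y$ contain a backedge, so each equals a backedge followed by a tree path up to the next real vertex; parallel links have the same end vertices, which are therefore tree-comparable, so both links have the same shallow end $u$ and deep end $v$ and, $G$ being simple, are fed by two distinct backedges of $u$. A short ancestor--descendant argument (two distinct links cannot share an inner vertex, so the two tree paths meet only in $v$) then shows that $L_x \cup L_y$ is a cycle through $u$ and $v$.

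The crux is ruling this last configuration out, and this is where the global force of~\ref{R1} enters: in a restricted subdivision the chains are added in accordance with the chain tree $U$ and the caterpillar/interval procedure of Section~\ref{algorithm}, and I would argue (by induction along the construction sequence, or through one of the preparatory lemmas) that $S^R_l$ cannot contain two parallel links both of which still carry an inner vertex reachable from the inner vertex of the other by such a $P$; intuitively, the second of two such parallel chains is added only once the shared deep end $v$ has become real, at which moment it joins two real vertices and hence is absorbed as a parallel edge rather than a link with an inner vertex, contradicting that $y$ is inner to it. This contradiction establishes~\ref{bgpathdefinition}.\ref{bgpathdefinition3}, proving the first part.

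For the second part,~\ref{R1} is immediate: $P$ is a \BG-path by the first part, it is a chain of type $2a$ or $3a$ --- both permitted as single-chain extensions by~\ref{R1} --- and Lemma~\ref{whenupwardsclosedandmodular} shows that a \BG-path which is a chain keeps $S^R_{l+1}$ upwards-closed and modular. For~\ref{R2} I would check that neither operation creates a tree-edge link with a fresh parallel partner. A type-$2a$ chain is the single backedge $s(P)t(P)$; it merely splits the link of its parent containing $t(P)$ into two sub-links at $t(P)$, and by~\ref{bgpathdefinition}.\ref{bgpathdefinition2} the vertex $s(P)$ is not an end vertex of that link, so the only new link at $t(P)$ runs to a vertex different from both ends of the split link, whence no sub-link gains a parallel link, while the new link is itself not a tree-edge link. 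A type-$3a$ chain attaches a backedge at an inner vertex of a link and a tree path ending at a vertex already in $S^R_l$; its tree-edge sub-links lie inside tree-edge links of $S^R_l$, which have no parallels by~\ref{R2}, and its backedge-link is not a tree-edge link, so~\ref{R2} is again preserved. The type-$3a$ case also follows directly from Lemma~\ref{structural3a}.
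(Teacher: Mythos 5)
Your reduction to the case where both parallel links $L_x$ and $L_y$ contain a backedge is fine, and is even a legitimate shortcut: since~\ref{R2} holds for $S^R_l$ and the exception ``$C_0$ in $S_3$'' is excluded by $|V_{real}(S_l)| \geq 4$, neither link can be a pure tree-edge link. The gap is at what you yourself call the crux. You propose to exclude the remaining configuration by showing that $S^R_l$ cannot contain two parallel backedge-links joined by such a $P$, to be argued ``by induction along the construction sequence,'' with the intuition that the second such chain is added only after the shared deep end has become real and is then ``absorbed as a parallel edge rather than a link with an inner vertex.'' This step is not carried out, and the intuition is wrong on two counts: a chain added between two already-real vertices still becomes a link with non-real inner vertices whenever it has length greater than one, and nothing in~\ref{R1} or~\ref{R2} forbids two parallel links that both contain backedges --- restriction~\ref{R2} only excludes parallels to tree-edge links, and \BG-operations explicitly tolerate parallel links. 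So the configuration you are trying to exclude can genuinely occur in a restricted subdivision, and no induction along the construction sequence will remove it.

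What actually closes the argument is local and uses the DFS structure rather than the history of the construction. If a link $Q$ parallel to $R$ contains a backedge, then $Q$ consists of that backedge followed by a tree path rising to its deep end $v$, so all inner vertices of $Q$ lie in the subtree $T(c)$ for the child $c$ of $v$ on that path. Every edge entering $T(c)$ from outside --- the tree edge $vc$ and every backedge into $T(c)$, whose tail is an ancestor of $c$ --- has its outer endpoint in $S_l$, because $v \in S_l$ and $S_l$ is upwards-closed. Hence a path $P$ with $P \cap S_l = \{x,y\}$ and $x \notin T(c)$ cannot reach an inner vertex of $Q$; property~\ref{bgpathdefinition}.\ref{bgpathdefinition1} alone kills the configuration. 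It follows that $Q$ is backedge-free, hence is the tree path between the ends of $R$, and~\ref{R2} then forces $S_l = S_3$, contradicting $|V_{real}(S_l)| \geq 4$. Your second part --- preservation of~\ref{R1} via Lemma~\ref{whenupwardsclosedandmodular} and the check that no tree-edge link acquires a parallel partner --- is essentially the paper's argument and is fine.
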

\begin{proof}
For the first claim, assume to the contrary that $P$ violates property~\ref{bgpathdefinition}.\ref{bgpathdefinition3}. Then $|V_{real}(S_l)| \geq 4$ must hold and $S_l \neq S_3$ follows. Let $R$ and $Q$ be the parallel links of $S_l$ that contain the end vertices of $P$ as inner vertices, respectively. At least one of them, say $R$, contains a backedge, since otherwise $T$ would contain a cycle. Let $C_i \neq C_0$ be the chain in $S_l$ that contains $R$. Since $C_i$ contains exactly one backedge, $s(C_i)$ is an end vertex of $R$. If $R \subset C_i$, $Q$ must contain a backedge, as $t(C_i)$ is an inner real vertex of $t(R) \rightarrow_T s(R)$.
In that case, all inner vertices of $Q$ lie in a subtree of $T$ that cannot be reached by $P$ due to property~\ref{bgpathdefinition}.\ref{bgpathdefinition1} and $S_l$ being upwards-closed. Thus, $R = C_i$ and with the same argument $Q = t(C_i) \rightarrow_T s(C_i)$ holds. With~\ref{R2}, $S_l$ must be $S_3$ and $Q = C_0$, which contradicts our assumption.

\begin{figure}
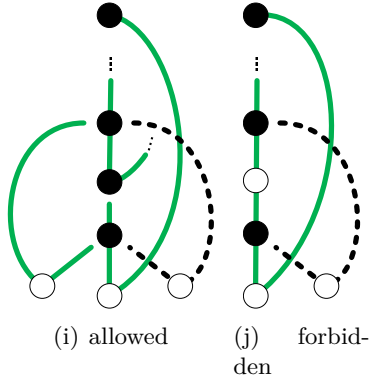

	\centering
	\subfigure[allowed]{
		\includegraphics[scale=0.65]{./figures/Restriction2Allowed}
		\label{fig:Restriction2Allowed}
	}
	\subfigure[forbidden]{
		\includegraphics[scale=0.65]{./figures/Restriction2Forbidden}
		\label{fig:Restriction2Forbidden}
	}
	\caption{The effect of restriction~\ref{R2} on the dashed \BG-path.}
	\label{fig:Restriction2}
\end{figure}

For the second claim, each chain $C_i$ of type~$2$ or $3$ has by definition an inner real vertex in $t(C_i) \rightarrow_T s(C_i)$ and therefore preserves~\ref{R2}. If $C_i$ is of type~$2a$ or $3a$, \ref{R1} is preserved as well, as $S_{l+1}$ is upwards-closed and modular with Lemma~\ref{whenupwardsclosedandmodular}.
\end{proof}

\begin{figure}
	\centering
	\includegraphics[scale=0.65]{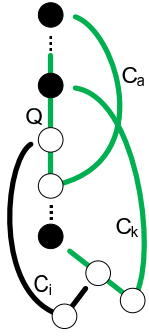}
	\caption{A chain $C_i \not \subseteq S_l$ of type~$3$.}
	\label{fig:type3chain}
\end{figure}

We show that chains of type~$3a$ help to find \BG-paths efficiently (\textbf{Lemma~10}) as part of the following Lemma.

\begin{lemma}[aka \textbf{Lemma~10}]\label{structural}
Let $C_k$ be the parent of a chain $C_i \neq C_0$.
\begin{itemize}
	\item[--] If $C_i$ is not of type~$1$, $C_k \neq C_0$ and $t(C_i)$ is an inner vertex of $C_k$.
	\item[--] Let $C_k$ but not $C_i$ be contained in $S^R_l$. If $C_i$ is either of type~$1$ with an inner real vertex in $t(C_i) \rightarrow_T s(C_i)$ or of type~$3a$, $C_i$ is a \BG-path for $S^R_l$ preserving~\ref{R1} and~\ref{R2}.
\end{itemize}
\end{lemma}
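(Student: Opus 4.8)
The plan is to prove the two bullet points of Lemma~\ref{structural} separately, using the chain decomposition structure and the previously established Lemmas~\ref{whenupwardsclosedandmodular} and~\ref{restriction}.

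\medskip

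\textbf{First bullet.} Suppose $C_i \neq C_0$ is not of type~$1$. I would argue directly from the definition of type~$1$ in Algorithm~\ref{alg:classify}. If $C_k = C_0$, then the type~$1$ test checks whether $s(C_i)$ and $t(C_i)$ lie in $C_0$; since $C_0 = x \rightarrow_T r$ is a tree path, and $s(C_i)$ is always a proper ancestor of $t(C_i)$, one needs to observe that $t(C_i) \rightarrow_T s(C_i)$ is then automatically inside $C_0$ (all of $r \rightarrow_T t(C_i)$ up to $x$ lies on the trunk), so not being type~$1$ forces $C_k \neq C_0$. Given $C_k \neq C_0$, by definition the parent $C_k$ contains the tree edge from $t(C_i)$ to its parent in $T$; I must rule out that $t(C_i)$ is an \emph{end} vertex of $C_k$ rather than an inner one. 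If $t(C_i) = s(C_k)$, then $s(C_k)$ would be a descendant of $t(C_i)$'s parent, but $s(C_k)$ is a proper ancestor of $t(C_k)$ and the chain ordering together with the way chains are traced upward to an already-covered vertex forces $t(C_i)$ to be hit in the interior of $C_k$; if $t(C_i) = t(C_k)$, then $C_i$ and $C_k$ would share the backedge endpoint region inconsistently with $C_i$ having its own backedge strictly below. So $t(C_i)$ is an inner vertex of $C_k$. (This is essentially the content already asserted in the type~$2$/type~$3$ comments of Algorithm~\ref{alg:classify}, so the argument is short.)

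\medskip

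\textbf{Second bullet.} Now assume $C_k \subseteq S^R_l$ but $C_i \not\subseteq S^R_l$, and $C_i$ is either type~$1$ with an inner real vertex in $t(C_i) \rightarrow_T s(C_i)$, or type~$3a$. I want to apply Lemma~\ref{restriction}: it suffices to check that $C_i$, viewed as a path $s(C_i) \rightarrow_G t(C_i)$, satisfies properties~\ref{bgpathdefinition}.\ref{bgpathdefinition1} and~\ref{bgpathdefinition}.\ref{bgpathdefinition2}, and that $C_i$ is a chain of type~$2a$, $3a$, \emph{or}---handled separately---type~$1$ not violating~\ref{R2}. For property~\ref{bgpathdefinition1}: since $S^R_l$ is modular and upwards-closed and $C_i$ is a chain whose only non-tree edge is its single backedge, the interior vertices of $C_i$ are proper descendants not yet in any chain of $S^R_l$; the only shared points with $S^R_l$ are $s(C_i)$ and $t(C_i)$, which lie on $C_k \subseteq S^R_l$. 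For property~\ref{bgpathdefinition2}: if a link $L$ of $S^R_l$ contains both $s(C_i)$ and $t(C_i)$, I must show they are its end vertices; for type~$3a$, $s(C_i) \neq s(C_k)$ and $s(C_i)$ lies on some ancestor chain $C_i$ starts at, while $t(C_i)$ is interior to $C_k$, so they cannot be interior to a common link---and if $L \subseteq C_k$ contained $t(C_i)$ as an interior point and $s(C_i)$ too, that would force $s(C_i)$ on $C_k$, contradicting $s(C_i)\neq s(C_k)$ unless $s(C_i)=t(C_k)$, which is the type~$2$ case, not type~$3$. For type~$1$, $t(C_i) \rightarrow_T s(C_i) \subseteq C_k$ by definition, so both endpoints are on $C_k$, but the hypothesis that this tree path has an inner real vertex means $s(C_i)$ and $t(C_i)$ cannot be interior to the same link. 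Thus property~\ref{bgpathdefinition2} holds, and by Lemma~\ref{restriction}'s first claim $C_i$ is a \BG-path. Finally, \ref{R1} is preserved because $C_i$ is a single chain and, by Lemma~\ref{whenupwardsclosedandmodular}, adding a chain to an upwards-closed modular subdivision keeps it upwards-closed and modular; \ref{R2} is preserved because the existence of an inner real vertex in $t(C_i) \rightarrow_T s(C_i)$ (guaranteed by hypothesis for type~$1$, and by the definition of type~$3$ otherwise) prevents the new tree-only link from being parallel to an existing one, exactly as in the second claim of Lemma~\ref{restriction}.

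\medskip

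\textbf{Main obstacle.} The delicate part is property~\ref{bgpathdefinition}.\ref{bgpathdefinition2} in the type~$3a$ case: I have to genuinely use that $C_i$ is type~$3$ (not type~$2$), i.e.\ $s(C_i) \neq s(C_k)$, to exclude the configuration where both endpoints of $C_i$ are interior to a single link, and I have to use~\ref{R2} together with upwards-closedness to rule out a ``parallel return'' of $C_i$'s subtree---this is where the argument mirrors the parallel-links analysis in the proof of Lemma~\ref{restriction} and must be stitched to it carefully rather than re-proved. Everything else is bookkeeping with the chain/DFS structure.
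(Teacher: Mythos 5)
Your overall strategy coincides with the paper's: the first bullet is proved by the same contradiction (if $C_k = C_0$ then $t(C_i) \in C_0$ forces $s(C_i) \in C_0$, hence type~$1$), followed by the observation that every chain other than $C_0$ begins with a backedge and ends where its tree traversal stops, so neither end vertex of $C_k$ can own the tree edge down to $t(C_i)$; the second bullet is, as in the paper, a verification of properties~\ref{bgpathdefinition}.\ref{bgpathdefinition1} and~\ref{bgpathdefinition}.\ref{bgpathdefinition2} followed by an appeal to Lemma~\ref{restriction} and Lemma~\ref{whenupwardsclosedandmodular}. The type-$1$ half of the second bullet is essentially the paper's argument.

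The gap is in the type-$3a$ half, exactly where you flag the ``main obstacle.'' You never pin down where $s(C_i)$ sits, and two supporting claims are off: $s(C_i)$ does \emph{not} lie on $C_k$ (so your justification of property~\ref{bgpathdefinition}.\ref{bgpathdefinition1} for that endpoint is wrong, though the conclusion survives via upwards-closedness), and $s(C_i)=t(C_k)$ would make $C_i$ type~$1$, not type~$2$. The fact the paper derives and you need is: since $s(C_i)\neq s(C_k)$, $C_i$ is not of type~$1$, and $C_k < C_i$ (chains are generated in increasing DFI-order of their start vertices, so $s(C_i)$ cannot be a proper ancestor of $s(C_k)$), the vertex $s(C_i)$ must be an \emph{inner vertex of the tree path} $t(C_k) \rightarrow_T s(C_k)$. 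This places the real vertex $t(C_k)$ strictly between $t(C_i)$ and $s(C_i)$ on $t(C_i) \rightarrow_T s(C_i)$, which is what you need both for property~\ref{bgpathdefinition}.\ref{bgpathdefinition2} --- a link of $S^R_l$ is a maximal path between real vertices and is not confined to a single chain, so your case analysis ``$L \subseteq C_k$'' does not cover all links containing both end vertices --- and for the preservation of~\ref{R2} at the end, where you assert the inner real vertex is ``guaranteed by the definition of type~$3$''; it is a consequence of this positional fact, not of the definition. With that fact supplied, your argument closes and matches the paper's.
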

\begin{proof}
Assume to the contrary that $C_i$ is not of type~$1$ and $C_k = C_0$. Because $t(C_i)$ is contained in $C_0$, $s(C_i)$ must be in $C_0$ as well. But then $C_i$ would be of type~$1$, since $t(C_i) \rightarrow_T s(C_i) \subseteq C_0$. Therefore, if $C_i$ is not of type~$1$, $C_k \neq C_0$ holds and $C_k$ must start with a backedge. Then the definition of the parent relation implies that $t(C_i)$ is an inner vertex of $C_k$.

For the second claim, let $C_i$ first be of type~$3a$. Since $S_l$ is upwards-closed, modular and contains $C_k$, $C_i$ satisfies the property~\ref{bgpathdefinition}.\ref{bgpathdefinition1} of \BG-paths. In addition, $s(C_i) \neq s(C_k)$ holds by definition and with $C_k < C_i$, $s(C_i)$ must be an inner vertex of the path $t(C_k) \rightarrow_T s(C_k)$ (see Figure~\ref{fig:type3chain}). Therefore, the only chains $C_j$ that contain $s(C_i)$ and $t(C_i)$ are different from $C_0$ and fulfill $C_i \cap C_j = \{s(C_i),t(C_i)\} = \{s(C_j),t(C_j)\}$. This implies $C_i$ having property~\ref{bgpathdefinition}.\ref{bgpathdefinition2}. Using Lemma~\ref{restriction}, $C_i$ is a \BG-path for $S_l$ that preserves~\ref{R1} and~\ref{R2}.

If $C_i$ is of type~$1$, property~\ref{bgpathdefinition}.\ref{bgpathdefinition1} follows from the same argument as before. Additionally, the inner real vertex in $t(C_i) \rightarrow_T s(C_i)$ prevents any link containing $s(C_i)$ and $t(C_i)$ from having $s(C_i)$ or $t(C_i)$ as an inner vertex and therefore ensures property~\ref{bgpathdefinition}.\ref{bgpathdefinition2}. Lemma~\ref{restriction} implies that $C_i$ is a \BG-path for $S_l$ and $C_i$ must preserve~\ref{R1} and~\ref{R2}, the latter due to the inner real vertex in $t(C_i) \rightarrow_T s(C_i)$.
\end{proof}

\begin{proposition}\label{caterpillar3b}
Every caterpillar $L_j$ consists of exactly one chain of type~$3b$, namely the chain $C_j$, and one or more chains of type~$2b$.
\end{proposition}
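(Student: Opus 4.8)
The plan is to read the claim straight off the definition of a caterpillar as produced by Algorithm~\ref{alg:classify}. Recall that a caterpillar $L_j$ is the list created in line~\ref{assign} when some chain $C_j$ is assigned type~$3b$; at that moment $L_j=\{C_j\}$ and then the while-loop in line~\ref{unmarking} walks up the parent chain, appending chains as long as they are marked. So the proof has exactly two halves: (i) $C_j$ is the unique type~$3b$ member, and (ii) every other member is of type~$2b$.

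For half~(ii) I would argue as follows. The only place a chain gets \emph{marked} in Algorithm~\ref{alg:classify} is in the type~$2b$ branch (line where type~$2b$ is assigned). Hence every chain $C_{j'}$ that the while-loop appends to $L_j$ was marked at the time it was appended, and therefore was assigned type~$2b$ when it was classified. Moreover each such chain is unmarked again the first time it is appended (line~\ref{unmarking}), so it cannot later be re-appended to this or any other caterpillar — this also shows the caterpillars partition the type~$2b$ chains, though for the proposition we only need that the appended chains have type~$2b$. Thus all members of $L_j$ other than $C_j$ are type~$2b$ chains.

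For half~(i): $C_j$ is by construction type~$3b$, and by~(ii) no other member is type~$3b$, so $C_j$ is the unique type~$3b$ chain in $L_j$. It remains to check that $L_j$ really does contain at least one type~$2b$ chain, i.e.\ that the while-loop executes at least once. But the type~$3b$ branch is entered precisely when the parent $C_k$ of $C_i$ \emph{is marked}; setting $C_j:=C_k$ in line~\ref{assign}, the loop condition ``$C_j$ is marked'' holds on the first test, so at least one type~$2b$ chain (namely $C_k$) is appended. Hence ``one or more chains of type~$2b$'' is justified.

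The only genuinely delicate point — and the one I would spend the most care on — is making sure the ``marked/unmarked'' bookkeeping is globally consistent across the whole run of the algorithm: a chain could in principle be marked as type~$2b$, later unmarked by some caterpillar-building loop, and the question is whether it could ever be marked a second time. Since marking happens only at classification time and each chain is classified exactly once, this cannot happen; so each chain is marked at most once and unmarked at most once, which is exactly what is needed to guarantee that the members of $L_j$ are a well-defined set of distinct chains with the stated types. I would state this invariant explicitly at the start of the proof and then let (i) and (ii) follow as above.
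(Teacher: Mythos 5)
Your proof is correct and matches the paper's treatment: the paper states this proposition without a proof, as it is meant to be read directly off Algorithm~\ref{alg:classify}, and your inspection of the marking discipline (marking occurs only in the type-$2b$ branch, the while-loop appends only marked chains, the loop body is entered at least once because the type-$3b$ branch requires the parent to be marked, and each chain is classified hence marked and unmarked at most once) is exactly the intended justification. One caution on a side remark: your parenthetical claim that this "also shows the caterpillars partition the type~$2b$ chains" overstates what you have proved --- your argument gives only that no type-$2b$ chain lies in two caterpillars, whereas the covering direction (every type-$2b$ chain is eventually unmarked into some caterpillar) is the genuinely non-trivial content of Lemma~\ref{caterpillarPartition} and requires the $3$-connectivity argument given there; since you explicitly do not use this for the proposition, it does not affect your proof.
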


\begin{lemma}\label{caterpillarPartition}
$C \setminus \{C_0\}$ is partitioned into single chains of types $1$, $2a$, $3a$ and the chains being contained in caterpillars. Moreover, no chain is contained in two caterpillars.
\end{lemma}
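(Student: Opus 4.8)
The plan is to prove both statements by tracking exactly how Algorithm~\ref{alg:classify} touches each chain. The key observation is that each chain $C_i \in C \setminus \{C_0\}$ receives exactly one of the types $1$, $2a$, $2b$, $3a$, $3b$ during the single call $\mathrm{classify}(C_i,T)$, since the branches of the algorithm are mutually exclusive and exhaustive. So the set $C \setminus \{C_0\}$ is partitioned a priori into five type classes; what remains is to argue that the chains of type $2b$ and $3b$ are exactly the chains appearing in caterpillars, that each type-$3b$ chain lies in precisely one caterpillar, and that each type-$2b$ chain lies in precisely one caterpillar.

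First I would handle the type-$3b$ chains. By construction (line~\ref{assign}), every type-$3b$ chain $C_j$ creates its own caterpillar $L_j$ with $C_j$ as its minimal-with-respect-to-$<$ member (all other members, collected in the while-loop of line~\ref{unmarking}, are ancestors of $C_j$ in $U$ and hence smaller in $<$, while $C_j$ itself has smaller index than any chain that could later re-enter the loop). Thus distinct type-$3b$ chains give distinct caterpillars, and conversely by Proposition~\ref{caterpillar3b} every caterpillar contains exactly one type-$3b$ chain. Hence the map $C_j \mapsto L_j$ is a bijection between type-$3b$ chains and caterpillars, and each type-$3b$ chain is in exactly one caterpillar.

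Next I would handle the type-$2b$ chains, which is where the real work lies. When $C_i$ is assigned type $2b$ it is marked (the ``$2b$'' branch). The only place a mark is removed is line~\ref{unmarking}, inside the while-loop that builds some caterpillar $L_j$; at that moment $C_i$ is appended to $L_j$. So I must show two things: (i) every marked chain is eventually unmarked, i.e.\ every type-$2b$ chain is eventually swept into some caterpillar; and (ii) no chain is unmarked twice, i.e.\ no type-$2b$ chain ends up in two caterpillars. Claim (ii) is immediate: a chain is marked at most once (only in its own $\mathrm{classify}$ call), hence can be unmarked at most once, hence appended to at most one caterpillar. For claim (i), I would argue by the order in which the chains are classified: the chains are processed in increasing index order, a type-$2b$ chain $C_i$ has parent $C_k$ with $C_k < C_i$, and the caterpillar-forming while-loop of a later type-$3b$ chain walks up the parent chain as long as it stays marked. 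The point to nail down is that every marked chain does get reached by exactly one such upward walk before it could be ``blocked'' — this follows because the while-loop stops precisely at the first unmarked parent, and marks are only consumed, never regenerated, so the marked chains at any moment form a disjoint union of upward paths in $U$, each of which is consumed in full by the next type-$3b$ chain hanging below it. Since the algorithm is assumed to run to completion and by Proposition~\ref{caterpillar3b} each caterpillar is nonempty with a $3b$ chain at the bottom, every mark is eventually consumed. Combining: the type-$2b$ chains are exactly the non-minimal members of caterpillars, each lying in exactly one caterpillar, and together with the bijection for $3b$ chains and the a-priori five-class partition, we get that $C \setminus \{C_0\}$ is partitioned into the single chains of types $1$, $2a$, $3a$ and the members of caterpillars, with no chain in two caterpillars.

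I expect the main obstacle to be making precise the claim that the marked chains always form a disjoint family of root-directed paths in $U$ each terminating in a type-$3b$ chain, so that no mark is ``orphaned''. A clean way to phrase this is an invariant maintained after each $\mathrm{classify}$ call: the currently marked chains are exactly those type-$2b$ chains whose $U$-ancestor path up to (but not including) the nearest already-classified-as-$1$/$2a$/$3a$ chain contains no type-$3b$ chain yet classified — equivalently, the not-yet-absorbed prefix of some future caterpillar. Verifying that each branch of Algorithm~\ref{alg:classify} preserves this invariant is routine but is the crux of the argument.
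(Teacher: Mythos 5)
There is a genuine gap, and you have in fact located it yourself without closing it. Your claim (ii) (no chain is unmarked twice, hence in at most one caterpillar) matches the paper and is fine, as is the bijection between type-$3b$ chains and caterpillars. But claim (i) -- that every marked (type-$2b$) chain is eventually swept into a caterpillar -- is exactly the hard part, and your argument for it is circular: the proposed invariant describes the marked chains as ``the not-yet-absorbed prefix of some future caterpillar,'' which presupposes that a suitable type-$3b$ chain will later appear below every marked chain. Nothing in the mechanics of Algorithm~\ref{alg:classify} guarantees this; the algorithm happily runs to completion leaving a mark orphaned, which would make the lemma false. The missing ingredient is the $3$-connectivity of $G$, which your proposal never invokes. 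The paper's proof proceeds as follows: for a type-$2b$ chain $C_i$ with parent $C_k$, let $x$ be the penultimate vertex of $C_i$ (the last edge of $C_i$ is a tree edge since $C_i$ is not a backedge). Some backedge $vw$ with $v \notin \{s(C_i),t(C_i)\}$ must enter $T(x)$, for otherwise $\{s(C_i),t(C_i)\}$ would be a separation pair. Taking $C_j$ to be the minimal chain containing such a backedge, one checks that $C_j$ is not of type $1$ or $2$, that at the moment $C_j$ is classified its parent and every chain on the path $C_j \rightarrow_U C_i$ is still marked, hence $C_j$ is of type $3b$ and its while-loop walks up through $C_i$ and unmarks it. Without this existence argument your ``routine verification of the invariant'' cannot go through.

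A secondary slip: you call $C_j$ the \emph{minimal} member of $L_j$ with respect to $<$ while simultaneously asserting that all other members are its ancestors in $U$ and hence smaller; these statements contradict each other. In fact $C_j$ is the \emph{maximal} chain of $L_j$ (its ancestors in $U$ are smaller), and the paper's ``minimal chain in $L_j$'' refers to the type-$2b$ chain at the top of the caterpillar. This does not break your bijection argument, but it would propagate errors if carried into the proofs of Lemmas~\ref{AddCaterpillar} and~\ref{3a3b}.
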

\begin{proof}
With Proposition~\ref{caterpillar3b}, it remains to show that every chain $C_i$ of type~$2b$ or $3b$ is contained in exactly one caterpillar. If $C_i$ is of type~$3b$, $C_i$ is part of the caterpillar $L_i$ (see Algorithm~\ref{alg:classify}, line~\ref{assign}) and will not be assigned to a second caterpillar afterwards, as it is not marked. Otherwise, $C_i$ is of type~$2b$ and was therefore marked. We show that, after all chains in $C$ have been classified, $C_i$ is not marked anymore. This forces $C_i$ to be contained in exactly one caterpillar, as the only way to unmark chains is to assign them to a caterpillar (see Algorithm~\ref{alg:classify}, line~\ref{unmarking}) and no chain is marked twice.

Let $C_k$ be the parent of $C_i$. Because $C_i$ is of type~$2b$, $s(C_i)=s(C_k)$ holds and $C_i$ is not a backedge, implying that the last edge $e$ of $C_i$ is in $T$. Let $x$ be the end vertex of $e$ different from $t(C_i)$. Using Lemma~\ref{structural}, $C_k \neq C_0$ holds and $t(C_i)$ is an inner vertex of $C_k$. Then at least one backedge $vw$ with $v \notin \{s(C_i),t(C_i)\}$ must enter $T(x)$, since otherwise $s(C_i)$ and $t(C_i)$ would be a separation pair of $G$. Let $C_j$ be the minimal chain with respect to $<$ that contains such a backedge.

As $C_j > C_i$ holds
, the vertex $v$ is an inner vertex of $t(C_i) \rightarrow_T s(C_i)$, implying that $C_j$ is not of type~$2$. In addition, $C_j$ is not of type~$1$, since $t(C_j) \rightarrow_T v$ contains edges from $C_i$ and $C_k$. At the time $C_j$ is found in the chain decomposition, every chain that already ends at a vertex in $T(x)$ starts at $s(C_i)$ and is therefore of type~$2a$ or $2b$. Since chains that are backedges cannot have children, the parent of $C_j$ is marked and $C_j$ is of type~$3b$. Moreover, every chain corresponding to an inner vertex of the path $C_j \rightarrow_U C_i$ is marked. This concludes $C_i$ to become unmarked due to line~\ref{unmarking} of Algorithm~\ref{alg:classify}.
\end{proof}

The following gives the detailed proof of \textbf{Lemma~12}.

\begin{lemma}[aka \textbf{Lemma~12}]
Let $L_j$ be a caterpillar that consists of $t$ chains and has parent $C_k$. Let $C_k$ but no chain in $L_j$ be contained in $S^R_l$. Then, if and only if $L_j$ is good, the chains in $L_j$ can be efficiently decomposed into $t$ successive \BG-paths satisfying~\ref{R1} and creating subdivisions $S_{l+1},S_{l+2},\ldots,S_{l+t-1},S^R_{l+t}$, each of which satisfies~\ref{R2}.
\end{lemma}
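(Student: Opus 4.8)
The statement is an ``if and only if'' together with a constructive decomposition, so I would split it accordingly. The hard direction and the substantive content is the ``if'': assuming $L_j$ is good, exhibit $t$ successive \BG-paths. I would use exactly the construction already sketched in the proof of Lemma~\ref{AddCaterpillar}, now verifying each property carefully. Let $C_j$ be the type~$3b$ chain of $L_j$ (Proposition~\ref{caterpillar3b}), let $C_h$ be its parent and let $y$ be $t(C_{\min})$ for $C_{\min}$ the $<$-minimal chain of $L_j$; note $L_j$ has parent $C_k = \mathit{parent}(C_{\min})$. By Lemma~\ref{caterpillarPartition} and the definition of a caterpillar, the chains of $L_j\setminus\{C_j\}$ are all of type~$2b$ and, by the classification rule, each of them has $t(\cdot)$ an inner vertex of the previous chain in the $U$-path from $C_j$ down to $C_{\min}$, while all of them share the start vertex with their parent. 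I would first record these structural facts, so that the order in which the $t$ chains are peeled off is forced: process $C_j$ first (as a \BG-path $P$), then the others in increasing $<$-order, each as $C_i\setminus P'$ where $P'$ is the part of the current subdivision it attaches to.

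Then I would split into the two cases from the Lemma~\ref{AddCaterpillar} sketch. \emph{Case A:} $C_j$ as a \BG-path is allowed, i.e.\ $P := C_j \cup (t(C_j)\to_T y)$ has properties~\ref{bgpathdefinition}.\ref{bgpathdefinition1}--\ref{bgpathdefinition}.\ref{bgpathdefinition3} for $S^R_l$. Here I invoke Lemma~\ref{restriction}: property~\ref{bgpathdefinition1} holds because $S^R_l\supseteq C_k$ is upwards-closed and modular and $P$ only touches $C_k$ at $s(C_j)$ (after $C_k$) and at $y\in C_k$; property~\ref{bgpathdefinition2} holds because the detour through $t(C_j)\to_T y$ puts a real vertex, namely $t(C_j)$ itself once later chains attach, in the interior---more precisely I argue directly that no link carrying both endpoints has them as inner vertices. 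After adding $P$, each remaining type~$2b$ chain $C_i$ meets $S$ exactly in its two endpoints (its start is $s(C_k)$, already present; its other endpoint $t(C_i)$ lies on the previously added chain), so $C_i\setminus S$ is again a \BG-path by Lemma~\ref{restriction}, and each such step keeps the subdivision upwards-closed and modular, so~\ref{R1} holds; the inner real vertex created on $t(C_i)\to_T s(C_i)$ keeps~\ref{R2}. Only the \emph{final} subdivision $S_{l+t}$ needs to be upwards-closed and modular (the intermediate ones are explicitly allowed to fail this by~\ref{R1}); I verify that once all $t$ chains are in, every vertex again has its parent edge present. \emph{Case B:} if the $C_j$-first move is not directly a \BG-path (this is where ``good'' is used), then goodness of $L_j$ says $s(C_j)$ is \emph{not} a descendant of $t(C_k)$ with $s(C_k)\to_{C_k}s(C_j)$ avoiding inner real vertices; translating, $s(C_k)\to_{C_k}s(C_{\min})$ does contain an inner real vertex (or $s(C_j)$ is not below $t(C_k)$ at all), which is exactly what is needed so that the alternative path $(C_j\cup C_h)\setminus((t(C_j)\to_T y)\setminus\{t(C_j)\})$ satisfies~\ref{bgpathdefinition2}; then proceed as before with $t(C_j)\to_T y$ and the $C_i\setminus(t(C_j)\to_T y)$.

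For the converse, suppose $L_j$ is bad, i.e.\ $s(C_j)$ is a descendant of $t(C_k)$ and $s(C_k)\to_{C_k}s(C_j)$ has no inner real vertex. I would show no valid decomposition exists: the very first \BG-path we add from $L_j$ must start at a vertex of $C_{\min}$ (everything in $L_j$ attaches inside $C_k$, and by~\ref{R1} proper ancestors of the involved chains are already in $S^R_l$), and both candidate first moves are ruled out---the subchain $s(C_k)\to_{C_k}s(C_{\min})$ is then a link of the new subdivision that is parallel to the detour link and carries no real vertex, so property~\ref{bgpathdefinition}.\ref{bgpathdefinition3} (with $|V_{real}|\ge 4$) or restriction~\ref{R2} is violated; and since all of $L_j$ lies between $s(C_k)$ and $y$ along $C_k$, no other first move is available. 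This mirrors the ``forbidden'' configuration in Figure~\ref{fig:Restriction2Forbidden}.

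\textbf{Main obstacle.} The routine parts are the repeated applications of Lemma~\ref{restriction} to the type~$2b$ tail. The delicate point is the \emph{first} step: correctly formalizing why ``good'' is exactly the dichotomy between Case~A and Case~B, and checking property~\ref{bgpathdefinition}.\ref{bgpathdefinition2}/\ref{bgpathdefinition3} for the detour path when $|V_{real}(S^R_l)|$ may already be $\ge 4$---here I must track precisely which links of $S^R_l$ contain $s(C_j)$ and $y$, using that $S^R_l$ is upwards-closed so inner vertices of parallel candidate links lie in subtrees unreachable by the new path, as in the proof of Lemma~\ref{restriction}. The converse direction's case analysis (showing \emph{no} decomposition works, not just that the natural one fails) is the other place where care is needed, and I would lean on restriction~\ref{R1} to pin down that the first added chain of $L_j$ has no freedom in where it may start.
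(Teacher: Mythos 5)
Your ``if'' direction follows the paper's proof: the same two first moves ($P = C_j \cup (t(C_j) \rightarrow_T y)$ when $s(C_j)$ is a proper ancestor of $t(C_k)$, and $(C_j \cup C_h)\setminus((t(C_j)\rightarrow_T y)\setminus\{t(C_j)\})$ otherwise), followed by peeling off the type~$2b$ chains via $C_i \setminus P$. Two local issues there. First, your justification of property~\ref{bgpathdefinition}.\ref{bgpathdefinition2} in Case~A via ``$t(C_j)$ itself once later chains attach'' is circular -- the property must hold for $S^R_l$ at the moment $P$ is added; you do flag that you would argue it directly, which is what the paper does. Second, in the descendant case the goodness hypothesis only hands you a real vertex $a$ strictly between $s(C_j)$ and $s(C_k)$ on $C_k$; the alternative first path also needs an inner real vertex $b$ on $t(C_k)\rightarrow_T s(C_k)$, and the paper has to derive its existence separately (if it were missing, $C_k$ would be a type~$1$ chain whose earlier addition already violated~\ref{R2} unless $S_l = S_3$, which is excluded because $a$ exists). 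Your sketch skips this step entirely. Also note the recurring slip $s(C_{\min})$ versus $s(C_j)$: every type~$2b$ chain of the caterpillar starts at $s(C_k)$, so $s(C_k)\rightarrow_{C_k} s(C_{\min})$ is degenerate; badness is a condition on $s(C_j)$, the start of the type~$3b$ chain.

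The genuine gap is in the ``only if'' direction. You claim the \emph{first} \BG-path out of a bad caterpillar is already ruled out because it would create a link parallel to $s(C_k)\rightarrow_{C_k} s(C_j)$ and thereby violate property~\ref{bgpathdefinition}.\ref{bgpathdefinition3} or~\ref{R2}. Neither holds: property~\ref{bgpathdefinition}.\ref{bgpathdefinition3} constrains the \emph{endpoints of the path being added}, not the resulting subdivision, and the endpoint $s(C_k)$ need not be an inner vertex of a link parallel to the one containing $s(C_j)$; and~\ref{R2} does not apply because the link $s(C_j)\rightarrow_{C_k} s(C_k)$ contains the backedge of $C_k$ and is therefore not a tree-edge-only link (besides, the intermediate subdivision is not modular). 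In fact the first path \emph{does} go through: the paper shows that properties~\ref{bgpathdefinition}.\ref{bgpathdefinition1} and~\ref{bgpathdefinition}.\ref{bgpathdefinition2} force $P_1 \cap S_l = \{s(C_k), s(C_j)\}$, so $P_1$ becomes a link of $S_{l+1}$ parallel to $s(C_j)\rightarrow_{C_k} s(C_k)$, and the contradiction only materializes at the \emph{second} path $P_2$: since only the type~$3b$ chain of $L_j$ starts at $s(C_j)$, both endpoints of $P_2$ avoid $s(C_j)$ and hence must be inner vertices of the two parallel links, which violates property~\ref{bgpathdefinition}.\ref{bgpathdefinition3} because $|V_{real}(S_{l+1})|\geq 4$. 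You correctly identify that one must exclude \emph{every} decomposition, not just the canonical one, but the mechanism you propose for doing so does not work; the two-step argument above is the missing idea.
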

\begin{proof}
Let $L_j$ be good and let $y$ be the last vertex of the minimal chain in $L_j$, thus $y \in V(C_k)$. We assume at first that $s(C_j)$ is a proper ancestor of $t(C_k)$ (see Figure~\ref{fig:CaterpillarGood}). Then the path $P = C_j \cup (t(C_j) \rightarrow_T y)$ fulfills properties~\ref{bgpathdefinition}.\ref{bgpathdefinition1} and \ref{bgpathdefinition}.\ref{bgpathdefinition2} and is a \BG-path for $S_l$ with Lemma~\ref{restriction}. Adding $P$ preserves $S_l$ to be upwards-closed but not modular. The restriction~\ref{R2} is also preserved, as $t(C_k)$ is real and, for $S_l = S_3$, $C_k$ must be either $C_1$ or $C_2$, implying that $s(P)$ becomes an inner real vertex of $C_0$. Successively, for each chain $C_i$ of the $t-1$ chains in $L_i \setminus \{C_j\}$, we now add $C_i \setminus P$, which is a \BG-path yielding an upwards-closed subdivision for analogue reasons.

Now assume that $s(C_j)$ is a descendant of $t(C_k)$ (see Figure~\ref{fig:CaterpillarBad}). Then $s(C_j) \in V(C_k)$ and since $L_j$ is good, there is a real vertex $a$ strictly between $s(C_j)$ and $s(C_k)$ in $C_k$. We first show that $t(C_k) \rightarrow_T s(C_k)$ contains an inner real vertex as well. Assume the contrary. Then $C_k$ must be of type~$1$ and has been added before, contradicting restriction~\ref{R2} unless $S_l = S_3$. But $S_l$ must be different from $S_3$, since $a$ exists, and it follows that $t(C_k) \rightarrow_T s(C_k)$ contains an inner real vertex $b$. Let $C_h$ be the parent of $C_j$. Then $(C_j \cup C_h) \setminus ((t(C_j) \rightarrow_T y) \setminus \{t(C_j)\})$ is a \BG-path due to the real vertices $a$ and $b$ and we add it, although it neither preserves $S_{l+1}$ to be upwards-closed nor modular. We next add $t(C_j) \rightarrow_T y$, which restores upwards-closedness. The resulting subdivisions $S_{l+1}$ and $S_{l+2}$ both satisfy~\ref{R2}, as $b$ is real in $S_{l+1}$ and $S_{l+2}$ and $y$ is real in $S_{l+2}$. We proceed with adding successively paths, namely for each chain $C_i$ of the $t-2$ remaining chains in $L_i \setminus \{C_j,C_h\}$ the path $C_i \setminus (t(C_j) \rightarrow_T y)$. With the same line of argument, these paths obtain upwards-closed subdivisions $S_{l+3},\ldots,S_{l+t}$, each of which satisfies~\ref{R2}.

In both cases, $S_{l+t}$ is modular, since $L_j$ is a list of chains. Moreover, the $t$ chosen \BG-paths preserve~\ref{R1}, as the chains in $L_j$ are of types $2b$ and $3b$ only, $t > 1$ holds and $S_{l+t}$ is upwards-closed. All paths can be computed in time linearly dependent on the total number of edges in $L_j$.

For the only if part, let $P_1$ and $P_2$ be the first two \BG-paths in a decomposition of the chains in $L_j$; these exist, since $t > 1$ holds in every caterpillar. Let $L_j$ be bad, as otherwise the claim follows. Then $s(C_j) \in V(C_k)$. We show that $L_j$ cannot be bad, as $S_l$ contains a real vertex in $C_k$ strictly between $s(C_j)$ and $s(C_k)$. Because of properties~\ref{bgpathdefinition}.\ref{bgpathdefinition1} and \ref{bgpathdefinition}.\ref{bgpathdefinition2}, $P_1 \cap S_l = \{s(C_k),s(C_j)\}$ must hold and $P_1$ is a link of $S_{l+1}$ being parallel to $s(C_j) \rightarrow_{C_k} s(C_k)$. Since only the chain of type~$3b$ in $L_j$ starts at $s(C_j)$, both end vertices of $P_2$ must be different from $s(C_j)$. Then, due to properties~\ref{bgpathdefinition}.\ref{bgpathdefinition1} and~\ref{bgpathdefinition}.\ref{bgpathdefinition2}, $P_2$ joins inner vertices of the parallel links $P_1$ and $s(C_j) \rightarrow_{C_k} s(C_k)$ in $S_{l+1}$, contradicting property~\ref{bgpathdefinition}.\ref{bgpathdefinition3}, as $|V_{real}(S_{l+1})| \geq 4$.
\end{proof}

\begin{lemma}\label{3a3b}
Let $C_i$ be a chain of type~$3$ such that $s(C_i) \in V(S^R_l)$, $C_i \not \subseteq S^R_l$ and $C_i$ is minimal among the chains of type~$3$ in its segment $H$. Let $D_1 > \ldots > D_k$ be all ancestors of $C_i$ in $H$ with $D_1 = C_i$ and $D_k$ being the minimal chain in $H$. Then all chains $D_k,\ldots,D_1$ can be successively added as \BG-paths preserving~\ref{R1} and~\ref{R2} (possibly being part of caterpillars), unless one of the following exceptions holds:

\begin{enumerate}
	\item $C_i$ is of type~$3a$, $k=2$, $D_k$ is of type~$1$, $s(C_i)$ is an inner vertex of $t(D_k) \rightarrow_T s(D_k)$ and there is no inner real vertex in $t(D_k) \rightarrow_T s(D_k)$ (Figure~\ref{fig:3a3bOne}),\label{3a3bOne}
	\item $C_i$ is of type~$3b$, $D_k$ is of type~$2b$ and $L_i = \{D_1,\ldots,D_k\}$ with $L_i$ being bad (Figure~\ref{fig:3a3bTwo}),\label{3a3bTwo}
	\item $C_i$ is of type~$3b$, $L_i = \{D_1,\ldots,D_{k-1}\}$, $D_k$ is of type~$1$, $s(C_i)$ is an inner vertex of $t(D_k) \rightarrow_T s(D_k)$ and there is no inner real vertex in $t(D_k) \rightarrow_T s(D_k)$ (Figure~\ref{fig:3a3bThree}).\label{3a3bThree}
\end{enumerate}
\end{lemma}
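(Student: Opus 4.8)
The plan is to insert the chains in the order $D_k,D_{k-1},\ldots,D_1=C_i$, i.e.\ from the bottom of the ancestor path upward, so that whenever a chain is added its parent is already present, as required for upwards-closedness. The argument is a case distinction on the type of the bottom chain $D_k$ and of $C_i$, using Lemma~\ref{structural} for single chains of type~$1$ or~$3a$ and Lemma~\ref{AddCaterpillar} for chains lying inside a caterpillar (recall Proposition~\ref{caterpillar3b} and Lemma~\ref{caterpillarPartition}). Since $D_k$ is the minimal chain of the segment $H$, its parent chain lies wholly in $S^R_l$, so $D_k$ is ``ready''. First note $D_k$ cannot be of type~$2a$ (a lone backedge has no children and hence cannot be a proper ancestor of $C_i$), and it is of type~$3$ only when $k=1$ and $D_k=C_i$; in that case $C_i$ must be of type~$3a$ (a chain of type~$3b$ carries its chains of type~$2b$ as proper ancestors in $H$, contradicting $k=1$) and is added immediately by Lemma~\ref{structural}. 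So for $k\ge2$ we may assume $D_k$ is of type~$1$ or~$2b$.

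If $D_k$ is of type~$2b$, it is the minimal chain of the unique caterpillar $L$ that contains it, and the parent of $L$ lies in $S^R_l$. When $C_i$ is the chain of type~$3b$ of $L$ we have $L=L_i=\{D_1,\ldots,D_k\}$, and Lemma~\ref{AddCaterpillar} decomposes $L_i$ into $k$ successive \BG-paths exactly when $L_i$ is good; if $L_i$ is bad we are in exception~\ref{3a3bTwo}. Otherwise $C_i$ is not the chain of type~$3b$ of $L$, and we first add $L$ via Lemma~\ref{AddCaterpillar} (checking that $L$ is good in this configuration), which brings $D_k$ into the subdivision, and then continue with $D_{k-1}$. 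If instead $D_k$ is of type~$1$, then by Lemma~\ref{structural} it is a \BG-path preserving~\ref{R1} and~\ref{R2} if and only if $t(D_k)\rightarrow_T s(D_k)$ contains an inner real vertex of $S^R_l$ --- ``only if'' because otherwise $S_{l+1}$ would acquire the tree-only link $t(D_k)\rightarrow_T s(D_k)$ parallel to the backedge-carrying link $D_k$, violating~\ref{R2}. If such a real vertex exists, we add $D_k$ and move on. If it does not, no legal order can exist: $C_i=D_1$ requires $D_2,\ldots,D_k$ before it, while the only vertex of $t(D_k)\rightarrow_T s(D_k)$ that could become real by adding some $D_j$ with $j<k$ is $s(C_i)$, which turns real only once $C_i$ itself is inserted. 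One then checks that this dead end is precisely exception~\ref{3a3bOne} when $C_i$ is of type~$3a$ (then the parent of $C_i$ is $D_2=D_k$, so $k=2$ and $s(C_i)$ lies on $t(D_k)\rightarrow_T s(D_k)$) and precisely exception~\ref{3a3bThree} when $C_i$ is of type~$3b$ (then the chains of type~$2b$ of $L_i$ are $D_2,\ldots,D_{k-1}$, so $L_i=\{D_1,\ldots,D_{k-1}\}$ has parent $D_k$ and again $s(C_i)$ lies on $t(D_k)\rightarrow_T s(D_k)$).

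Once $D_k$ has been added (possibly as part of a caterpillar), $D_{k-1}$ has its parent in the subdivision and the same case analysis applies to it, so the insertions march up the ancestor path until only $C_i=D_1$ remains. Being the unique chain of type~$3$ among $D_1,\ldots,D_k$, it is added last: directly by Lemma~\ref{structural} if it is of type~$3a$, or together with its caterpillar $L_i$ by Lemma~\ref{AddCaterpillar} if it is of type~$3b$ and $L_i$ is good. Since every inserted path is either a single chain of type~$1$, $2a$ or~$3a$ or a caterpillar decomposition, \ref{R1} and~\ref{R2} are maintained throughout.

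The step I expect to be the real work is the clause ``one then checks'' above. Using the geometry of the segment $H$ --- how $D_{k-1},\ldots,D_2$ hang off $D_k$ and where $s(C_i)$ sits inside $t(D_2)\rightarrow_T s(D_2)$ --- together with restriction~\ref{R2}, one must show that whenever $D_k$ is of type~$1$ and we are not in exception~\ref{3a3bOne} or~\ref{3a3bThree} the path $t(D_k)\rightarrow_T s(D_k)$ really does carry an inner real vertex of $S^R_l$, and symmetrically that the caterpillars encountered along the way are good unless $C_i$ is the chain of type~$3b$ of $L_i$ with $L_i$ bad. Keeping precise track of which vertices are real and which links are parallel after each insertion is what inflates the proof; the case distinction above is only the scaffold.
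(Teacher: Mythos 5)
Your scaffold is the same as the paper's: insert $D_k,\ldots,D_1$ bottom-up, split on the types of $D_k$ and $C_i$, use Lemma~\ref{structural} for single type-$1$/$3a$ chains and Lemma~\ref{AddCaterpillar} for caterpillars, and identify the dead ends with the three exceptions. But the part you defer as ``one then checks'' is not bookkeeping; it is the actual content of the proof, namely the structural facts that make the exception list exhaustive. Concretely, the paper first shows that \emph{every} type-$2b$ chain among $D_2,\ldots,D_k$ must belong to $L_i$: otherwise the type-$3b$ head $C_j$ of the foreign caterpillar would be a type-$3$ chain in $H$ with $C_j<C_i$ and $C_j\not\subseteq S_l$, contradicting the minimality of $C_i$. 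So your sub-case ``$D_k$ is of type~$2b$ but $C_i$ is not the type-$3b$ chain of its caterpillar $L$'' is vacuous --- and it needs to be, because your plan to ``add $L$ first, checking that $L$ is good in this configuration'' is unsupported: you give no argument that $L$ is good, and if this case could occur with $L$ bad you would need a fourth exception.

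Second, you assert without proof that when $D_k$ is of type~$1$ the parent of $C_i$ (resp.\ of the minimal chain of $L_i$) is already $D_k$, i.e.\ $k=2$ (resp.\ $k=t+1$). The paper derives this from upwards-closedness: $s(D_2)$ is a proper ancestor of $s(C_i)$ and $s(C_i)$ is a proper ancestor of $t(D_2)$ (else $\{s(D_2),t(D_2)\}$ would be a separation pair by minimality of $C_i$), so $s(C_i)$ is an inner vertex of $t(D_2)\rightarrow_T s(D_2)$; if a further type-$1$ ancestor $D_3$ existed it would contain $t(D_2)\rightarrow_T s(D_2)$ and hence the tree edge from $s(C_i)$ to its parent, which already lies in $S_l$ because $s(C_i)\in V(S^R_l)$ and $S^R_l$ is upwards-closed --- contradicting $D_3\not\subseteq S^R_l$. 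Without this chain of deductions you cannot conclude that the failing configurations are precisely exceptions~\ref{3a3b}.\ref{3a3bOne}--\ref{3a3b}.\ref{3a3bThree} (e.g.\ you cannot exclude two stacked type-$1$ ancestors, nor justify that $s(C_i)$ sits on $t(D_k)\rightarrow_T s(D_k)$). These omissions are the gap; the surrounding case analysis is correct and matches the paper.
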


\begin{figure}
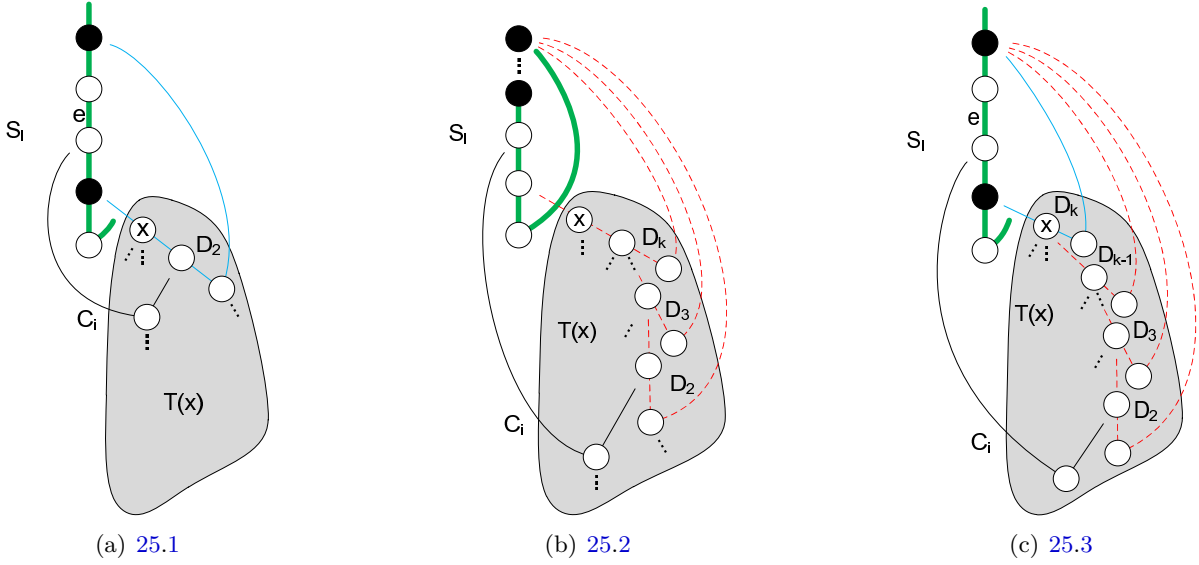

	\centering
	\subfigure[\ref{3a3b}.\ref{3a3bOne}]{
		\includegraphics[scale=0.68]{./figures/3a3bOne}
		\label{fig:3a3bOne}
	}
	\hfill
	\subfigure[\ref{3a3b}.\ref{3a3bTwo}]{
		\includegraphics[scale=0.68]{./figures/3a3bTwo}
		\label{fig:3a3bTwo}
	}
	\hfill
	\subfigure[\ref{3a3b}.\ref{3a3bThree}]{
		\includegraphics[scale=0.68]{./figures/3a3bThree}
		\label{fig:3a3bThree}
	}
	\caption{The three exceptions of Lemma~\ref{3a3b}. The black vertices in~\ref{3a3b}.\ref{3a3bOne} and~\ref{3a3b}.\ref{3a3bThree} can also be non-real.}
	\label{fig:3a3b}
\end{figure}

\begin{proof}
Let $D \in \{D_2,\ldots,D_k\}$. Then $D$ is not of type~$3$ by assumption and not of type~$2a$, as chains of that type cannot have children. Assume that $D$ is of type~$2b$ and let $L_j$ be the caterpillar containing $D$ due to Lemma~\ref{caterpillarPartition}. If $C_j \neq C_i$, $C_j < C_i$ holds, as otherwise $C_j$ would not be the chain of type~$3b$ in $L_j$. But then $C_j$ contradicts the minimality of $C_i$, since $C_j$ is not contained in $S_l$ and of type~$3b$. We conclude that every chain in $\{D_2,\ldots,D_k\}$ of type~$2b$ is contained in $L_i$ and forces $C_i$ to be of type~$3b$. This is used in the following case distinction.

Let $C_i$ be of type~$3a$. If $k=1$, $C_i$ is a \BG-path for $S_l$ with Lemma~\ref{structural} and the claim follows. Otherwise, $k > 1$ and all chains in $\{D_2,\ldots,D_k\}$ are of type~$1$. Then $s(D_2)$ is a proper ancestor of $s(C_i)$, since $D_2 < C_i$ and $C_i$ is not of type~$2$. Moreover, $s(C_i)$ is a proper ancestor of $t(D_2)$, because otherwise $H \cap S_l = \{s(D_2),t(D_2)\}$ is a separation pair of $G$ due to the minimality of $C_i$. It follows that $s(C_i)$ is an inner vertex of $t(D_2) \rightarrow_T s(D_2)$. If $k > 2$, $D_3$ must contain $t(D_2) \rightarrow_T s(D_2)$, because $D_2$ is of type~$1$ and a child of $D_3$. Therefore, the edge $e$ joining $s(C_i)$ with the parent of $s(C_i)$ in $T$ is contained in $D_3$. But since $S_l$ is upwards-closed, $e$ is also contained in $S_l$, contradicting that $D_3 \not \subseteq S_l$. Thus, $k=2$. If $t(D_2) \rightarrow_T s(D_2)$ contains an inner real vertex, $D_2$ and $C_i$ can be subsequently added as \BG-paths with Lemma~\ref{structural}, otherwise~\ref{3a3b}.\ref{3a3bOne} is satisfied.

Let $C_i$ be of type~$3b$. Then all chains in $\{D_2,\ldots,D_k\}$ that are of type~$2b$ must be contained in $L_i$. Since every caterpillar $L_j$ contains the parent of the chain $C_j$ and since $S_l$ contains no chain in $L_i$ due to~\ref{R1}, $k > 1$ holds and $D_2$ is of type~$2b$ with $D_2 \in L_i$. Let $D_t$ with $1 < t \leq k$ be the minimal chain in $L_i$. If $t=k$ and $L_i$ is good, all chains in $L_i$ can be decomposed to \BG-paths according to Lemma~\ref{AddCaterpillar}. If $t=k$ and $L_i$ is bad, \ref{3a3b}.\ref{3a3bTwo} is satisfied. Only the case $k > t$ remains. Then $D_{t+1}$ is of type~$1$ and, using the same arguments as in the case for type~$3a$, $s(C_i)$ is an inner vertex of $t(D_k) \rightarrow_T s(D_k)$ and $k = t+1$. If $t(D_k) \rightarrow_T s(D_k)$ contains an inner real vertex, Lemmas~\ref{structural} and~\ref{AddCaterpillar} imply that $D_k$ and $L_i$ can be iteratively added as set of successive \BG-paths, preserving~\ref{R1} and~\ref{R2}. Otherwise,~\ref{3a3b}.\ref{3a3bThree} is satisfied.
\end{proof}

We extend Lemma~\ref{3a3b} to non-minimal chains of type~$3$.

\begin{figure}
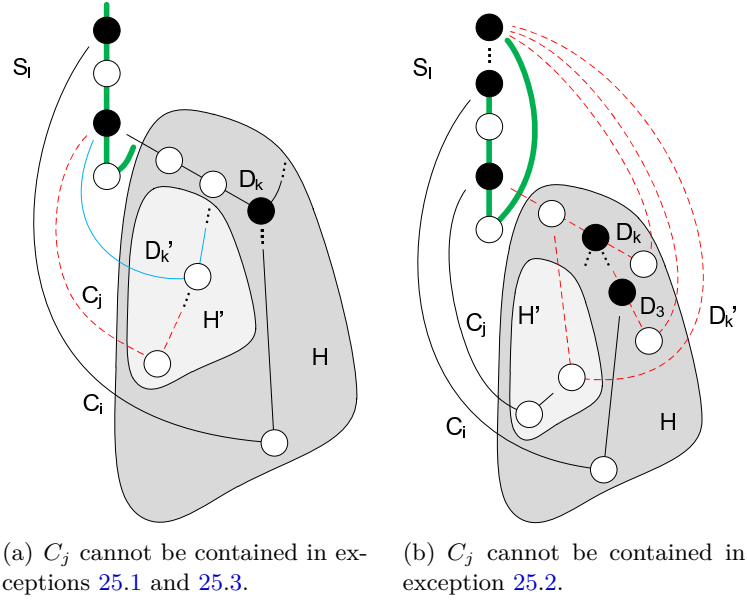

	\centering
	\subfigure[$C_j$ cannot be contained in exceptions~\ref{3a3b}.\ref{3a3bOne} and~\ref{3a3b}.\ref{3a3bThree}.]{
		\includegraphics[scale=0.7]{./figures/All3a3bOneThree}
		\label{fig:All3a3bOneThree}
	}
	\hspace{0.2cm}
	\subfigure[$C_j$ cannot be contained in exception~\ref{3a3b}.\ref{3a3bTwo}.]{
		\includegraphics[scale=0.7]{./figures/All3a3bTwo}
		\label{fig:All3a3bTwo}
	}
	\caption{After $C_i$ was added, the next minimal chain $C_j$ is in no exception.}
	\label{fig:all3a3b}
\end{figure}

\begin{lemma}\label{all3a3b}
Let the preconditions of Lemma~\ref{3a3b} hold. If $C_i$ is not contained in one of the exceptions~\ref{3a3b}.\ref{3a3bOne}-\ref{3a3b}.\ref{3a3bThree} (as $C_i$), the chains of type~$3$ in $H$ that start in $S^R_l$ and their ancestors in $H$ can be successively added as \BG-paths (in reversed order), preserving~\ref{R1} and~\ref{R2}.
\end{lemma}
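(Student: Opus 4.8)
\textbf{Proof plan for Lemma~\ref{all3a3b}.}
The plan is to reduce the statement about all type-$3$ chains of $H$ to repeated applications of Lemma~\ref{3a3b}, peeling off one minimal type-$3$ chain at a time and arguing that after each peeling step the hypotheses of Lemma~\ref{3a3b} are satisfied again for the new minimal type-$3$ chain, and moreover that none of the three exceptions can occur after the first one. Concretely, let $C_i$ be the minimal type-$3$ chain of $H$ with $s(C_i)\in V(S^R_l)$; by assumption $C_i$ is in no exception, so Lemma~\ref{3a3b} lets us add $D_k,\ldots,D_1=C_i$ (its ancestor path in $H$) as successive \BG-paths, obtaining some $S^R_{l'}$ that is again upwards-closed, modular, and satisfies~\ref{R2}. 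Now $H$ breaks into one or more segments of $S^R_{l'}$ containing the not-yet-added type-$3$ chains; for each such segment, its new minimal type-$3$ chain $C_j$ still starts in $S^R_{l'}$ (this is where the ancestors added along with $C_i$ matter), so I can invoke Lemma~\ref{3a3b} again for $C_j$ in its segment. Induction on the number of remaining type-$3$ chains then finishes the argument, provided the exceptions never recur.

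\textbf{Ruling out the exceptions after the first step.} This is the heart of the proof and the step I expect to be the main obstacle. Each exception of Lemma~\ref{3a3b} is a very rigid local configuration: exceptions~\ref{3a3b}.\ref{3a3bOne} and~\ref{3a3b}.\ref{3a3bThree} require that the minimal chain $D_k$ of the segment be of type~$1$ with $s(C_j)$ an inner vertex of $t(D_k)\rightarrow_T s(D_k)$ and \emph{no} inner real vertex on $t(D_k)\rightarrow_T s(D_k)$, while~\ref{3a3b}.\ref{3a3bTwo} requires the caterpillar $L_j$ to be bad, which in turn needs $s(C_k)\rightarrow_{C_k}s(C_j)$ to contain no inner real vertex. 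The key observation (illustrated in Figure~\ref{fig:all3a3b}) is that once $C_i$ and its ancestors have been added, the chain $C_i$ itself — or, more precisely, the attachment point of $C_i$ on the dependent path, which became an inner real vertex of the relevant link of $S^R_{l'}$ — sits on the very link that would have to be real-vertex-free for a new exception to apply. So I would argue: if the new minimal type-$3$ chain $C_j$ in a segment of $S^R_{l'}$ were in exception~\ref{3a3b}.\ref{3a3bOne} or~\ref{3a3b}.\ref{3a3bThree}, then $t(D_k)\rightarrow_T s(D_k)$ (with $D_k$ now the minimal chain of $C_j$'s segment) would have to avoid all real vertices; but $s(C_i)$, or the real vertex created on $C_i$'s link when $C_i$ was added as a \BG-path, lies strictly between $t(D_k)$ and $s(D_k)$ on that path, giving the required inner real vertex and a contradiction. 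The case of exception~\ref{3a3b}.\ref{3a3bTwo} is analogous: the part of $C_k$ between $s(C_k)$ and $s(C_j)$ now contains the attachment point of the previously added $C_i$, which is real in $S^R_{l'}$, so $L_j$ is good rather than bad.

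\textbf{Bookkeeping and the order of additions.} Two further points need care. First, the phrase ``in reversed order'' in the statement refers to adding, within each segment, the ancestor chains $D_k,\ldots,D_2$ before the type-$3$ chain $D_1=C_i$ — exactly the order Lemma~\ref{3a3b} delivers — and processing the type-$3$ chains of $H$ from the $<$-smallest outward, so that every chain's proper ancestors in $H$ are already present when it is added; this is consistent with restriction~\ref{R1} and with Theorem~\ref{correctness}. Second, I must check that after adding $C_i$ and its ancestors the subdivision is genuinely upwards-closed and modular (so that Lemma~\ref{3a3b}'s precondition ``$C_j\not\subseteq S^R_l$, $s(C_j)\in V(S^R_l)$, $C_j$ minimal of type~$3$ in its segment'' makes sense for the next round) — but this is precisely the conclusion of Lemma~\ref{3a3b} together with Lemma~\ref{AddCaterpillar}, since those lemmas hand back an $S^R$-subdivision. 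The induction is on $|H\cap \Type|$ minus the chains already in $S^R_l$, decreasing by at least one at each round, so the process terminates with all type-$3$ chains of $H$ (and their $H$-ancestors) added while~\ref{R1} and~\ref{R2} are preserved throughout.
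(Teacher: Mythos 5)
Your overall skeleton matches the paper's proof: add $C_i$ together with its ancestors $D_k,\ldots,D_2$ via Lemma~\ref{3a3b}, observe that $H$ splits into new segments, and iterate on the minimal type-$3$ chain of each new segment after showing that none of the exceptions~\ref{3a3b}.\ref{3a3bOne}--\ref{3a3b}.\ref{3a3bThree} can recur. The problem is in how you rule out exceptions~\ref{3a3b}.\ref{3a3bOne} and~\ref{3a3b}.\ref{3a3bThree}. You attack the clause ``no inner real vertex in $t(D'_k) \rightarrow_T s(D'_k)$'' by asserting that $s(C_i)$, or an attachment point made real when $C_i$ was added, ``lies strictly between $t(D'_k)$ and $s(D'_k)$.'' That assertion is not justified and is false in general: $D'_k$ is the minimal chain of the \emph{new} segment $H'$, its parent is one of the freshly added chains $C_i,D_2,\ldots,D_k$, and $t(D'_k) \rightarrow_T s(D'_k)$ can be a short subpath of that parent containing none of the newly created real vertices in its interior. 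The clause of the exception that actually fails is a different one: the exception would require $s(C_j)$ to be an \emph{inner} vertex of $t(D'_k) \rightarrow_T s(D'_k)$, but $s(C_j)$ lies in $V(S^R_l)$ while every inner vertex of that path is an inner vertex of the original segment $H$ (since $D'_k$ is a proper descendant of $D_k$ and of type~$1$), hence not in $S^R_l$. This is the contradiction the paper uses, and without it your induction step does not go through for these two exceptions.

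Your treatment of exception~\ref{3a3b}.\ref{3a3bTwo} is closer in spirit to the paper's but is also incomplete: to exhibit the inner real vertex that makes the caterpillar good, one first has to pin down the parent $D$ of the bad caterpillar $L_j$. The paper does this by noting that $L_j$ bad forces $D$ to end in $S^R_l$ at $s(C_j)$, and that $D_k$ is the only chain of $H$ ending in $S^R_l$, so $D = D_k$; the inner real vertex on $s(D) \rightarrow_{D} s(C_j)$ is then $s(D_{k-1})$, the start vertex of the next added ancestor, not an attachment point of $C_i$ itself. Your bookkeeping remarks (order of addition, upwards-closedness and modularity after each batch) are fine and consistent with the paper.
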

\begin{proof}
Using Lemma~\ref{3a3b}, we add the chains $C_i,D_2,\ldots,D_k$ in $H$ as \BG-paths. This partitions $H$ into new segments; let $H' \subseteq H \setminus \{C_i,D_2,\ldots,D_k\}$ be such a new segment. If $H'$ does not contain chains of type~$3$ that start in $S_l$, the claim follows for such chains in $H'$. Otherwise, let $C_j$ be the minimal chain of type~$3$ in $H'$ that starts in $S_l$ and let $C_j > D'_2 > \ldots > D'_k$ be its ancestors in $H'$. We show that $C_j$ is not contained in one of the exceptions~\ref{3a3b}.\ref{3a3bOne}-\ref{3a3b}.\ref{3a3bThree} and can therefore be added as \BG-path with Lemma~\ref{3a3b}, along with its proper ancestors in $H'$. First, assume to the contrary that $C_j$ is contained in exception~\ref{3a3b}.\ref{3a3bOne} or~\ref{3a3b}.\ref{3a3bThree} (see Figure~\ref{fig:All3a3bOneThree}). Because $D'_k$ is a proper descendant of $D_k$ and $D'_k$ is of type~$1$, $s(C_j) \in V(S_l)$ cannot be an inner vertex of $t(D'_k) \rightarrow_T s(D'_k)$, contradicting the assumption. Now assume to the contrary that $C_j$ is contained in exception~\ref{3a3b}.\ref{3a3bTwo} (see Figure~\ref{fig:All3a3bTwo}). Then $C_j$ is of type~$3b$ and part of a bad caterpillar $L_j$, whose parent $D$ is not contained in $H'$. Because $L_j$ contains only chains in $H'$, $D$ must be a descendant of $D_k$ and is therefore contained in $H \setminus H'$. Since $L_j$ is bad, $s(C_j)$ is contained in $S_l \cap D$ and it follows with $s(C_j) \neq s(D)$ that $D$ must end in $S_l$ at the vertex $s(C_j)$. As $D_k$ is the only chain in $H$ that ends in $S_l$, $D=D_k$ must hold. But this contradicts $L_j$ being bad, as $D$ contains the inner real vertex $s(D_{k-1})$. Thus, $C_j$ and its ancestors in $H'$ can be added, partitioning $H'$ into smaller segments. Iterating the same argument for these segments establishes the claim for all chains of type~$3$ in $H$ that start in $S_l$.
\end{proof}

The next lemma shows that the only chains of type~$1$ that cannot be added are either backedges or are contained as $D_k$ in exceptions~\ref{3a3b}.\ref{3a3bOne}-\ref{3a3b}.\ref{3a3bThree}.

\begin{lemma}\label{backedgeOr}
Let $C_j$ be a chain in $S^R_l$ and let $D_k$ be a child of $C_j$ that is of type~$1$ and not in $S^R_l$. If $D_k$ is not a backedge, there is a chain of type~$3$ in the segment containing $D_k$ that starts in $t(D_k) \rightarrow_T s(D_k) \subset C_j$. If $D_k$ is neither a backedge nor contained (as $D_k$) in the exceptions~\ref{3a3b}.\ref{3a3bOne} and~\ref{3a3b}.\ref{3a3bThree}, $D_k$ can be added as \BG-path.
\end{lemma}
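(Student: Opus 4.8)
The plan is to prove the two assertions in turn, obtaining the second quickly from the first together with Lemma~\ref{3a3b} and Lemma~\ref{structural}.

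For the first assertion I would reuse the separation-pair argument from the proof of Lemma~\ref{caterpillarPartition}. Since $D_k$ is of type~$1$ with parent $C_j$, the whole tree path $t(D_k) \rightarrow_T s(D_k)$ lies in $C_j$; since $D_k$ is not a backedge, its last edge in traversal order is a tree edge $(z, t(D_k))$ with $z$ an inner vertex of $D_k$, and the subtree $T(z)$ contains every inner vertex of $D_k$. Because $D_k \not\subseteq S^R_l$ and $S^R_l$ is modular and upwards-closed, no inner vertex of $D_k$ lies in $S^R_l$, so they all lie in the segment $H$ of $D_k$; hence any backedge whose head is in $T(z)$ and whose tail is an inner vertex of $t(D_k) \rightarrow_T s(D_k) \subseteq C_j$ belongs to $H$ and starts on that path. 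I would then produce such a backedge using $3$-connectivity: if every backedge with exactly one endpoint in $T(z)$ had its outside endpoint in $\{s(D_k), t(D_k)\}$, then deleting $\{s(D_k), t(D_k)\}$ would cut $T(z)$ off from the rest of $G$, so $\{s(D_k), t(D_k)\}$ would be a separation pair. Letting $C_m$ be the $<$-minimal chain witnessing a suitable backedge, the reasoning of Lemma~\ref{caterpillarPartition} gives $C_m > D_k$, so the tail $s(C_m)$ of its backedge is an inner vertex of $t(D_k) \rightarrow_T s(D_k)$; this forbids $C_m$ from being of type~$2$ (its start would then coincide with that of its parent), and $C_m$ cannot be of type~$1$ either, since $t(C_m) \rightarrow_T s(C_m)$ would then contain edges of both $D_k$ and $C_j$ and could not lie inside a single parent chain. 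Thus $C_m$ is of type~$3$, lies in $H$, and starts on $t(D_k) \rightarrow_T s(D_k)$.

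For the second assertion, assume in addition that $D_k$ is not a backedge and is not contained (as $D_k$) in exception~\ref{3a3b}.\ref{3a3bOne} or~\ref{3a3b}.\ref{3a3bThree}. The first assertion supplies a chain of type~$3$ in $H$ starting on $t(D_k) \rightarrow_T s(D_k)$; let $C_i$ be the $<$-minimal chain of type~$3$ in $H$. Since $s(C_i)$ lies on $C_j \subseteq S^R_l$ we have $s(C_i) \in V(S^R_l)$, and $C_i \not\subseteq S^R_l$, so Lemma~\ref{3a3b} applies to $C_i$ with $D_k$ in the role of the minimal chain of $H$. Exception~\ref{3a3b}.\ref{3a3bTwo} is impossible because $D_k$ is of type~$1$, not $2b$, and exceptions~\ref{3a3b}.\ref{3a3bOne} and~\ref{3a3b}.\ref{3a3bThree} are excluded by hypothesis; hence none of the three exceptions holds, and Lemma~\ref{3a3b} lets us add the chains of $H$ down to $D_k$ as successive \BG-paths, with $D_k$ added first. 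Inspecting the corresponding cases in the proof of Lemma~\ref{3a3b}, the absence of these exceptions forces $t(D_k) \rightarrow_T s(D_k)$ to already contain an inner real vertex of $S^R_l$, so $D_k$ is a \BG-path for $S^R_l$ by Lemma~\ref{structural}.

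I expect the main obstacle to be the $3$-connectivity step in the first assertion: one must account not only for backedges entering $T(z)$ but also for backedges leaving $T(z)$ to ancestors of $z$, and one must be sure the $<$-minimal chain $C_m$ that is extracted is genuinely a chain of the segment $H$ (i.e.\ not already in $S^R_l$) and genuinely starts between $t(D_k)$ and $s(D_k)$ rather than at a proper ancestor of $s(D_k)$. I would settle both points with the same invariants of the chain decomposition that are already exploited in the proof of Lemma~\ref{caterpillarPartition}: vertices are processed in DFI order, a non-backedge chain introduces a fresh head vertex, and $S^R_l$ is upwards-closed, so every chain that already ends inside $T(z)$ at the time $C_m$ is created is forced to start at $s(D_k)$. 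A minor additional point is to confirm that the minimal chain of $H$ in the second assertion is indeed $D_k$, which again follows from the structure established in the proofs of Lemma~\ref{caterpillarPartition} and Lemma~\ref{3a3b}.
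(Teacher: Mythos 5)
Your proposal follows essentially the same route as the paper's proof: it establishes the first assertion by the same separation-pair argument on the subtree below the last-but-one vertex of $D_k$ (extracting the $<$-minimal chain whose backedge enters that subtree from an inner vertex of $t(D_k) \rightarrow_T s(D_k)$ and excluding types~$1$ and~$2$ exactly as in the proof of Lemma~\ref{caterpillarPartition}), and it derives the second assertion by applying Lemma~\ref{3a3b} to the minimal type-$3$ chain of the segment, noting that exception~\ref{3a3b}.\ref{3a3bTwo} is ruled out because $D_k$ is of type~$1$ and the other two exceptions are excluded by hypothesis. The subtleties you flag (that $D_k$ is the minimal chain of its segment, and that tails of backedges entering $T(z)$ must lie on $t(D_k) \rightarrow_T s(D_k)$) are resolved by the DFI-order invariants you cite, and your write-up is if anything more explicit than the paper's.
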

\begin{proof}
Let $H$ be the segment of $D_k$ and assume that $D_k$ is not a backedge. We first show that $H$ contains a chain of type~$3$ that starts in $t(D_k) \rightarrow_T s(D_k)$. We can assume that $D_k$ is not contained in the exceptions~\ref{3a3b}.\ref{3a3bOne} and~\ref{3a3b}.\ref{3a3bThree}, as then $H$ would contain such a chain by definition. Let $x$ be the last but one vertex in $D_k$. Since $G$ is $3$-connected, there is a minimal chain $C_i$ entering $T(x)$ such that $s(C_i)$ is an inner vertex of $t(D_k) \rightarrow_T s(D_k)$, as otherwise the inner vertices of $D_k$ would be separated by $\{s(D_k),t(D_k)\}$. By definition of the chain decomposition, $C_i$ must be of type~$3a$ or~$3b$. Because $D_k$ is not contained in exceptions~\ref{3a3b}.\ref{3a3bOne} and~\ref{3a3b}.\ref{3a3bThree} and $H$ cannot contain exception~\ref{3a3b}.\ref{3a3bTwo}, Lemma~\ref{3a3b} can be applied on $C_i$, obtaining the last claim.
\end{proof}

Here we give the proof of the key theorem (\textbf{Theorem~15}) of the paper.

\begin{theorem}[aka \textbf{Theorem~15}]
For a subdivision $S^R_l$, let $C_i$ be a chain such that $\Children(C_j) = \Type(C_j) = \emptyset$ holds for every proper ancestor $C_j$ of $C_i$. Then all chains in $\Children(C_i) \cup \Type(C_i)$ and their proper ancestors that are not already contained in $S^R_l$ can be successively added as \BG-paths (possibly being part of caterpillars) such that~\ref{R1} and~\ref{R2} is preserved. Moreover, the chains in $\Type(C_i)$ that are contained in segments in which the minimal chain is not contained in $\Children(C_i)$ can be added at any point in time in arbitrary order (together with their proper ancestors that are not contained in $S^R_l$).
\end{theorem}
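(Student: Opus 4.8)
The plan is to prove Theorem~\ref{correctness} by induction along the chain tree $U$, combining the structural lemmas of the appendix (Lemmas~\ref{structural}, \ref{AddCaterpillar}, \ref{3a3b}, \ref{all3a3b} and~\ref{backedgeOr}) into one coherent argument that handles the three chain types separately. The hypothesis on $C_i$ — that $\Children(C_j)=\Type(C_j)=\emptyset$ for every proper ancestor $C_j$ — means that $S^R_l$ already contains every chain of type~$1$ or~$2$ whose parent is a proper ancestor of $C_i$, and every chain of type~$3$ starting strictly above $C_i$; in particular $C_i$ itself is in $S^R_l$, and for every chain $D$ we want to add, all of $D$'s proper ancestors down to $C_i$ are potential candidates that also satisfy this precondition. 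The first step is to record these consequences precisely, so that each segment $H$ meeting $C_i$ has all its attachment points on $C_i$ and its minimal chain is either in $\Children(C_i)$ or of type~$3$ in $\Type(C_i)$.

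Second, I would dispose of the ``moreover'' clause first, since it is the cleanest: if $C_j\in\Type(C_i)$ lies in a segment $H$ whose minimal chain is itself of type~$3$ (not in $\Children(C_i)$), then $C_j$ is a minimal type-$3$ chain of its segment (or reduces to one after adding smaller type-$3$ chains of $H$), $s(C_j)\in V(S^R_l)$ already, and none of the exceptions~\ref{3a3b}.\ref{3a3bOne}--\ref{3a3b}.\ref{3a3bThree} can occur because those exceptions require the minimal chain of the segment to be of type~$1$ or~$2b$. Hence Lemmas~\ref{3a3b} and~\ref{all3a3b} apply verbatim and add all such chains together with their ancestors in $H$, in reversed order, preserving~\ref{R1} and~\ref{R2}; crucially the argument is local to $H$ and does not interact with anything outside, which is exactly why these chains may be inserted ``at any point in time.''

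Third, for the main clause I would iterate over the segments $H$ whose minimal chain $D$ does lie in $\Children(C_i)$. By restriction~\ref{R2} the dependent path of $H$ in $C_i$ must contain an inner real vertex, and this is where the interval/overlap-graph machinery from Section~\ref{algorithm} feeds back in: processing the segments in an order dictated by connectivity of the overlap graph guarantees that when $H$ is reached its dependent path already carries an inner real vertex in $S^R_l$. Given that, I split on the type of $D$: if $D$ is of type~$2a$ with $t(D)$ real, add $D$ directly by Lemma~\ref{restriction}; if $D$ is of type~$1$, the inner real vertex on $t(D)\to_T s(D)$ lets Lemma~\ref{structural} (its second bullet) add $D$, and then Lemma~\ref{backedgeOr} handles the remaining type-$1$ children and Lemma~\ref{all3a3b} the type-$3$ chains of $H$; if $D$ is of type~$2b$, the type-$3b$ chain of its caterpillar cannot start in a proper ancestor of $C_i$ by the precondition, so the caterpillar is good and Lemma~\ref{AddCaterpillar} decomposes it into successive \BG-paths, after which again Lemmas~\ref{backedgeOr} and~\ref{all3a3b} finish the segment. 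Throughout, each addition preserves upwards-closedness and modularity exactly at the moments required by~\ref{R1} (caterpillars create the allowed transient non-modular subdivisions), and~\ref{R2} is kept because every chain of type~$2$ or~$3$ has an inner real vertex on its tree path and every type-$1$ chain we add has been certified to have one.

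The main obstacle I expect is the bookkeeping that ties the overlap-graph ordering to the structural lemmas: one must show that the order in which segments are released by overlapping intervals is consistent with the ancestor-before-descendant requirement inside each segment \emph{and} with the global requirement that $\Children$ and $\Type$ of proper ancestors stay empty as we descend. Concretely, the delicate point is ruling out the exceptions~\ref{3a3b}.\ref{3a3bOne}--\ref{3a3b}.\ref{3a3bThree} for the non-minimal type-$3$ chains of a segment after its minimal chain has been consumed — this is precisely what Lemma~\ref{all3a3b} does, and the argument there (that a would-be exceptional $D'_k$ is a proper descendant of the already-added $D_k$ and so cannot host an inner attachment point of the new segment) is the crux; I would make sure the induction is set up so that Lemma~\ref{all3a3b}'s hypotheses are literally available at each stage. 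Once that interface is nailed down, the rest is assembling the cited lemmas in the order above.
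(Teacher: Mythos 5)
There is a genuine gap at the heart of your main clause. You reduce everything to the claim that ``processing the segments in an order dictated by connectivity of the overlap graph guarantees that when $H$ is reached its dependent path already carries an inner real vertex.'' But the connectivity of the overlap graph (equivalently, the existence of a processing order in which every remaining segment eventually acquires an inner real vertex on its dependent path) is precisely what Theorem~\ref{correctness} asserts, not something you may import from Section~\ref{algorithm}: the paper explicitly uses the \emph{failure} of this connectivity to extract a separation pair and certify non-$3$-connectivity, so assuming it holds is circular. Your proposal correctly classifies, via Lemmas~\ref{structural}, \ref{AddCaterpillar}, \ref{3a3b}, \ref{all3a3b} and~\ref{backedgeOr}, which children can be added once an inner real vertex is available, and your treatment of the ``moreover'' clause (the exceptions of Lemma~\ref{3a3b} all require the minimal chain of the segment to be of type~$1$ or~$2b$) is sound. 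What is missing is any argument that the greedy process does not get permanently stuck with a nonempty set of blocked children.

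The paper closes exactly this gap with a contradiction argument based on Lemma~\ref{multipleconstruction}: if the set $X$ of blocked children of $C_i$ were nonempty after all addable chains have been added, then $C_i$ would contain a link $L$ of length at least two (each blocked child forces a non-real inner vertex on its dependent path), and Lemma~\ref{multipleconstruction} guarantees a \BG-path $B$ for the current subdivision starting at an inner vertex $v$ of $L$. A case analysis on the chain $C(e)$ containing the first edge of $B$ — it cannot lie in the segment of any $x \in X$ (property~\ref{bgpathdefinition}.\ref{bgpathdefinition2} would fail at the non-real vertex $v$), cannot be of type~$1$ or~$3$, and the type~$2$ case forces a type-$1$ ancestor that would have to be a blocked child after all — yields a contradiction, so $X = \emptyset$. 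This existential appeal to the $3$-connectivity of $G$ through Lemma~\ref{multipleconstruction} is the one idea your proposal does not contain and cannot be replaced by the algorithmic overlap-graph bookkeeping; you would need to add it (or an equivalent argument that every maximal blocked configuration contradicts $3$-connectivity) for the proof to go through.
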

\begin{proof}
By assumption, $C_i$ is contained in $S_l$. Let $D \not \subseteq S_l$ be a child of $C_i$. If $C_i = C_0$, $D$ must be of type~$1$. Let $C_i \neq C_0$. Then $D$ cannot be of type~$3b$, as otherwise it would be contained in $S_l$ due to~\ref{R1} and $C_i \subset S_l$. It is neither of type~$3a$, since in that case $s(D)$ is contained in a proper ancestor of $C_i$, implying $D \subset S_l$ by assumption. We conclude that $D$ is of type~$1$ or~$2$ and focus on the cases where $D$ can not be added. Let $P$ be the path on which $D$ depends on. If $D$ is of type~$1$, $P$ does not contain an inner real vertex, as otherwise $D$ can be added as \BG-path due to Lemma~\ref{structural}. With Lemma~\ref{backedgeOr}, $D$ must be either a backedge or be contained as the minimal chain in exception~\ref{3a3b}.\ref{3a3bOne} or~\ref{3a3b}.\ref{3a3bThree}. If $D$ is of type~$2a$, $s(C_i)$ is real and neither $t(D)$ nor an inner vertex in $P$ can be real, since otherwise $D$ can be added as \BG-path, preserving~\ref{R1} and~\ref{R2}. If $D$ is of type~$2b$, $D$ is the minimal chain of a caterpillar $L_a$ with parent $C_i$. According to Lemma~\ref{AddCaterpillar}, $L_a$ is bad and, thus, corresponds to exception~\ref{3a3b}.\ref{3a3bTwo}. The following is a list of the possible cases for which a child $D$ of $C_i$ is not added.

\begin{enumerate}
	\item $D$ is of type~$1$ without an inner real vertex in $P$ and either a backedge or the minimal chain in exception~\ref{3a3b}.\ref{3a3bOne} or~\ref{3a3b}.\ref{3a3bThree}\label{correctnessOne}
	\item $C_i \neq C_0$ and $D$ is of type~$2a$ without a real vertex in $P \setminus \{s(D)\}$\label{correctnessTwo}
	\item $C_i \neq C_0$ and $D$ is of type~$2b$ without an inner real vertex in $P$ ($D$ is the minimal chain in exception~\ref{3a3b}.\ref{3a3bTwo})\label{correctnessThree}
\end{enumerate}

We iteratively add all chains $D$ in $X'\ \dot{\cup}\ Y'$ that do not satisfy one of the above three cases~\ref{correctness}.\ref{correctnessOne}-\ref{correctness}.\ref{correctnessThree} for $D \in X'$ and whose segments do not contain one of the exceptions~\ref{3a3b}.\ref{3a3bOne}-\ref{3a3b}.\ref{3a3bThree} for $D \in Y'$ (the latter followed by adding the proper ancestors in the segment of $D$ according to Lemma~\ref{all3a3b}). Let $X$ be the set of remaining chains in $X'$ and let $Y$ be the set of remaining chains in $Y'$. If $X = \emptyset$, $Y = \emptyset$ holds as well, as otherwise the minimal chain in the segment containing one of the exceptions~\ref{3a3b}.\ref{3a3bOne}-\ref{3a3b}.\ref{3a3bThree} is a child of $C_i$, contradicting $X = \emptyset$. This implies the claim for $X = \emptyset$.

We prove the theorem by showing that $X = \emptyset$ must hold. Assume to the contrary that $X \neq \emptyset$ and let $S_t$ be the current subdivision (all segments will be dependent on $S_t$). Then $C_i$ must contain a link $L$ of length at least two, because the dependent path $P$ in each of the cases~\ref{correctness}.\ref{correctnessOne}-\ref{correctness}.\ref{correctnessThree} is in $C_i$ and contains a non-real vertex due to the $3$-connectivity and simpleness of $G$. According to Lemma~\ref{multipleconstruction}, $L$ contains an inner vertex $v$ on which a \BG-path $B$ starts (not necessarily being a chain and not necessarily preserving~\ref{R1} or~\ref{R2}). Let $e$ be the first edge of $B$. Then $e$ is not contained in the segment of any $x \in X$, as otherwise $B$ would not have property~\ref{bgpathdefinition}.\ref{bgpathdefinition2}, because $v$ is non-real and all start vertices of the chains in the segment of $x$ that are in $S_t$ are contained in $L$. Thus, $C(e)$ cannot be a child of $C_i$ and it follows that $s(C(e))=v$. In particular, $C(e)$ is not of type~$1$.

The segment of $e$ cannot contain a chain of type~$3$ that starts in $C_i$, as it otherwise contains a chain $x \in X$ of type~$1$ or~$2b$ due to exceptions~\ref{3a3b}.\ref{3a3bOne}-\ref{3a3b}.\ref{3a3bThree}, contradicting the previous argument. In particular, $C(e)$ is not of type~$3$ and the only remaining case is that $C(e)$ is of type~$2$.

Let $C_k$ be the maximal ancestor of $C(e)$ that is not of type~$2$. Then $s(C_k)=v$ holds by construction of the chain decomposition and $C_k$ must be contained in the segment of $C(e)$ due to~\ref{R1}, \ref{R2} and $v$ being non-real. Since the segment of $e$ cannot contain a chain of type~$3$ that starts in $C_i$, $C_k$ must be of type~$1$. But then, as $v$ is an inner vertex of $L$, $C_k$ must be a child of $C_i$, contradicting that $e$ is not contained in the segment of any $x \in X$. This is a contradiction to the existence of $B$ and it follows that $X = \emptyset$, which implies the claim.
\end{proof}
\end{appendix}

\small
\bibliographystyle{abbrv}
\bibliography{../../../../Publications/Jens}

\end{document}